\renewcommand{\nomgroup}[1]{%
  \ifstrequal{#1}{P}{\item[\textbf{Parameters}]}{%
  \ifstrequal{#1}{A}{\item[\textbf{Variables}]}{}}}
\newtheorem{remark}{Remark}
\newtheorem{theorem}{{Theorem}}
\newtheorem{lemma}[theorem]{{Lemma}}
\newtheorem{definition}{{Definition}}
\newtheorem{proposition}[theorem]{{Proposition}}
\newtheorem{assumption}{{Assumption}}
\newcommand{\addcite}[0]{\ifthenelse{\boolean{showcomments}}
{\textcolor{purple}{(add cite(s)) }}{}}%
\begin{document}

\title[Coherency Analysis in Nonlinear Heterogeneous Power Networks]{Coherency Analysis in Nonlinear Heterogeneous Power Networks: A Blended Dynamics Approach}

\author{Yixuan Liu}
\affiliation{%
  \institution{Peking University}
  \city{Haidian}
  \state{Beijing}
  \country{China}
}
\email{yixuanliu2024@outlook.com}
\orcid{0009-0003-6630-6125}

\author{Yingzhu Liu}
\affiliation{%
  \institution{Peking University}
  \city{Haidian}
  \state{Beijing}
  \country{China}}
\email{yzliucoe@pku.edu.cn}

\author{Pengcheng You}
\authornote{Corresponding author.}
\affiliation{%
  \institution{Peking University}
  \city{Haidian}
  \state{Beijing}
  \country{China}}
\email{pcyou@pku.edu.cn}

\renewcommand{\shortauthors}{Liu et al.}

\begin{abstract}
Power system coherency refers to the phenomenon that machines in a power network exhibit similar frequency responses after disturbances, and is foundational for model reduction and control design.
Despite abundant empirical observations, 
the understanding of coherence in complex power networks remains incomplete where the dynamics could be highly heterogeneous, nonlinear, and increasingly affected by persistent disturbances such as renewable energy fluctuations.
To bridge this gap, this paper extends the blended dynamics approach, originally rooted in consensus analysis of multi-agent systems, to develop a novel coherency analysis in power networks. 
We show that the frequency responses of coherent machines coupled by nonlinear power flow 
can be approximately represented by the blended dynamics, which is 
a weighted average of nonlinear heterogeneous nodal dynamics, even under time-varying disturbances. 
Specifically, by developing novel bounds on the difference between the trajectories of nodal dynamics and the blended dynamics, we identify two key factors---either high network connectivity or small time-variation rate of disturbances---that contribute to coherence. 
They enable the nodal frequencies to rapidly approach the blended-dynamics trajectory from arbitrary initial state. Furthermore, they ensure the frequencies closely follow this trajectory in the long term, even when the system does not settle to an equilibrium.
These insights contribute to the understanding of power system coherency and are further supported by simulation results.
\end{abstract}

\begin{CCSXML}
<ccs2012>
   <concept>
       <concept_id>10003033.10003083.10003094</concept_id>
       <concept_desc>Networks~Network dynamics</concept_desc>
       <concept_significance>300</concept_significance>
       </concept>
   <concept>
       <concept_id>10003033.10003079.10011672</concept_id>
       <concept_desc>Networks~Network performance analysis</concept_desc>
       <concept_significance>300</concept_significance>
       </concept>
 </ccs2012>
\end{CCSXML}

\ccsdesc[300]{Networks~Network dynamics}
\ccsdesc[300]{Networks~Network performance analysis}


\keywords{Power system coherency, blended dynamics, model reduction, nonlinear analysis, power networks}



\maketitle


\nomenclature[P01]{$\mathcal{N}$}{Set of nodes (corresponding to buses) in the network graph.}
\nomenclature[P02]{$N$}{Number of nodes (buses).}
\nomenclature[P03]{$\mathcal{E}$}{Set of edges (corresponding to lines) in the network graph.}
\nomenclature[P04]{$E$}{Number of edges (lines).}
\nomenclature[P05]{$M_i$}{Inertia constant of the generator at bus $i$.}
\nomenclature[P06]{$f_i(\cdot)$}{Frequency-dependent terms at bus $i$.}
\nomenclature[P07]{$\xi_i(t)$}{Net power injection (local generation minus local load) at bus $i$ and time $t$.}
\nomenclature[P08]{$B_{ij}$}{Sensitivity of the line flow to the phase angle difference between bus $i$ and bus $j$.}
\nomenclature[P09]{$L_B$}{Weighted Laplacian matrix of the network graph.}
\nomenclature[P10]{$M_b$}{System-wide average of the inertia constants.}
\nomenclature[P11]{$f_b(\cdot)$}{System-wide average of the frequency-dependent terms.}
\nomenclature[P12]{$\xi_b(t)$}{System-wide average of the local net power injections at time $t$.}
\nomenclature[P13]{$\mu$}{Minimum level of damping effect over all frequencies and all buses.}
\nomenclature[P14]{$L$}{Maximum level of damping effect over all frequencies and all buses.}
\nomenclature[P15]{$\lambda_2$}{Second-smallest eigenvalue of the matrix $M^{-1}L_B$.}
\nomenclature[P16]{$\lambda_2^L$}{Second-smallest eigenvalue of the matrix $L_B$.}
\nomenclature[P17]{$C$}{Maximum rate of change of $\xi_i(t)$ over all buses and all time.}
\nomenclature[P18]{$C_{\lim}$}{Maximum rate of change of $\xi_i(t)$ over all buses as time goes to infinity.}
\nomenclature[P19]{$\Delta \xi$}{Vector of initial abrupt changes in the values of $\xi_i(t)$.}
\nomenclature[P20]{$k$}{Uniform scaling factor for all line sensitivities $B_{ij}$'s.}

\nomenclature[A01]{$\theta_i$}{Voltage phase angle at bus $i$.}
\nomenclature[A02]{$\omega_i$}{Voltage frequency at bus $i$.}
\nomenclature[A03]{$p_{e,i}$}{Real power injected from bus $i$ into the network.}
\nomenclature[A04]{$\omega_b$}{State variable of the blended dynamics.}
\nomenclature[A05]{$\bar\omega$}{Center of Inertia (COI) frequency.}

\printnomenclature

\section{Introduction}
Stable operation of power systems requires machines to operate at closely synchronized frequencies, 
and loss of synchrony may lead to inter-area oscillations, power flow instability, and even cascading failures~\cite{kundur2007power}. Empirical observations have suggested that connected machines in power networks tend to exhibit similar frequency responses to external disturbances, a phenomenon referred to as \emph{power system coherency}~\cite{chow2013power}. Coherence has been widely exploited to support model reduction~\cite{germond1978dynamic} and control design~\cite{jiang2021grid}, simplifying large-scale power system analysis while preserving the dominant dynamics.

Extensive efforts have been made to understand the rationale behind power system coherency. Classic slow coherency analyses~\cite{romeres2013novel,tyuryukanov2020slow} identify groups of 
coherent machines, i.e., machines with highly similar responses, 
based on the structure of power networks. Then each group of coherent machines is aggregated into a larger equivalent machine. 
Although it is shown that coherence emerges from strong interconnections within each group, quantifying such relationships remains a challenge, 
particularly in establishing theoretical bounds on the differences between the nodal and aggregate responses.

Another line of work uses $H_2$-norm~\cite{bamieh2012coherence,tegling2015price,andreasson2017coherence} and $H_\infty$-norm~\cite{pirani2017system} to characterize the differences in nodal angle or frequency responses in a power network, which enables an explicit evaluation of how coherence is influenced by network connectivity~\cite{bamieh2012coherence}, line parameters~\cite{tegling2015price}, machine parameters~\cite{pirani2017system} and controller types~\cite{andreasson2017coherence}. 
However, these results are predicated on the assumption of homogeneous nodal dynamics and are not applicable to more practical scenarios.

More recent studies have attempted to relax the homogeneity assumption. \cite{paganini2019global} adopts a milder proportionality assumption instead, and provides a first-cut approximation to heterogeneity. 
Notably, in the presence of heterogeneous nodal dynamics, the frequency response of the full network is represented by the trajectory of the Center of Inertia (COI) frequency, defined as a weighted average of nodal frequencies. \cite{min2025frequency} takes a step further and develops a frequency-domain analysis framework for heterogeneous linear time-invariant network dynamical systems.
The study reveals that coherence is influenced not only by network connectivity, but also by the potential frequency composition of disturbances through the harmonic mean of nodal transfer functions. Despite the remarkable progress, the analysis remains restricted to approximated linear models. 

One promising alternative is to develop a time-domain analysis of power system coherency via \emph{blended dynamics}. Such a notion stems from multi-agent systems and is used to characterize the consensus and emergent behavior of agents with strong couplings among them. 
The blended dynamics approach inherently accommodates heterogeneous nonlinear dynamics~\cite{kim2015robustness}. In general, consensus-enforcing network couplings are necessary for blended dynamics to emerge, leading to a variety of control designs driven by neighborhood communication~\cite{kim2015robustness,lee2020tool,lee2022blended}. 
However, to the best of our knowledge, none of the structures directly fit the nonlinear physical coupling of power flows between buses in a power network. 
As a result, it remains unclear whether blended dynamics and coherence are correlated for power systems.

To fill these gaps, we extend the blended dynamics approach to develop a coherency analysis in nonlinear heterogeneous power networks subject to persistent time-varying disturbances. 
Specifically, we propose a reduced-order model based on a particular weighted average of (possibly nonlinear) nodal dynamics, namely the blended dynamics. Then we show that such blended dynamics indeed characterizes the behavior of the whole network that emerges when all nodes are coherent. 
Basically, we quantify the difference between the trajectories of the nodal dynamics and the blended dynamics, which reflects the level of coherence and is referred to as the \emph{coherence error}. 
We first analyze nonlinear nodal dynamics under linearized power flows, where we derive time-dependent bounds on the coherence error. These bounds reveal the regimes in which the error remains small for all $t > 0$ or decays exponentially to a small level. 
Then we further show that similar insights carry over to the nonlinear power flow case, under mild additional conditions. 

In summary, our results contribute to the understanding of coherence in power systems in the following ways:

\noindent
\textbf{Coherency-based reduced-order approximation:}
We formally show that the physical coupling of power flows can serve as a consensus-enforcing input to power system nodal dynamics, thus driving all nodal frequencies toward the trajectory of the reduced-order blended dynamics. 
Such a time-domain coherency analysis complements the literature by accommodating heterogeneous power networks with possibly nonlinear nodal dynamics and nonlinear power flows. 

\noindent    
\textbf{
Characterization of the level of coherence through explicit bounds: 
  } We develop novel bounds for all $t>0$ on the difference between nodal frequency trajectories and the blended-dynamics trajectory. These bounds shed light on how network connectivity could enhance the coherence level in both the limiting and the transient phase, 
  as well as how the time-variation rate of persistent disturbances plays a critical role in coherence, as compared with the prior work~\cite{min2025frequency} that only provides finite-time bounds and cannot handle step disturbances.

The remainder of the paper is organized as follows. Section~\ref{sec:preli} introduces necessary preliminaries on notations and reviews the general framework of the blended dynamics approach. Section~\ref{sec:problem} defines the nonlinear heterogeneous power network model and formulates our problem of coherency analysis. To address this problem, Section~\ref{sec:methods} constructs the specific blended dynamics tailored for the power network. Then the corresponding coherence error is characterized in Section~\ref{sec:results}, with explicit error bounds under both linearized and nonlinear power flows. Section~\ref{sec:simulations} provides numerical simulations that validate our theoretical results, and Section~\ref{sec:conclusion} concludes the paper. 
    
\section{Preliminaries}\label{sec:preli}
\subsection{Notations}
Let $x = x(t)$ denote the system state at time $t$. Its time derivative is written as $
\dot{x} := \frac{\partial x}{\partial t}.
$ For a differentiable function $f : \mathbb{R}^n \to \mathbb{R}$, such as a Lyapunov function, the time derivative along the system trajectory is \( \dot{f} := (\nabla_x f(x))^T \dot{x} \) and its Hessian is denoted by $\nabla^2 f(x)$. For any time-dependent signal $u(t)$, we denote the left and right limits at $t=0$ by $
u(0_-) := \lim_{t \to 0^-} u(t), \ u(0_+) := \lim_{t \to 0^+} u(t),
$ whenever these limits exist. If $u(t)$ is continuous at $t=0$, we simply write $u(0) = u(0_-) = u(0_+)$. Unless otherwise specified, expressions involving $t > 0$ start from $t = 0_+$ (i.e., after any possible initial discontinuity), while expressions involving $t \geq 0$ start from $t = 0_-$.

The vector of all ones is denoted by $\mathbb{1}$ and the identity matrix is denoted by $I$; their dimensions are given as a subscript when necessary. For a vector $x$, let $|x|$ denote its Euclidean norm, i.e., $
|x| := \sqrt{x^T x}$. For a matrix $A$, let $|A|$ denote its induced 2-norm, i.e., $|A| := \sup_{x \neq 0} \frac{|Ax|}{|x|}$. Let $\sigma_m(A)$ denote the minimum singular value of $A$. In particular, if $A$ is real symmetric, $\lambda_i(A)$ denotes the $i$-th smallest eigenvalue of $A$. For two real symmetric matrices $A$ and $B$ of the same dimension, 
the relation $
A \leq B$ means that $B-A$ is positive semidefinite, while $ A < B$ means that $B-A$ is positive definite. 
The functions $\sin(\cdot)$ and $\cos(\cdot)$ are applied element-wise when used with vector arguments. 
For a set of scalars $\{x_i : i \in \mathcal{N}\}$ with an index set $\mathcal{N} := \{1,2,\dots,N\}$, the diagonal matrix is written as $
\mathrm{diag}(x_1,\dots,x_N) $ or equivalently, as $\mathrm{diag}(x_i,\ i \in \mathcal{N})$. 
For two functions $g$ and $h$, we write $g(x) = \Theta(h(x))$ if there exist positive constants $c_1, c_2$ and $x_0$ such that $c_1 h(x) \le |g(x)| \le c_2 h(x)$ for all $x \ge x_0$. 

\subsection{Notion of Blended Dynamics} 
Here we briefly review the core idea of the \emph{blended dynamics} approach~\cite{kim2015robustness} in a general setting, which characterizes the consensus and group behavior in multi-agent systems. 
Specifically, consider a group of heterogeneous agents with the dynamics of agent $i$ given by $\dot{x}_i = h_i(x_i,t) + u_i(t)$, where $x_i$ is the agent's state. $h_i$ is a vector field representing the agent's (possibly nonlinear) local dynamics, which may include time-varying signals and disturbances. $u_i(t)$ is a coupling signal, typically designed based on neighborhood communication, to enforce consensus among agents. 
When (approximate) consensus is achieved, i.e., when all agents follow highly similar trajectories, their collective behavior can be approximated by the so-called blended dynamics, which is constructed using a weighted average of the vector fields of all agents:%
\begin{align}\label{eq:pre blended dyn}
    \dot{{x}}_{b} = \textstyle \sum_i \beta_i h_i(x_b,t)\big/ \textstyle \sum_i \beta_i,
\end{align}
where the weights $\beta_i$ are chosen such that $\sum_i \beta_iu_i= 0$, reflecting the fact that the coupling signals $u_i$ are designed as internal exchanges and should not contribute to the group's net motion. Such dynamics captures an emergent behavior that may not be exhibited in any individual agent but generated by a mixture of all agent dynamics, thus referred to as blended dynamics.

\section{Problem Statement}\label{sec:problem}
Consider a power network with a connected undirected graph $(\mathcal{N},\mathcal{E})$, where $\mathcal{N}:=\{1,\dots,N\}$ is the set of nodes and $\mathcal{E} \subseteq\{\{i, j\} : i, j \in \mathcal{N}, i \neq j\}$ 
is the set of edges with $|\mathcal{E}| = E$. Each node is usually a bus, while each edge describes a connection between two buses, such as a transmission line. 
Without loss of generality, we assume there is only one (aggregate) controllable generator at each bus. The dynamical model of bus $i$ is given by 
 \begin{subequations}\label{eq:nonlinear dynamics}
        \begin{align}
           \dot \theta_i &= \omega_i,\label{eq:angle}\\
         \dot \omega_i & =\frac{1}{ M_i}\left( f_i(\omega_i)+ \xi_i(t) - p_{e,i}\right).\label{eq:agent}
        \end{align}
   \end{subequations}
Here $\theta_i$ and $\omega_i$ are the voltage phase angle and frequency relative to the utility frequency given by $2 \pi 50$ or $2 \pi 60$ Hz. 
$M_i >0$ represents the inertia constant of the generator. $f_i:\mathbb{R}\to\mathbb{R}$ is a (possibly nonlinear) function of local frequency deviation, summarizing all frequency-dependent terms such as generator damping. 
$\xi_i(t)$ is the real-time net power injection (local generation minus local load) at bus $i$. Since it includes the disturbances from fluctuating demand, variable generation, etc, we will simply refer to $\xi_i(t)$ as the "disturbance" (with a slight abuse of terminology). 
$p_{e,i}$ denotes the real electrical power injected from bus $i$ into the network, 
which can be expressed as follows under the assumptions of constant voltage magnitudes and lossless lines: 
        \begin{align}
         p_{e,i} &= \sum_{j \in \mathcal{N}_i} B_{ij} \sin(\theta_i - \theta_j), \tag{2c} \label{eq:power flow}
        \end{align} where $\mathcal{N}_i$ is the set of buses directly connected to bus $i$. $B_{ij}$ characterizes the sensitivity of the line flow to the phase angle difference between bus $i$ and bus $j$, given by $B_{ij} = |V_i||V_j|/x_{ij}$, where $|V_i|$ and $|V_j|$ are the constant voltage magnitudes, and $x_{ij}$ is the line reactance.
Note that we set $f_i(0) = 0$, since any constant offset $f_i(0)$ can be absorbed in $\xi_i(t)$. 
Besides, we model $\xi_i(t)$ to be continuously differentiable for $t \in(0,\infty)$, in order to capture the impact of their rate of change $\dot \xi_i(t)$, while allowing a finite jump at $t = 0$ to accommodate possible abrupt changes.

\begin{remark}
The term $f_i(\omega_i)$ in \eqref{eq:agent} captures nonlinear frequency-dependent behaviors beyond classical linear damping or linear droop control.
Examples include the nonlinear frequency-sensitive loads~\cite{pradhan2016frequency} and nonlinear droop controllers with saturation~\cite{kundur2007power}.
In addition, nonlinear local primary controllers are increasingly adopted to enhance transient performance, such as the load-side controllers in~\cite{zhao2014design} and the inverter-based controllers in~\cite{cui2022reinforcement}.
\end{remark}

\begin{figure}[t]
    \centering
    \includegraphics[width=0.4\textwidth]{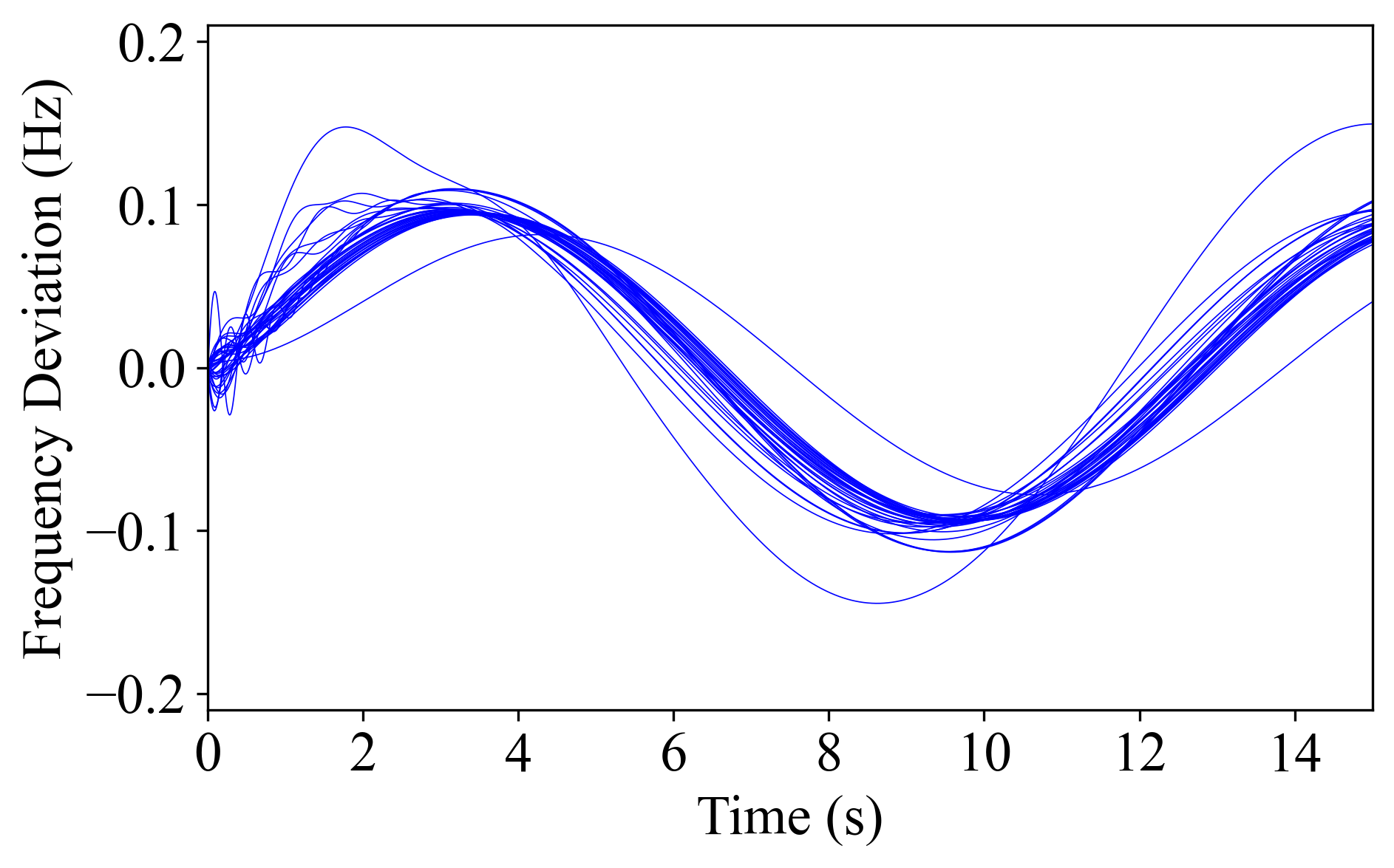} 
    \caption{Frequency responses of $35$ generators in the Icelandic power grid~\cite{edinburgh_power_systems} following a disturbance.}
    \label{fig:question}
\end{figure}
While each node in the network graph has its own dynamics \eqref{eq:agent}, it is often observed that the frequencies $\omega_i(t)$ of different nodes evolve in highly similar patterns, as shown in the illustrative example in Fig.~\ref{fig:question}. Such observations, known as the power system coherency, are often exploited to approximate the frequency trajectories $\omega_i(t)$'s of the full $N$-th order system by a lower-order system trajectory $\omega_b$~\cite{chow2013power}. However, it remains an open problem \emph{how to construct a reduced-order frequency model such that its behavior approximates that of 
a coherent nonlinear system \eqref{eq:nonlinear dynamics} with bounded error, especially when the nodal dynamics \eqref{eq:agent} are heterogeneous and are subject to persistent time-varying disturbances $\xi_i(t)$.}

One possibility is to draw inspiration from the blended dynamics framework, which captures the collective behavior of nonlinear heterogeneous agents under approximate consensus. 
The role of blended dynamics is strikingly aligned with our problem in the power system context.
However, the challenge remains whether the physical coupling between nodes via power flows can be consensus-enforcing. 
To this end, the focus of this paper is to explicitly model the blended dynamics for power networks and develop theoretical characterizations of the resulting approximation error 
$$\max_{i \in \mathcal{N}} |\omega_i - \omega_b|,$$
which is later referred to as the \emph{coherence error}. 
This error is a proxy for nodal frequency differences $|\omega_i - \omega_j|$ and a measure of coherency for a power network. 
We particularly aim to identify conditions under which this error is small, 
despite the heterogeneity and disturbances. In the following sections, we will first introduce the constructed form of the blended dynamics for power networks and then present our bounds on the corresponding coherence error.

\section{Construction of Blended Dynamics}\label{sec:methods}
In this section, we are inspired by the notion of blended dynamics to propose a candidate reduced-order model tailored for the swing dynamics \eqref{eq:nonlinear dynamics}. As shown in \eqref{eq:pre blended dyn}, blended dynamics builds on a weighted average of individual vector fields, where the weights $\beta_i$ are designed such that the coupling signals $u_i(t)$ are canceled out, i.e., $\sum_i \beta_i u_i(t) = 0$. Here $u_i$ corresponds to the power flow coupling $-M_i^{-1}p_{e,i}$ given in \eqref{eq:power flow}, thus we can choose $\beta_i = M_i$ due to the inherent power flow balance $\sum_i p_{e,i} = 0$. Based on this idea, we arrive at the following definition of blended dynamics.

\begin{definition}
[Blended Dynamics of Power Networks]
    Let $\omega_b\in \mathbb{R}$ be the state of the blended dynamics for the swing dynamics \eqref{eq:nonlinear dynamics}.  
    $\omega_b$ evolves according to
       \begin{align}\label{eq: blended dyn}
            \dot \omega_b &= \frac{\sum_{i=1}^N M_i \left(M_i^{-1}(f_i(\omega_b)+ \xi_i)\right)}{\sum_{i=1}^N M_i}   = \frac{f_b (\omega_b) + \xi_b}{M_b},
       \end{align}
       starting from the initial point $$\omega_b(0)=\sum_{i=1}^N M_i \omega_i(0) \big/\sum_{i=1}^N M_i.$$
    Here, $M_b :=(1/N)\sum_{i=1}^N M_i$, $f_b(\omega_b):= (1/N) \sum_{i=1}^N f_i(\omega_b)$, and $\xi_b:= (1/N) \sum_{i=1}^N \xi_i$.
\end{definition}

The existence and uniqueness of the solution to \eqref{eq: blended dyn} usually requires local Lipschitz continuity on $f_b(\cdot)$. This will be satisfied by Assumption~\ref{assump1} to be introduced later. 
\begin{remark}
The typical COI frequency $\bar \omega:= (\sum_i M_i \omega_i)/(\sum_i M_i)$ is defined based on the same intuition, taking weighted average and assigning larger weights to nodes with larger inertia. The COI trajectory was also used to represent system-wide frequency evolution in prior works,e.g.,~\cite{paganini2019global}.
But these works rely on the proportionality assumption, which imposes a uniform damping-to-inertia ratio across all nodes. This assumption enables the the dynamics of $\bar \omega$ to decouple neatly from the nodal deviations $\omega_i - \bar \omega$ and reduce to a simple first-order equation of the same form as \eqref{eq: blended dyn}. However, it becomes restrictive in modern power systems with a high penetration of distributed resources that are highly heterogeneous. In such heterogeneous systems, the dynamics of $\bar{\omega}$ is generally high-order and entangled with full network states. Instead, our blended dynamics provides an approximation of the COI dynamics while preserving the simple structure.
\end{remark}

In the blended dynamics literature, the accuracy of the blended dynamics representation crucially relies on the presence of consensus-enforcing couplings $u_i$. Different from the large variety of coupling types in the existing works, the physical coupling of power flows \eqref{eq:power flow} has a unique structure: It is nonlinear (sinusoidal mapping) and purely integral-based (phase angle difference). 
This calls for a systematic analysis of the coherence error $\max_{i \in \mathcal{N}}|\omega_i - \omega_b|$ under the power flow coupling, which will be presented in the next section.

\section{Characterization of Coherence Error}\label{sec:results} 
In this section, we derive time-dependent upper bounds on the coherence error $\max_{i \in \mathcal{N}}|\omega_i(t) - \omega_b(t)|$ over the entire time horizon. These bounds show that under proper conditions of network parameters and disturbance properties, all nodal frequencies $\omega_i$ can be effectively approximated by the blended dynamics \eqref{eq: blended dyn}. For illustration purposes, we will first consider only nonlinear nodal dynamics with linearized power flows in Section~\ref{subsec:linearflow}. Then the analysis will be extended to the nonlinear power flow setting in Section~\ref{subsec:nonlinearflow}.

Before proceeding to the results, we make an assumption on the functions $f_i(\omega_i)$ in \eqref{eq:agent}.
\begin{assumption}
    \label{assump1}
$f_i(\omega_i)$ is continuously differentiable and there exist constants $\mu >0$ and $L>0$ such that, for $\forall i \in \mathcal{N}$ and $\forall \omega_i \in \mathbb{R}$,
      \begin{align}\label{eq:contractive}
            -L \leq \frac{1}{M_i}\frac{ d f_i(\omega_i)}{d \omega_i} \leq -\mu.
        \end{align}
\end{assumption}
\noindent
This assumption basically requires a positive damping effect and limits excessively fast responses to frequency variations. Classical linear damping, which is often ensured by primary frequency control mechanisms, readily satisfies this assumption as a special case. 

\subsection{Results under Linearized Power Flows}\label{subsec:linearflow}
\textcolor{black}{To obtain more insights into the coherence error, we first establish upper bounds using linearized power flow equations while retaining the nonlinear nodal dynamics.} Specifically, we replace \eqref{eq:power flow} with the following DC power flow equations: 
\begin{align}\label{eq:power flow linear}
    p_{e,i} = \sum_{j \in \mathcal{N}_i} B_{ij} (\theta_i - \theta_j), \ \forall i \in \mathcal{N},
\end{align}
which is a standard approximation as the phase angle differences are typically small~\cite{wood2013power}.

For brevity, define $\theta := [\theta_1,\dots,\theta_N]^T$, 
$\omega := [\omega_1,\dots,\omega_N]^T$, 
$\xi := [\xi_1,\dots,\xi_N]^T$, and $M := \operatorname{diag}(M_1,\dots,M_N)$.
Let $L_B$ denote a weighted Laplacian matrix with entries $(L_B)_{ij} = -B_{ij}$ for $i \ne j$, $(L_B)_{ii} = \sum_{j \ne i} B_{ij}$. Further denote the second-smallest eigenvalue of the matrix $M^{-1}L_B$ as $\lambda_2(M^{-1}L_B)$, or simply $\lambda_2$. 
Then we are ready to present the main results in the following theorem.

\begin{theorem}\label{thm: thm1 general}
Let Assumption \ref{assump1} hold. Given  
\begin{equation}\label{eq:constant_K}
    K :=\frac{32 NM_b (1+\frac L \mu)^2 }{(\min_{i\in\mathcal{N}} M_i)  \mu^2},
\end{equation}
the following results hold.
\begin{enumerate}
    \item  If $C := \max_{i \in \mathcal{N}} \sup_{t > 0}|\dot{\xi}_i(t)|/M_i$ is finite, 
then there exists a positive constant $\alpha$ such that for $\forall t > 0$,
\begin{align}\label{eq:thm1 general conclusion 1}
   \max_{i \in \mathcal{N}}|\omega_i(t) - \omega_b(t) |^2 \leq \alpha  e^{-ct} + K C^2\frac{(\lambda_2+ 4L^2)^2}{ \lambda_2^3},
\end{align}
where
\begin{equation}\label{eq:exponential_rate_linear}
c := \frac{\mu\lambda_2}{4(\lambda_2+ 4L^2)}.
\end{equation}
\item If $C_{\lim} := \max_{i \in \mathcal{N}}\limsup_{t \to \infty}|\dot{\xi}_i(t)|/M_i$ is finite, then
\begin{align}\label{eq:thm1 general conclusion 2}
  \max_{i \in \mathcal{N}}\limsup_{t \to \infty}|\omega_i(t) - \omega_b(t) |^2 \leq  K C_{\lim}^2\frac{(\lambda_2+ 4L^2)^2}{ \lambda_2^3}.
\end{align}
\end{enumerate}
\end{theorem}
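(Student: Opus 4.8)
The plan is to bound the coherence error in \emph{disagreement coordinates} and, crucially, to drive the analysis by the rate of change of the disturbance rather than its magnitude. First I would split the error. Writing $\bar\omega := (\sum_i M_i\omega_i)/(\sum_i M_i)$ for the COI frequency and $\delta := \omega - \bar\omega\mathbb{1}$ for the frequency disagreement, the triangle inequality gives $\max_i|\omega_i - \omega_b|\le|\delta| + |\bar\omega - \omega_b|$ since the Euclidean norm dominates the max. The gap $e_b := \bar\omega - \omega_b$ satisfies $e_b(0)=0$ and, using \eqref{eq: blended dyn}, the Laplacian identity $\mathbb{1}^T L_B = 0$, and the mean value theorem applied to each $f_i$, obeys a scalar ODE $\dot e_b = a(t)e_b + b(t)$ with $a(t)\le-\mu$ (by Assumption~\ref{assump1}) and $|b(t)|\le L\,|\delta|$. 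Thus $e_b$ is a contracting scalar system forced by $|\delta|$, so $|e_b|$ is dominated by $\tfrac{L}{\mu}\sup|\delta|$ up to a decaying transient, and the problem reduces to bounding the disagreement $|\delta|$.

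For $|\delta|$ I would exploit that the DC injection $L_B\theta$ in \eqref{eq:power flow linear} depends only on the angle disagreement $\phi := \theta - \bar\theta\mathbb{1}$, and that $\delta,\phi$ both lie in the weighted subspace $\mathcal S := \{v:\mathbb{1}^T M v = 0\}$, on which the \emph{connectivity bound} $v^T L_B v\ge\lambda_2\,v^T M v$ holds (this is where $\lambda_2 = \lambda_2(M^{-1}L_B)$ enters). The second-order disagreement dynamics are $\dot\phi = \delta$ and $\dot\delta = -M^{-1}L_B\phi + Q(f(\omega)+\xi)$ with $Q := M^{-1} - \mathbb{1}\mathbb{1}^T/(\mathbb{1}^T M\mathbb{1})$. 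The decisive step—and the main obstacle—is a \emph{quasi-steady-state shift}: I introduce the instantaneous angle profile $\phi^*(t)$ that would sustain exact synchronization, i.e.\ the $\mathcal S$-solution of $L_B\phi^* = \Pi(f(\bar\omega\mathbb{1})+\xi)$ with $\Pi := I - M\mathbb{1}\mathbb{1}^T/(\mathbb{1}^T M\mathbb{1})$, and set $\tilde\phi := \phi-\phi^*$. Substituting cancels the $O(\xi)$ forcing exactly, leaving $\dot{\tilde\phi} = \delta - \dot\phi^*$ and $\dot\delta = -M^{-1}L_B\tilde\phi + Q\tilde D\delta$, where $\tilde D := \mathrm{diag}(f_i'(\zeta_i))$ contributes the clean dissipation $\delta^T M(Q\tilde D\delta)=\delta^T\tilde D\delta\le-\mu\,\delta^T M\delta$. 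The only remaining forcing is $\dot\phi^*$, and I must show it is governed by $C$ rather than $\sup|\xi|$. Differentiating $L_B\phi^*=\Pi(f(\bar\omega\mathbb{1})+\xi)$ bounds $|\dot\phi^*|$ by a connectivity factor times $|\tfrac{d}{dt}(f(\bar\omega\mathbb{1})+\xi)|$; the $\dot\xi$ part is controlled by $C$, while the $\dot{\bar\omega}$ part requires showing that $\dot{\bar\omega}$ itself obeys a stable scalar ODE forced by $\dot\xi_b$ (with damping $\ge\mu$), which caps it at $O(C/\mu)$ up to a transient and yields the $(1+L/\mu)$ factor in $K$. Verifying that this shift is well posed (existence, uniqueness in $\mathcal S$, and differentiability of $\phi^*$) and that the state-dependent term $\dot{\bar\omega}$ does not smuggle back a dependence on $\sup|\xi|$ is the technically delicate part of the argument.

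With the forcing tamed I would run a Lyapunov argument on the shifted system using $V := \tfrac12\delta^T M\delta + \tfrac12\tilde\phi^T L_B\tilde\phi + \epsilon\,\delta^T M\tilde\phi$ for a small $\epsilon>0$; the cross term injects damping into the $\tilde\phi$ direction, which otherwise carries no dissipation. Combining $\delta^T\tilde D\delta\le-\mu\delta^T M\delta$, the connectivity bound on $\mathcal S$, and Young's inequality gives $\dot V\le -2cV + \text{const}\cdot|\dot\phi^*|^2$ once $\epsilon$ is chosen sufficiently small, scaling with $\lambda_2/(\lambda_2+4L^2)$; this balancing is exactly what fixes the rate $c$ in \eqref{eq:exponential_rate_linear}. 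A Grönwall/comparison step then yields $V(t)\le V(0)e^{-2ct} + \tfrac{\text{const}}{2c}\sup_s|\dot\phi^*(s)|^2$, and translating back through $|\delta|^2\le(2/\min_i M_i)V$ together with the $e_b$ bound—collecting the factors $NM_b=\mathbb{1}^T M\mathbb{1}$, $(1+L/\mu)^2$, $\min_i M_i$, $\mu^2$, and the connectivity powers in $\tfrac{(\lambda_2+4L^2)^2}{\lambda_2^3}$—reproduces \eqref{eq:thm1 general conclusion 1} with the stated $K$. The second claim \eqref{eq:thm1 general conclusion 2} follows identically, replacing $\sup$ by $\limsup$, using $C_{\lim}$, and letting the exponential transient vanish as $t\to\infty$.
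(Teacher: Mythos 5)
Your overall architecture is essentially the paper's: a quasi-steady-state angle shift that converts the $O(\xi)$ forcing into an $O(\dot\xi)$ forcing, a Lyapunov function of the form kinetic $+$ potential $+$ cross term, and a Young-inequality balance that produces the rate $c=\mu\lambda_2/(4(\lambda_2+4L^2))$. The algebra you sketch for the shifted dynamics is also correct: with $L_B\phi^*=\Pi(f(\bar\omega\mathbb{1})+\xi)$ one indeed gets $\dot\delta=Q\tilde D\delta-M^{-1}L_B\tilde\phi$ and the clean dissipation $\delta^T\tilde D\delta\le-\mu\,\delta^TM\delta$. The gap is in the single step you yourself flag as delicate: you anchor the shift to the COI $\bar\omega$ and assert that $\dot{\bar\omega}$ "obeys a stable scalar ODE forced by $\dot\xi_b$." That is false in the heterogeneous setting. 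From $NM_b\dot{\bar\omega}=\mathbb{1}^Tf(\omega)+\mathbb{1}^T\xi$ and the mean value theorem one gets
\begin{equation*}
\dot{\bar\omega}\;=\;\dot\omega_b\;+\;\frac{1}{NM_b}\sum_{i}f_i'(\zeta_i)\,(\omega_i-\omega_b),
\end{equation*}
so the COI velocity is not a function of $(\bar\omega,t)$ alone: it carries a residual of size $O\bigl(L\max_i|\omega_i-\omega_b|\bigr)$, i.e., exactly the quantity the theorem is bounding. (This is the precise reason the paper's construction uses the blended state $\omega_b$, whose derivative $\dot\omega_b=(f_b(\omega_b)+\xi_b)/M_b$ is exogenous to the network state and is bounded by $C/\mu$ plus a decaying transient; see the paper's Lemma on $|d\tilde f/dt|$ and the remark contrasting $\omega_b$ with the COI. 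Your closure would hold only under the proportionality assumption $f_i=M_if_o$, which the paper explicitly avoids.)

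Concretely, after differentiating $L_B\phi^*=\Pi(f(\bar\omega\mathbb{1})+\xi)$ your forcing $|\dot\phi^*|$ acquires a state-dependent contribution of order $\lambda_2^{-1}L^2\max_i|\omega_i-\omega_b|$ on top of the $O(C)$ part. Feeding this into $\dot V\le-2cV+\mathrm{const}\cdot|\dot\phi^*|^2$ produces a term $\mathrm{const}\cdot\lambda_2^{-2}L^4\max_i|\omega_i-\omega_b|^2$ that must be absorbed into $-2cV$; this is possible only when $\lambda_2$ is large enough relative to $L,\mu$ and the inertias (or via a small-gain bootstrap with the same restriction), whereas the theorem claims the bound for every connected network with no lower bound on $\lambda_2$. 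The same circularity contaminates your $e_b=\bar\omega-\omega_b$ estimate and your final constant; asserting that the bookkeeping "reproduces the stated $K$" is not justified. The fix is simple: define the quasi-steady angle profile from $f(\mathbb{1}\omega_b)+\xi$ rather than $f(\bar\omega\mathbb{1})+\xi$, and compare $\omega$ directly to $\mathbb{1}\omega_b$ (dropping the COI intermediary), which is precisely the paper's route.
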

\noindent The proof of Theorem~\ref{thm: thm1 general} is provided in Appendix~\ref{prof: thm1} with the explicit expression for $\alpha$. 

The first part of this theorem establishes that the coherence error decays exponentially into a bounded region, as shown in \eqref{eq:thm1 general conclusion 1}. 
In particular, by applying the limit superior ($\limsup_{t \to \infty}$) to \eqref{eq:thm1 general conclusion 1}, the first term on the right-hand side vanishes, leaving the second term as a limiting bound. Then the second part of the theorem further refines this limiting bound by replacing the constant $C$ with $C_{\lim}$, as shown in \eqref{eq:thm1 general conclusion 2}, which is less conservative under disturbances with decaying $\dot \xi(t)$.

In practice, these bounds imply that all nodal frequencies $\omega_i(t)$ quickly approach and then approximately follow the common trajectory $\omega_b(t)$ even when they do not settle to an equilibrium, with explicit characterizations of how fast they approach and how close they eventually remain. 
Since $|\omega_i - \omega_j| \leq 2 \max_{i \in \mathcal{N}}|\omega_i - \omega_b|$, the bounds also explain why real-world power networks can exhibit approximate frequency synchronization even under time-varying disturbances. 

We take a step further to analyze the key factors of the coherence error.
First, we examine the limiting bound of the coherence error. 
Here we can find two regimes where the error can be driven small: 
\begin{itemize}
    \item Small $C$ (or $C_{\lim}$), which means the disturbances change slowly in time. As the time-variation rate of the disturbances decreases, the nodal frequencies are able to follow $\omega_b(t)$ more coherently. In the special cases where the disturbances are constants for all $t>0$ or $t\to \infty$, we have $C= 0$ or $C_{\lim}=0$. Thus the bound reduces to zero and all the nodal frequencies are exactly synchronized to $\omega_b(\infty)$. 
    Such findings validate the intuition that in the case of slow-varying disturbances, the nodal frequencies adjust to each other gently and thus stay close together. 
    Further, recall that the power flow coupling is essentially in the form of integral control. Our results align with the fact that it suppresses low-frequency disturbances but is less responsive to rapid changes. 
    \item Large $\lambda_2$, which indicates high algebraic connectivity of the power network. The ultimate coherence error becomes arbitrarily small when $\lambda_2$ is sufficiently large, even under time-varying disturbances. This formalizes the intuition that stronger interconnection among nodes leads to more coherent behavior. 
\end{itemize}


Having identified the regimes where the long-term coherence error is small,  we now focus on the exponential rate at which the error decays into that small region in transient, as reflected by the term $\alpha e^{-ct}$ in \eqref{eq:thm1 general conclusion 1}. 
Accordingly to \eqref{eq:exponential_rate_linear}, the rate $c$ improves with higher connectivity $\lambda_2$. Therefore, high network connectivity not only contributes to diminishing the limiting coherence error in the long run, but also accelerates the error decay process---key indicators for the capability of a power network to accommodate disturbances and maintain frequency synchronization. 
Further, the rate $c$ can get arbitrarily close to $\frac{\mu}{4}$ when $\lambda_2$ is sufficiently large, showing that in a tightly connected power network, the bottleneck actually lies in the nodal damping effect. 
Our result aligns with the observations in \cite{guo2018graph}, which reveals a similar dependence of the synchronization rate on both connectivity and damping, but relies on the assumption of a uniform damping-to-inertia ratio. These findings may shed light on the control design for power systems with high penetration of renewables that lack natural damping. 

While Theorem~\ref{thm: thm1 general} characterizes the coherence error under general initial states $(\omega(0),\theta(0))$, an arbitrary initial state could potentially introduce a large error in transient, mixed with the error caused by the disturbances $\xi(t)$, as indicated in the constant $\alpha$. 
To distinguish the effects of the disturbances and the role of network connectivity in suppressing them, we exclude the influence of arbitrary initial states by considering the case where the system starts from a steady state. 
Specifically, suppose that prior to $t = 0$, the system has settled into a steady state determined by the constant input vector $\xi(0_-)$. This means that all frequencies and all phase angle differences remain constant and thus their time derivatives are zero, leading to the following equations for $(\omega(0),\theta(0))$: 
\begin{subequations}\label{eq:steady state condition}
    \begin{align}
          0 & =f_i(\omega_i(0)) + \xi_i(0_-) - \sum_{j \in \mathcal{N}_i} B_{ij} (\theta_i(0) - \theta_j(0)),\ \forall i \in \mathcal{N},\label{eq:steady 1}\\
  0 &= \omega_i(0) - \omega_j(0),\ \forall \{i,j\} \in \mathcal{E}\label{eq:steady 2}.
    \end{align}
\end{subequations}
A solution \((\omega(0), \theta(0))\) to the equations \eqref{eq:steady state condition} exists, as demonstrated in Appendix~\ref{sec: proof of steady state}. 
For this specific initial state, the constant $\alpha$ in Theorem~\ref{thm: thm1 general} can be replaced with an explicit form $\alpha^*|\Delta \xi|^2$ that has a cleaner dependence on $\lambda_2$ and the disturbance abrupt changes, as presented in the following proposition.  
\begin{proposition}\label{coro: start from steady} 
Let Assumption~\ref{assump1} hold. Let the constants $c$ and $K$ be given in \eqref{eq:constant_K} and \eqref{eq:exponential_rate_linear}, respectively. Suppose that $(\omega(0),\theta(0))$ is a solution to the equations \eqref{eq:steady state condition}. 
If $C := \max_{i \in \mathcal{N}} \sup_{t > 0}|\dot{\xi}_i(t)|/M_i$ is finite, then for $\forall t > 0$,
    \begin{align}\label{eq:coro conclusion}
 \max_{i \in \mathcal{N}}  |\omega_i(t) - \omega_b(t) |^2 \leq \alpha^* |\Delta \xi|^2  e^{-ct} + K C^2\frac{(\lambda_2+ 4L^2)^2}{ \lambda_2^3},
\end{align}
with 
\[
\alpha^* := \frac{(\phi_1+ \phi_2)\lambda_2 + 4 \phi_2 L^2}{\lambda_2^2},
\]
where $\Delta \xi:=\xi(0_+) - \xi(0_-)$ and
\[
\phi_1 := \frac{1}{\min_i M_i^2},\ \phi_2 := \frac{16 L^2}{3\mu^2 M_b (\min_i M_i)}  .
\] 
\end{proposition}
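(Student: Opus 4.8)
The plan is to treat Proposition~\ref{coro: start from steady} as a refinement of Theorem~\ref{thm: thm1 general} in which the only new work is to re-evaluate the transient constant $\alpha$ at the special initial condition \eqref{eq:steady state condition}. Inspecting the proof of Theorem~\ref{thm: thm1 general} (Appendix~\ref{prof: thm1}), the bound \eqref{eq:thm1 general conclusion 1} is produced by a composite Lyapunov function $V(t)$ for the disagreement between the nodal trajectory $(\omega,\theta)$ and the blended/synchronized trajectory, satisfying a differential inequality $\dot V \le -cV + (\text{const})\,C^2$ with the same rate $c$ as in \eqref{eq:exponential_rate_linear}; a comparison argument then gives $V(t)\le V(0_+)e^{-ct} + (\text{steady term})$, so that $\alpha$ is exactly the conversion constant (between $V$ and $\max_{i}|\omega_i-\omega_b|^2$) times the initial value $V(0_+)$. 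Hence it suffices to bound $V(0_+)$ by $|\Delta\xi|^2$ for the steady-state start and read off the resulting coefficient as $\alpha^*$.

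First I would exploit the steady-state equations to collapse most of the initial disagreement. Equation \eqref{eq:steady 2}, together with connectivity of $(\mathcal{N},\mathcal{E})$, forces all $\omega_i(0)$ to coincide; since $\omega_b(0)$ is their $M$-weighted average, we get $\omega(0)=\omega_b(0)\mathbb{1}$, and because frequencies are continuous at $t=0$ the initial frequency-disagreement component of $V(0_+)$ vanishes. This is the decisive simplification relative to the general case: for an arbitrary initial state this component (and the Lyapunov cross term coupling frequency and angle disagreement) may be large and uncontrolled, whereas here both are exactly zero, so $V(0_+)$ reduces to its angle/power-disagreement part alone.

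Next I would show this remaining part scales like $|\Delta\xi|^2$. Using \eqref{eq:steady 1}, the pre-jump power flow satisfies $p_{e,i}(0)=f_i(\omega_i(0))+\xi_i(0_-)$, and averaging yields $f_b(\omega_b(0))=-\xi_b(0_-)$ (note $\mathbb{1}^T L_B\theta(0)=0$). Comparing the preserved flow $p_e(0_+)=p_e(0_-)$ against the reference flow that would sustain exact synchronization at $\omega_b$ under the post-jump injection $\xi(0_+)$, the mismatch collapses to a linear image of $\Delta\xi$ lying in $\mathrm{range}(L_B)$, with norm controlled by a constant multiple of $|\Delta\xi|$ (the $M_i$-dependence of this constant feeding $\phi_1$). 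Converting this power-flow mismatch into angle-disagreement energy through the pseudoinverse of $L_B$, equivalently through the spectral gap $\lambda_2$ of $M^{-1}L_B$, and folding in the $V$-to-$\max_i|\omega_i-\omega_b|^2$ conversion from Theorem~\ref{thm: thm1 general}, produces a bound of the form $\alpha^*|\Delta\xi|^2$. The two terms of $\alpha^*$ (orders $1/\lambda_2$ and $1/\lambda_2^2$) arise respectively from the direct angle-energy term and from the damping/cross-term contributions, whose $L,\mu,M_b,\min_i M_i$ dependence is carried by $\phi_1$ and $\phi_2$.

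I expect the main obstacle to be constant bookkeeping rather than any conceptual difficulty: tracking the full chain of conversions — from $\Delta\xi$, to the power-flow mismatch, to angle-disagreement energy, to $V(0_+)$, and finally to $\max_i|\omega_i-\omega_b|^2$ — so that the coefficients land precisely on $\phi_1,\phi_2$ and on the stated $\lambda_2$-dependence, while correctly handling the spectral relationship between $\lambda_2(M^{-1}L_B)$ and $\lambda_2^L=\lambda_2(L_B)$ in the $L_B^\dagger$ estimate. A secondary point is to invoke existence of a solution to \eqref{eq:steady state condition} (Appendix~\ref{sec: proof of steady state}) and to manage the discontinuity at $t=0$ via the left/right limit conventions, so that $\Delta\xi=\xi(0_+)-\xi(0_-)$ enters the bound cleanly.
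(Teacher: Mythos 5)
Your overall route is the paper's route: re-run Theorem~\ref{thm: thm1 general} and specialize the transient constant $\alpha$ at the steady-state start, using \eqref{eq:steady 2} to get $\omega(0)=\omega_b(0)\mathbb{1}$ (so the frequency-disagreement and cross terms of $V(0_+)$ vanish) and using \eqref{eq:steady 1} to express the remaining angle disagreement as $\delta_\theta(0_+)=-\Lambda_P^{-1}Y^TM^{-\frac12}\Delta\xi$, whence $V(0_+)\le \tfrac{1}{2\lambda_2}|\Delta\xi|^2/\min_iM_i$. That correctly produces the $\phi_1/\lambda_2$ part of $\alpha^*$.

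The gap is in your model of where $\alpha$ comes from. You write that the comparison argument gives $V(t)\le V(0_+)e^{-ct}+(\text{steady term})$, so that bounding $V(0_+)$ by $|\Delta\xi|^2$ suffices. In the paper's proof the differential inequality is $\dot V\le -\tfrac{\eta^*}{2}V+\beta_1 e^{-2\mu t}+\beta_2$, and the comparison lemma yields $V(t)\le\bigl(V(0_+)+\tfrac{2\beta_1}{3\mu}\bigr)e^{-ct}+\tfrac{2\beta_2}{\eta^*}$; the constant $\alpha$ therefore also carries the term $\tfrac{4\beta_1}{3\mu\min_iM_i}$ with $\beta_1\propto|f_b(\omega_b(0))+\xi_b(0_+)|^2$, which originates from Lemma~\ref{lm: d tilde f dt}'s bound on $|d\tilde f/dt|$ (the $e^{-2\mu t}$ transient of $|\dot\omega_b|$), not from the Lyapunov cross term --- which, as you yourself note, is exactly zero at $t=0_+$. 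It is this $\beta_1$ contribution, bounded via the steady-state identity $f_b(\omega_b(0))=-\xi_b(0_-)$ and $|\xi_b(0_+)-\xi_b(0_-)|^2\le|\Delta\xi|^2/N$, that produces the entire $\phi_2(\lambda_2+4L^2)/\lambda_2^2$ part of $\alpha^*$. Your plan as written would only recover the $\phi_1$ term; you already have the needed identity in hand, so the fix is to apply it to $\beta_1$ as well, but without that step the stated $\alpha^*$ cannot be reached, and your attribution of the second term to ``damping/cross-term contributions'' of $V(0_+)$ is inconsistent with your own (correct) observation that those vanish.
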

\noindent The proof of Proposition~\ref{coro: start from steady} is provided in Appendix~\ref{prof: coro1}. 

 
This proposition explicitly reveals how the disturbance abrupt changes $\Delta \xi$ lead to a transient coherence error, as shown in the term $\alpha^*|\Delta \xi|^2 e^{-c t}$. This transient error can be effectively suppressed as the network connectivity $\lambda_2$ increases, since $\alpha^* \to 0$ as $\lambda_2 \to \infty$. 
This result, combined with the observations on the limiting coherence error in Theorem~\ref{thm: thm1 general}, implies that for any given tolerance $\varepsilon>0$, 
\[
 \max_{i \in \mathcal{N}}  |\omega_i(t) - \omega_b(t) |\leq \varepsilon,\ \forall t > 0
\]
holds as long as the initial disturbance jump $|\Delta \xi|$ and its subsequent variation rate $C$ are both sufficiently small relative to $\varepsilon$, or alternatively, the connectivity $\lambda_2$ is sufficiently high relative to $\varepsilon$. 

In summary, this subsection demonstrates that in power networks which are tightly-connected or subject to slowly varying disturbances, the frequency responses $\omega_i(t)$ can be well approximated by the trajectory $\omega_b(t)$ of the blended dynamics, under the linearized power flows. This validates our proposed reduced-order approximation and further provides insights into the high level of coherence observed in real-world networks. 

\begin{remark}
Our results complement the frequency-domain coherency analysis in \cite{min2025frequency}, which connects coherence to the frequency composition of disturbance signals but fails to account for disturbances with abrupt changes. Only finite-time bounds are provided for technical reasons. Their bounds suggest that power networks are naturally coherent under sufficiently low-frequency disturbances, which is aligned with our results given sufficiently small $\dot \xi(t)$. 
\end{remark}

\begin{remark}
Theorem \ref{thm: thm1 general} also suggests that heterogeneity is key to the non-vanishing perturbation, leading to the long-term limiting coherence error on the right-hand side of \eqref{eq:thm1 general conclusion 1} or \eqref{eq:thm1 general conclusion 2}. 
From the proof in Appendix~\ref{prof: thm1} (particularly the term $d\tilde f/dt$ in \eqref{eq: Vdot relation with V} that is a proxy for heterogeneity),  
if all nodes are homogeneous in the sense that,
for all $i \in \mathcal{N}$, $f_i(\cdot) = M_i f_o(\cdot)$ and $\xi_i = M_i \xi_o$ for some $f_o(\cdot)$ and $\xi_o$, it follows from the definition of $\tilde f$ in \eqref{eq:def of f tilde linear} that $\tilde f  \equiv 0$, indicating zero heterogeneity. In this case, it can be inferred from \eqref{eq: Vdot relation with V} that $\max_{i \in \mathcal{N}}\limsup_{t \to \infty}|\omega_i(t) - \omega_b(t)| = 0$, i.e., the coherence error will eventually vanish.
\end{remark}

\subsection{Results under Nonlinear Power Flows}\label{subsec:nonlinearflow}

While the above insights into the coherence error are derived with the linearized power flow model, we show in this subsection most of the results generalize to the nonlinear power flow setting, under mild additional conditions.

First, we reformulate the power flow equation \eqref{eq:power flow} as
\begin{align*}
    p_{e,i} &= \sum_{j \in \mathcal{N}_i} k B_{ij}^{0} \sin(\theta_i - \theta_j),
\end{align*}
where $B_{ij}^{0}$ is a baseline line sensitivity and $ k \in \mathbb{R}$ is a uniform scaling factor for all line parameters. This model allows us to analyze the impact of the overall network connection strength on the coherence error via the uniform scaling of a single parameter $k$.

Our analysis is predicated upon the feasibility of the nonlinear power flow equations, which generally requires that the disturbances cannot be arbitrarily large~\cite{weitenberg2018robust}. 
Denote the second-smallest eigenvalue of $L_B$ as $\lambda_2(L_B)$, or simply $\lambda_2^L$. We then impose the following assumption to bound disturbances in our case.

\begin{assumption}\label{assump:feasible}
$\xi(t)$ is bounded for all $t \geq 0$ and there exists some $\rho \in (0,\frac{\pi}{4})$ such that
  \begin{align*}
   \frac{12 L (\max_{i \in \mathcal{N}} M_i) (\sup_{t  \geq 0}|\xi_b(t)| )}{\mu M_b } + 2\sup_{t \geq 0}|\xi(t) |_{\mathcal{E},\infty} 
   \leq k \lambda_2^{L} \cos(2\rho),
\end{align*}
where $|\xi(t)|_{\mathcal{E},\infty}:= \max_{\{i,j  \} \in\mathcal{E}}|\xi_i(t) - \xi_j(t)|$.
\end{assumption}
\noindent Assumption~\ref{assump:feasible} can be readily satisfied by a more straightforward stronger assumption:
\begin{align*}
   \left (\frac{12 L (\max_{i \in \mathcal{N}} M_i) }{\mu M_b } + 4 \right)|\xi_i(t)| \leq k\lambda_2^{L} \cos(2\rho), \ \forall i \in \mathcal{N},\forall t \geq 0,
\end{align*}
i.e., each nodal disturbance is sufficiently small. Now we present the results under nonlinear power flows in the following theorem.
\begin{theorem}
\label{thm3:nonlinear}
Let Assumption~\ref{assump1} and Assumption~\ref{assump:feasible} hold with some $\rho \in (0,\frac{\pi}{4})$.
Then there exists a positive constant $\bar C$ such that for any disturbances satisfying $C := \sup_{t > 0}\max_{i \in \mathcal{N}}|\dot{\xi}_i(t)|/M_i \leq \bar C$, there exists a non-empty set $\mathcal{X}$ 
such that when $(\omega(0),\theta(0)) \in \mathcal{X}$, 
the following always holds: 

\noindent
- There exist positive constants $\alpha$, $\beta$ and $c$ such that for $\forall t > 0$,
        \begin{align}
            \max_{i \in \mathcal{N}}|\omega_i(t)  - \omega_b(t)|^2 \leq \alpha e^{-ct} + \beta C^2,
        \end{align}
where $c$ is strictly increasing in $k$ with $\lim_{k\to\infty} c = \Theta(\mu)$ while $\beta$ is strictly decreasing in $k$ with $\lim_{k \to \infty}\beta = 0$.
\end{theorem}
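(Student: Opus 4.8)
The plan is to reduce the nonlinear-flow problem to the linearized analysis of Theorem~\ref{thm: thm1 general} by confining the trajectory to a region in which the sinusoidal coupling behaves like a linear Laplacian coupling whose connectivity is degraded only by a fixed factor $\cos(2\rho)$. The geometric fact I would exploit is that on any edge with $|\theta_i-\theta_j|\le 2\rho$, a mean-value argument gives $\sin(\theta_i-\theta_j)=g_{ij}(t)\,(\theta_i-\theta_j)$ with $g_{ij}(t)\in[\cos(2\rho),1]$, so that $p_e(\theta)$ can be written as $L_B(\theta)\theta$ for a state-dependent weighted Laplacian $L_B(\theta)$ whose edge weights $kB_{ij}^{0}g_{ij}(t)$ are bounded below by $\cos(2\rho)$ times those of the constant Laplacian $L_B^{k}$ (edge weights $kB_{ij}^{0}$). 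Consequently, for every vector orthogonal to $\mathbb{1}$ in the $M$-weighted inner product, the dissipation quadratic form obeys the uniform lower bound $x^\top L_B(\theta)x\ge \cos(2\rho)\,x^\top L_B^{k}x\ge \lambda_2^{\mathrm{eff}}(k)\,x^\top M x$, where $\lambda_2^{\mathrm{eff}}(k):=\cos(2\rho)\,\lambda_2(M^{-1}L_B^{k})=\Theta(k)$, using $\lambda_2(M^{-1}L_B^{k})=k\,\lambda_2(M^{-1}L_B^{0})$.

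First I would construct the invariant set $\mathcal{X}$ and prove that trajectories started in it keep all edge angle differences inside $[-2\rho,2\rho]$ for all $t>0$. Assumption~\ref{assump:feasible}, with its $\cos(2\rho)$ margin, is precisely what guarantees a quasi-stationary configuration whose angle spread is at most $2\rho$ with strictly positive slack. I would take $\mathcal{X}$ to be a sublevel set of an energy-type function combining the kinetic term $\tfrac12(\omega-\omega_b\mathbb{1})^\top M(\omega-\omega_b\mathbb{1})$ with the power-flow potential, restricted so the angle spread starts well inside the feasible cone (e.g.\ within $[-\rho,\rho]$). Forward invariance would follow by showing that on the boundary of this set the combined restoring action of the coupling and the damping (Assumption~\ref{assump1}) dominates the disturbance forcing, provided $C\le\bar C$ for a suitably small threshold; this is what both nonemptiness of $\mathcal{X}$ and the smallness of $C$ are for. \emph{This invariance step is the main obstacle}: unlike the linear case, where state boundedness is automatic, the sinusoidal coupling loses its restoring property once angle differences exceed $\pi/2$, so the whole argument hinges on never leaving the region where $\cos(\theta_i-\theta_j)$ is bounded below, despite persistent time-varying forcing.

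With the trajectory confined to this region, I would reproduce the Lyapunov estimate behind Theorem~\ref{thm: thm1 general}, except that every occurrence of the constant dissipation form $x^\top L_B x$ is replaced by the state-dependent form $x^\top L_B(\theta(t))x$ and lower-bounded uniformly by $\lambda_2^{\mathrm{eff}}(k)$ as above. Since the same Lyapunov differential inequality is recovered with $\lambda_2$ replaced by $\lambda_2^{\mathrm{eff}}(k)$, the identical integration yields
\[
\max_{i\in\mathcal{N}}|\omega_i(t)-\omega_b(t)|^2\le \alpha e^{-ct}+\beta C^2,
\]
with $c=\dfrac{\mu\,\lambda_2^{\mathrm{eff}}(k)}{4(\lambda_2^{\mathrm{eff}}(k)+4L^2)}$ and $\beta=K\,\dfrac{(\lambda_2^{\mathrm{eff}}(k)+4L^2)^2}{(\lambda_2^{\mathrm{eff}}(k))^3}$, directly inheriting the forms from \eqref{eq:thm1 general conclusion 1}, while $\alpha$ collects the initial energy stored in $\mathcal{X}$ and is finite by construction.

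Finally I would read off the $k$-dependence from these closed forms together with $\lambda_2^{\mathrm{eff}}(k)=\Theta(k)$ (with $\rho$ held at the value furnished by Assumption~\ref{assump:feasible}, noting that larger $k$ only relaxes that assumption). Writing $c=\tfrac{\mu}{4}\cdot\tfrac{\lambda}{\lambda+4L^2}$ with $\lambda=\lambda_2^{\mathrm{eff}}(k)$ increasing in $k$, the map $\lambda\mapsto\lambda/(\lambda+4L^2)$ is strictly increasing, so $c$ is strictly increasing in $k$ and $c\to\mu/4=\Theta(\mu)$ as $k\to\infty$. Likewise $\lambda\mapsto(\lambda+4L^2)^2/\lambda^3$ has derivative $-(\lambda+4L^2)(\lambda+12L^2)/\lambda^4<0$, so $\beta$ is strictly decreasing in $k$ and $\beta\to 0$ as $k\to\infty$, completing the argument.
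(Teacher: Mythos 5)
Your overall strategy coincides with the paper's: confine the trajectory to an invariant region where every edge angle difference stays bounded away from $\pm\pi/2$, so that the sinusoidal coupling acts like a Laplacian with effective algebraic connectivity $\Theta(k)$, and then rerun the machinery of Theorem~\ref{thm: thm1 general}. You also correctly single out forward invariance as the crux and correctly read Assumption~\ref{assump:feasible} as the source of the angle-cohesiveness margin. However, two ingredients that the paper's proof cannot do without are missing or would fail as described.

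First, the error coordinate in the Theorem~\ref{thm: thm1 general} argument is $\delta_\theta=\tilde\theta-\tilde\theta^*(t)$, where $\tilde\theta^*(t)$ is a \emph{moving} quasi-steady angle profile solving the nonlinear algebraic equation \eqref{eq: theta* 1 nonlinear} at each instant. In the linear case this target is explicit ($\Lambda_P^{-1}\tilde f$); in the nonlinear case its existence and uniqueness inside the cone $\mathbb{S}(2\rho)$ is a nontrivial fixed-point statement (the paper's Lemma~\ref{lm:power flow feasi}, via the phase cohesiveness condition of D\"orfler et al.), and it requires a separate a priori bound on $|\omega_b(t)|$ — which is exactly why the paper's set $\mathcal{X}$ carries the additional constraint on $|\mathbb{1}_N^T M\omega(0)|$. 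Your proposal never constructs this target, yet your invariance argument implicitly needs it: a sublevel-set bound controls $|\tilde\theta-\tilde\theta^*|$, and one can only conclude that the physical angle differences remain in the safety cone if $\tilde\theta^*$ itself is known to lie strictly inside it. The forcing term in the $\delta_\theta$ dynamics likewise becomes $[k\nabla^2U(\tilde\theta^*)]^{-1}\,d\tilde f/dt$, whose boundedness again rests on $\tilde\theta^*\in\mathbb{S}(2\rho)$.

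Second, "replacing every occurrence of the constant dissipation form by the state-dependent form" in the Theorem~\ref{thm: thm1 general} computation does not go through verbatim. There, $\Lambda_P$ and $\Lambda_P^{-1}$ appear inside the Lyapunov function itself ($W_{p,c}=\tfrac12\hat\delta_\theta^T\Lambda_P\hat\delta_\theta$ with $\hat\delta_\theta=\delta_\theta+\eta\Lambda_P^{-1}Y^TM^{\frac12}\delta_\omega$), and the proof relies on an exact cancellation between the coupling term produced by $\dot W_k$ and its counterpart from $\dot W_{p,c}$. With a state-dependent matrix you either pick up an uncontrolled $\dot\Lambda_P$ contribution when differentiating $V$, or, if you keep a constant matrix in $V$, you lose the cancellation and are left with a residual cross term of order $k(1-\cos 2\rho)|\delta_\omega||\delta_\theta|$ that grows with $k$. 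The paper avoids this by redesigning the Lyapunov function: the potential term becomes the Bregman divergence $k\bigl(U(\tilde\theta)-U(\tilde\theta^*)-\nabla U(\tilde\theta^*)^T\delta_\theta\bigr)$ and the cross term uses $\nabla U(\tilde\theta)-\nabla U(\tilde\theta^*)$ rather than $\delta_\theta$. This is also why the closed-form constants you write down, which literally inherit \eqref{eq:thm1 general conclusion 1}, are not the ones the paper obtains (the true constants involve $\lambda_N$ and $\sin\rho$); the monotonicity and limiting claims in $k$ do hold for the paper's constants, but they do not follow from your formulas.
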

\noindent The proof of Theorem~\ref{thm3:nonlinear} is provided in Appendix~\ref{prof: thm3} with the explicit expressions for $\bar C$, $\mathcal{X}$, $\alpha$, $\beta$ and $c$.

This theorem establishes an upper bound on the coherence error that shares a similar structure to the bound in Theorem~\ref{thm: thm1 general}, which consists of a constant limiting bound $\beta C^2$ and a decaying bound $\alpha e^{-ct}$. In particular, we now analyze the dependence of these bounds on $k$, instead of on $\lambda_2$ as in the linearized power flow model, to investigate the influence of network connectivity. 
For the limiting bound, the role of network connectivity and disturbance variation rate is both preserved, given the dependence of $\beta$ on $k$.
For the decaying bound, a tightly connected power network still contributes to improving the decaying rate as $c$ increases in $k$, with the bottleneck determined by the nodal damping effect ($\mu$).
Therefore, the key insights indeed generalize here.
Note that a refined limiting bound can be given similarly to \eqref{eq:thm1 general conclusion 2} in Theorem~\ref{thm: thm1 general}, and is not repeated here for brevity. 

An analogue of Proposition~\ref{coro: start from steady} can also be established under the nonlinear power flows, by considering the case where the system starts from a steady state determined by $\xi(0_-)$. Specifically, the conditions satisfied by $(\omega(0),\theta(0))$ are similar to \eqref{eq:steady state condition}, except that the power flows are replaced with the nonlinear counterpart, given as
\begin{subequations}\label{eq:steady state condition nonlinear}
    \begin{align}
          0 & =f_i(\omega_i(0)) + \xi_i(0_-) -k \sum_{j \in \mathcal{N}_i} B_{ij}^0 \sin(\theta_i(0) - \theta_j(0)),\ \forall i \in \mathcal{N},\label{eq:steady 1 nonlinear}\\
  0 &= \omega_i(0) - \omega_j(0),\ \forall \{i,j\} \in \mathcal{E}\label{eq:steady 2 nonlinear}.
    \end{align}
\end{subequations}
A solution \((\omega(0), \theta(0))\) to the equations \eqref{eq:steady state condition nonlinear} exists under Assumption~\ref{assump:feasible}, as shown in Appendix~\ref{sec: proof of steady state nonlinear}. Under this specific initialization, the constant $\alpha$ can be improved with explicit dependence on the parameter $k$ and the disturbance abrupt change $\Delta \xi$.
\begin{proposition}\label{coro: start from steady nonlinear}
Let Assumption~\ref{assump1} and Assumption~\ref{assump:feasible} hold with some $\rho \in (0,\frac{\pi}{4})$. Suppose that $(\omega(0),\theta(0))$ is a solution to the equations \eqref{eq:steady state condition nonlinear}. 
Then there exist positive constants $\bar C$ and $\bar \Delta$ such that for any disturbances satisfying $C := \sup_{t > 0}\max_{i \in \mathcal{N}}|\dot{\xi}_i(t)|/M_i \leq \bar C$ and $|\Delta \xi|:=|\xi(0_+) - \xi(0_-)|\leq \bar \Delta$, 
the following always holds:

\noindent
- There exist positive constants $\alpha^*$, $\beta$ and $c$ such that for $\forall t > 0$,
        \begin{align}
            \max_{i \in \mathcal{N}}|\omega_i(t)  - \omega_b(t)|^2 \leq \alpha^* |\Delta \xi|^2 e^{-ct} + \beta C^2,
        \end{align}
where $\alpha^*$ is strictly decreasing in $k$ with $\lim_{k \to \infty}\alpha^* =  0$, $c$ is strictly increasing in $k$ with $\lim_{k\to\infty} c = \Theta(\mu)$, and $\beta$ is strictly decreasing in $k$ with $\lim_{k \to \infty}\beta = 0$.
\end{proposition}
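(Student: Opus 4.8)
The plan is to build directly on the Lyapunov machinery already developed for Theorem~\ref{thm3:nonlinear}, since the only structural difference between the two statements is the initial condition. Recall that the proof of Theorem~\ref{thm3:nonlinear} produces a Lyapunov function $V$ that dominates the squared coherence error, in the sense that $\max_{i\in\mathcal{N}}|\omega_i-\omega_b|^2 \le \kappa\,V$ for some constant $\kappa>0$, together with a differential inequality $\dot V \le -c\,V + b\,C^2$ that is valid as long as the trajectory stays in the invariant region $\mathcal{X}$. Integrating by the comparison lemma gives $V(t)\le V(0)e^{-ct}+(b/c)C^2$, whence $\max_{i\in\mathcal{N}}|\omega_i-\omega_b|^2 \le \kappa V(0)e^{-ct}+\kappa(b/c)C^2$; here $\beta=\kappa b/c$ and the generic transient constant is $\alpha=\kappa V(0)$, with the monotonicity and limits of $c$ and $\beta$ in $k$ carried over verbatim from Theorem~\ref{thm3:nonlinear}. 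The entire proposition therefore reduces to two points: (i) the steady-state initialization keeps the trajectory inside $\mathcal{X}$ for all $t>0$, and (ii) the generic bound on $V(0)$ is replaced by the sharp estimate $V(0)\le(\alpha^*/\kappa)|\Delta\xi|^2$ with $\alpha^*$ decreasing in $k$.

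First I would exploit the steady-state equations \eqref{eq:steady state condition nonlinear}. Condition \eqref{eq:steady 2 nonlinear} together with connectivity of the graph forces all nodal frequencies to coincide at $t=0$, and since $\omega_b(0)$ is their inertia-weighted average, the common value equals $\omega_b(0)$; hence the kinetic (frequency-deviation) part of $V(0)$ vanishes identically, and $V(0)$ equals its potential-energy part alone. That residual measures the mismatch between the actual angle configuration---an equilibrium of \eqref{eq:steady 1 nonlinear} under the pre-disturbance injection $\xi(0_-)$---and the configuration demanded by the post-disturbance forcing $\xi(0_+)$, which is where $\Delta\xi$ enters. Linearizing \eqref{eq:steady 1 nonlinear} about the operating point, the angle mismatch $\delta\theta$ satisfies to leading order a linear system $kJ\,\delta\theta=-\Delta\xi^{\perp}$, where $J$ is the weighted Laplacian with edge weights $B_{ij}^{0}\cos(\theta_i-\theta_j)$ and $\Delta\xi^{\perp}$ is the disagreement-subspace projection of $\Delta\xi$. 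Assumption~\ref{assump:feasible} keeps every $\cos(\theta_i-\theta_j)$ bounded away from zero through its $\rho$-margin, so $J$ restricted to the disagreement subspace has smallest singular value $\Theta(\lambda_2^{L}\cos(2\rho))$; inverting bounds $|\delta\theta|^2=\Theta(|\Delta\xi|^2/k^2)$, and the potential energy (stiffness of order $k$) by $\Theta(|\Delta\xi|^2/k)$. This yields $V(0)\le(\alpha^*/\kappa)|\Delta\xi|^2$ with $\alpha^*=\Theta(1/k)$ strictly decreasing and $\alpha^*\to0$ as $k\to\infty$, the direct nonlinear analogue of the derivation of $\alpha^*$ in Proposition~\ref{coro: start from steady}.

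Finally I would close the invariance argument that fixes the thresholds $\bar C$ and $\bar\Delta$. Dropping the exponential factor, the Lyapunov inequality together with the bound on $V(0)$ gives the uniform estimate $V(t)\le(\alpha^*/\kappa)|\Delta\xi|^2+(b/c)C^2$ for all $t>0$, provided the trajectory never leaves $\mathcal{X}$. The $\xi(0_-)$-equilibrium lies strictly inside $\mathcal{X}$ by Assumption~\ref{assump:feasible}, so choosing $\bar C$ and $\bar\Delta$ small enough that the right-hand side keeps all angle differences within the $\rho$-margin region defining $\mathcal{X}$ makes the region forward invariant; a standard continuity/bootstrap argument then shows the trajectory cannot reach the boundary without first violating the bound, closing the loop and validating the differential inequality on $[0,\infty)$.

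The main obstacle I anticipate is the nonlinear bound on $V(0)$. In the linear setting of Proposition~\ref{coro: start from steady} the equilibrium angles solve a genuinely linear system, so the $|\Delta\xi|^2$ dependence and the $\lambda_2$ scaling fall out cleanly, whereas here the equilibrium solves the transcendental fixed-point relation \eqref{eq:steady 1 nonlinear}. Controlling the sine nonlinearity uniformly---so that the perturbation estimate relating $\Delta\xi$ to $\delta\theta$ holds with a constant that is genuinely $\Theta(1/k)$ rather than degrading near $\partial\mathcal{X}$---requires the $\cos(2\rho)$ lower bound of Assumption~\ref{assump:feasible}, and this estimate is tightly coupled to the invariance argument. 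It is precisely this interdependence that forces the two smallness thresholds $\bar C$ and $\bar\Delta$ to be introduced together.
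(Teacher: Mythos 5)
Your overall strategy is the paper's: reuse the Lyapunov machinery of Theorem~\ref{thm3:nonlinear}, observe that the steady-state initialization kills the kinetic part of $V(0_+)$ (correct: \eqref{eq:steady 2 nonlinear} forces $\omega_i(0)=\omega_s(0)$ for all $i$, hence $\delta_\omega(0)=0$), bound the remaining potential energy by $\Theta(|\Delta\xi|^2/k)$, and convert membership in $\mathcal{X}$ into a threshold $\bar\Delta$ on the disturbance jump. Two steps in your sketch, however, paper over things the actual proof must handle. First, your estimate of the angle mismatch via linearizing \eqref{eq:steady 1 nonlinear} ``to leading order'' around the operating point is not a bound --- it leaves an uncontrolled remainder from the sine nonlinearity, and the constant you extract from the cosine-weighted Laplacian $J$ would only be valid asymptotically as $\Delta\xi\to 0$. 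The paper avoids linearization entirely: both $\tilde\theta(0)$ and $\tilde\theta^*(0_+)$ satisfy \emph{exact} gradient equations of the form $k\nabla U(\cdot)=Y^TM^{-1/2}(f(\mathbb{1}_N\omega_s(0))+\xi(0_\mp))$ with the same frequency argument (since $\omega_s(0)=\omega_b(0)$), so the difference of gradients is exactly $k^{-1}Y^TM^{-1/2}\Delta\xi$ with no remainder; the global strong monotonicity of $\nabla U$ on $\mathbb{S}(\rho)$ from Lemma~\ref{lm: Hessian} (inequality \eqref{eq: grad U ineq}, with constant $\sin(\rho)\lambda_2$) then converts this into a rigorous bound $|\delta_\theta(0_+)|\le |Y^TM^{-1/2}\Delta\xi|/(k\sin(\rho)\lambda_2)$, and \eqref{eq: Wp ineq} gives the $\Theta(|\Delta\xi|^2/k)$ potential energy. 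You correctly flag this as the main obstacle, but the fix is the exact identity, not a sharper control of the linearization error.

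Second, your differential inequality $\dot V\le -cV+bC^2$ omits the exponentially decaying forcing term in the bound on $|d\tilde f/dt|^2$ from Lemma~\ref{lm: d tilde f dt}, which is proportional to $|f_b(\omega_b(0))+\xi_b(0_+)|^2 e^{-2\mu t}$. This term contributes to the transient constant independently of $V(0_+)$, so setting $\alpha=\kappa V(0)$ alone would not yield the claimed $\alpha^*|\Delta\xi|^2$ prefactor. The repair is again the steady-state identity: $f_b(\omega_b(0))=-\xi_b(0_-)$ gives $|f_b(\omega_b(0))+\xi_b(0_+)|^2\le |\Delta\xi|^2/N$, so this extra contribution is also absorbed into $\alpha^*|\Delta\xi|^2$ with a $k$-decreasing coefficient (through $\varphi_1(k)+\varphi_2(k)$). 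Finally, note that $\mathcal{X}$ has a second membership condition --- the bound on $|\mathbb{1}_N^TM\omega(0)|/(NM_b)$ needed for Lemma~\ref{lm:power flow feasi} --- which translates into a separate constraint $|\Delta\xi|/\sqrt{N}\le\zeta_2$; your invariance discussion only addresses the sublevel-set condition $\bar V(0_+)\le V_c$. None of these is a fatal flaw, but all three must appear for the stated constants to come out.
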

\noindent The proof of Proposition~\ref{coro: start from steady nonlinear} is provided in Appendix~\ref{prof: coro2} with the explicit expressions for $\bar C$, $\bar \Delta$, $\alpha^*$, $\beta$ and $c$. 

In this proposition, the generic constant $\alpha$ is replaced with $\alpha^* |\Delta \xi|^2$, where $\alpha^*$ decreases with $k$ and vanishes as $k \to \infty$. This allows us to confirm that the desirable property
\[
 \max_{i \in \mathcal{N}}  |\omega_i(t) - \omega_b(t) |\leq \varepsilon,\ \forall t > 0
\]
for any tolerance $\varepsilon > 0$ can still be achieved in regimes analogous to those inferred from Proposition~\ref{coro: start from steady}, namely as long as the disturbance jump and its variation are both sufficiently small, or the connectivity (represented by $k$) is sufficiently high.

It should be noted that, unlike the global results obtained under the linear power flow approximation, the bounds in this subsection hold locally. Theorem~\ref{thm3:nonlinear} relies on upper limits on the disturbance magnitude $|\xi_i(t)|$ (in Assumption~\ref{assump:feasible}) and the variation rate $|\dot \xi_i(t)|$, as well as a specified set $\mathcal{X}$ of initial states. 
Specifically, $\mathcal{X}$ defines a neighborhood of the steady-state determined by $\xi(0_+)$, as shown in \eqref{eq: X set def} in the Appendix. This enables us to transform the restriction on initial states to upper limits on disturbance initial jumps $|\Delta \xi|$ in Proposition~\ref{coro: start from steady nonlinear}, as the initialization is specified by $\xi(0_-)$. In reality, since power grids are engineered to operate closely around the nominal frequency and disturbances are typically small relative to the overall system capacity, these local results remain highly relevant for practical operation.



\begin{figure*}[t]
  \centering
  \begin{subfigure}[t]{0.310\textwidth}
    \includegraphics[width=\linewidth]{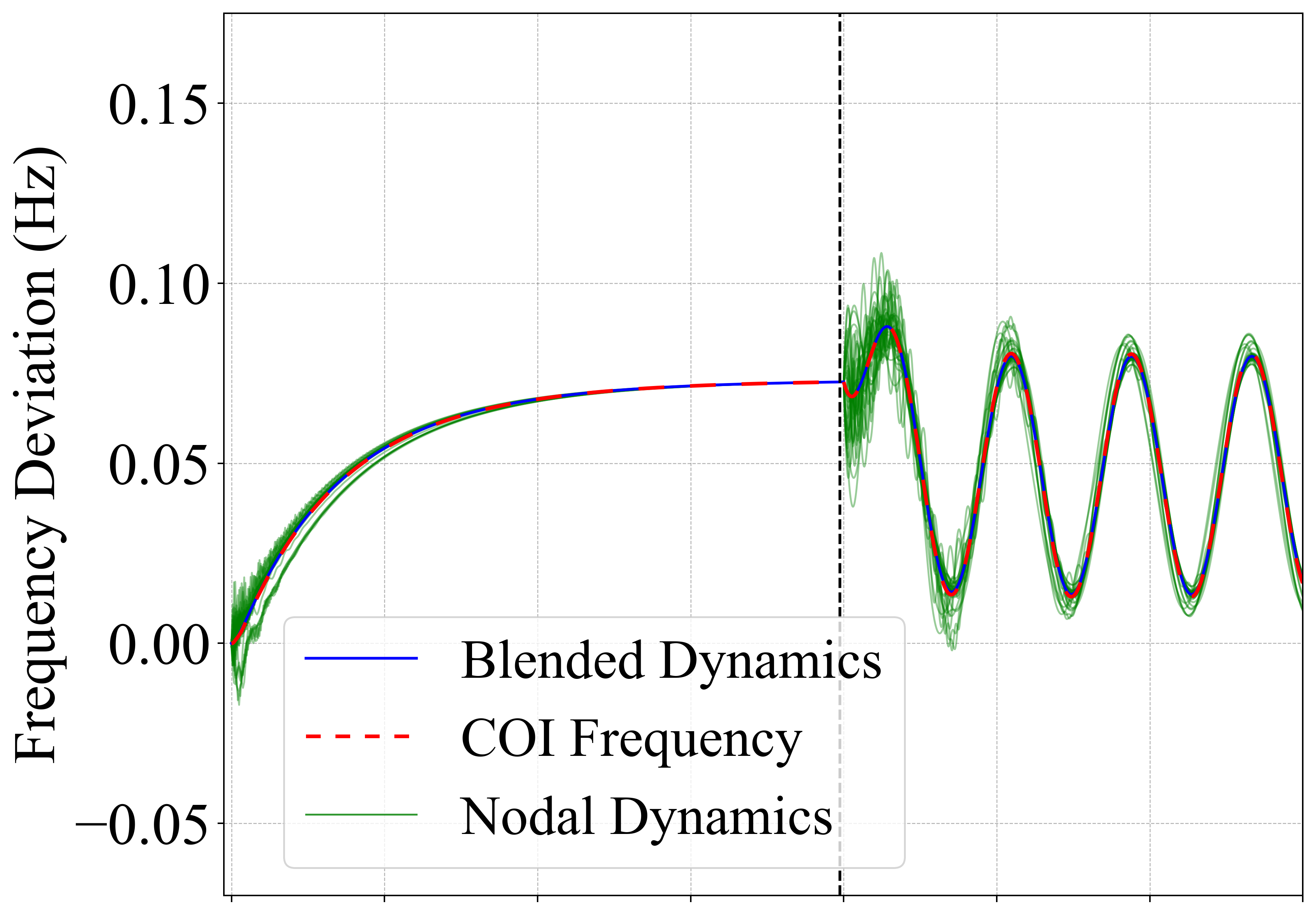}
  \end{subfigure}
  \hspace{1.9 em}
  \begin{subfigure}[t]{0.295\textwidth}
    \includegraphics[width=\linewidth]{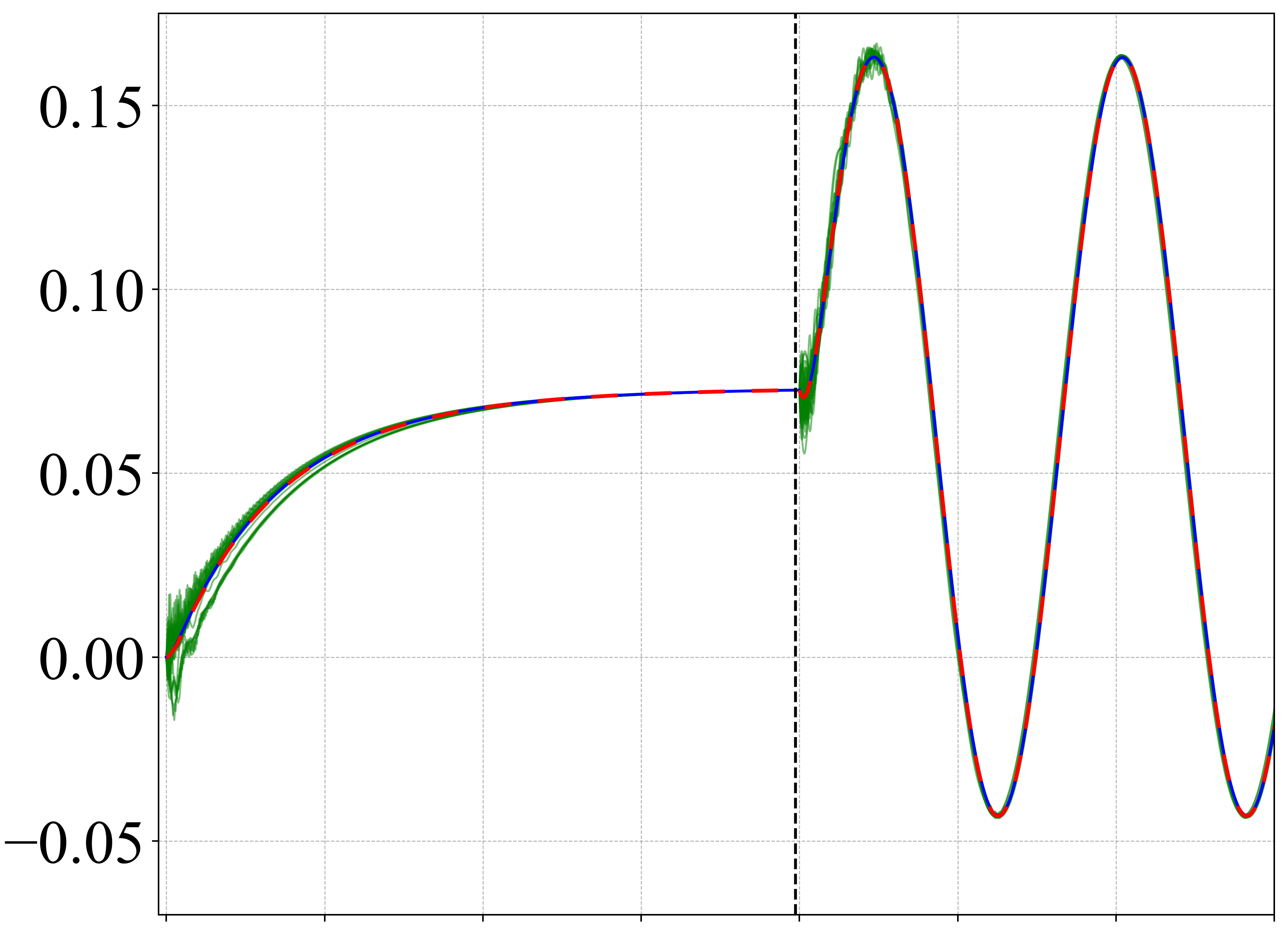}
  \end{subfigure}
  \hspace{1.9 em}
  \begin{subfigure}[t]{0.3\textwidth}
    \includegraphics[width=\linewidth]{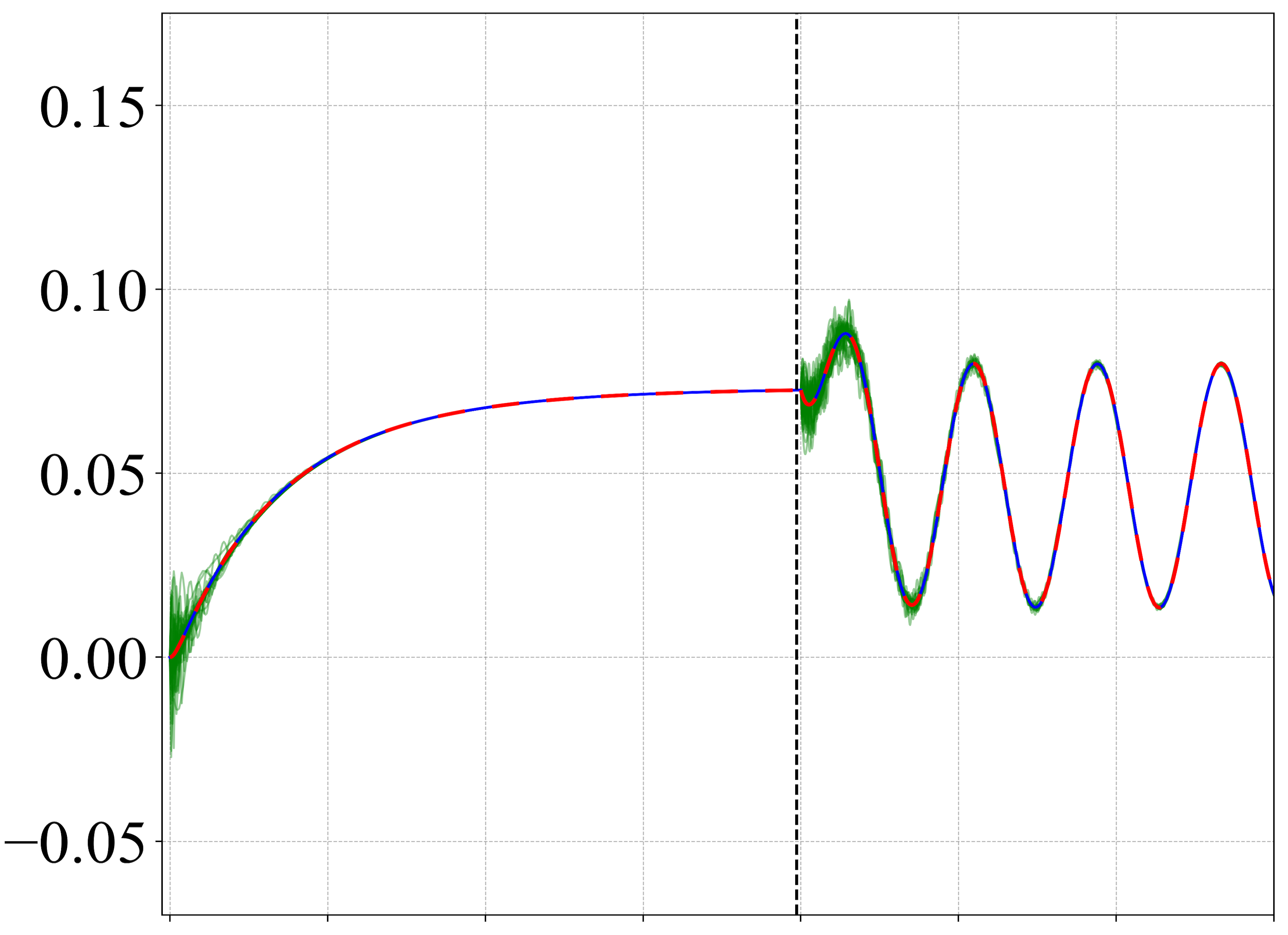}
  \end{subfigure}
  \hspace{5.0 em}
  \\
  \vspace{-0.3em} 
  \begin{subfigure}[t]{0.32\textwidth}
    \includegraphics[width=\linewidth]{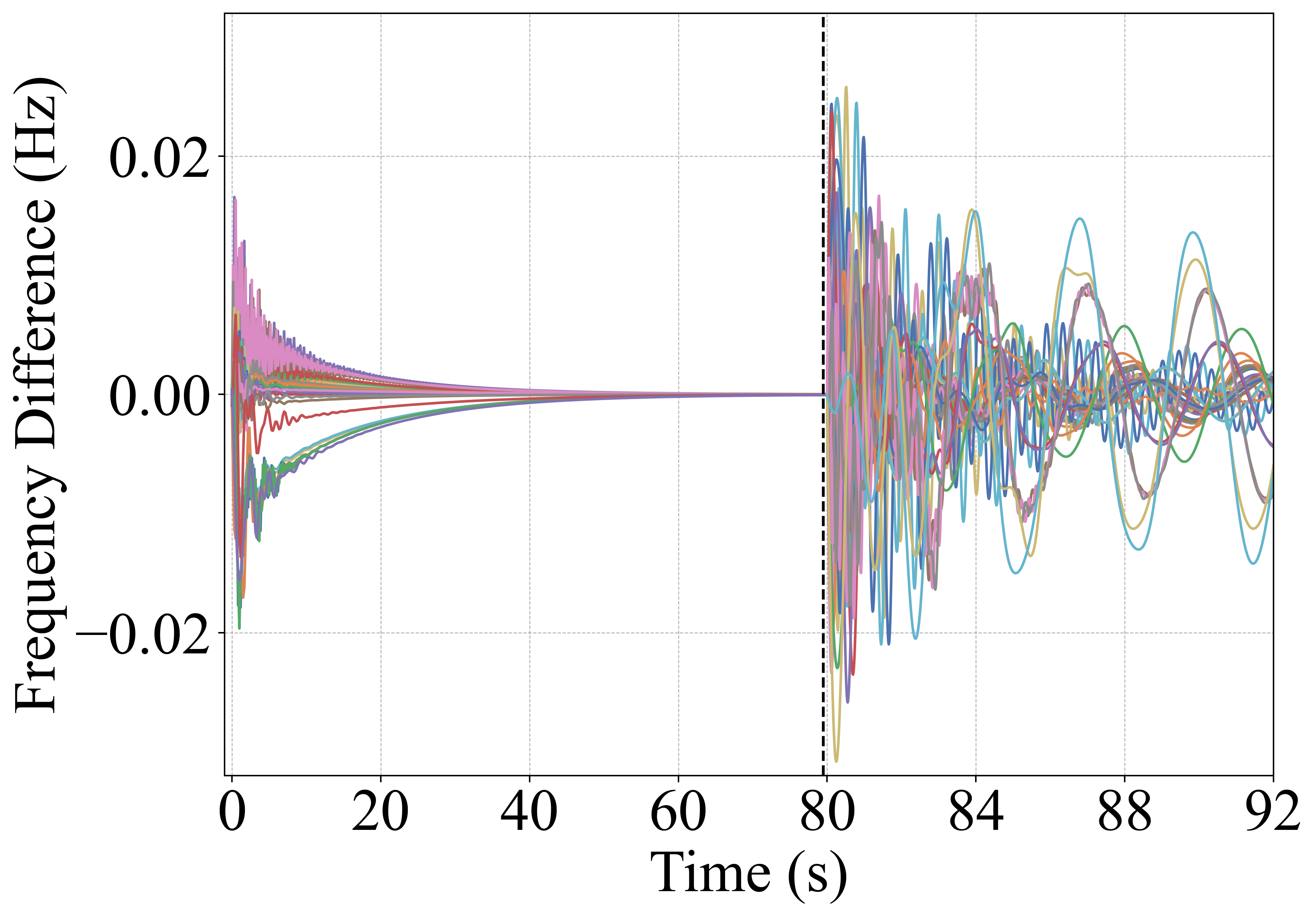}
    \caption{Case 1: Baseline disturbances $\xi_i(t)$; baseline connectivity ($B_{ij}$'s).}
    \label{fig:sub1}
  \end{subfigure}
  \hfill
  \begin{subfigure}[t]{0.307\textwidth}
    \includegraphics[width=\linewidth]{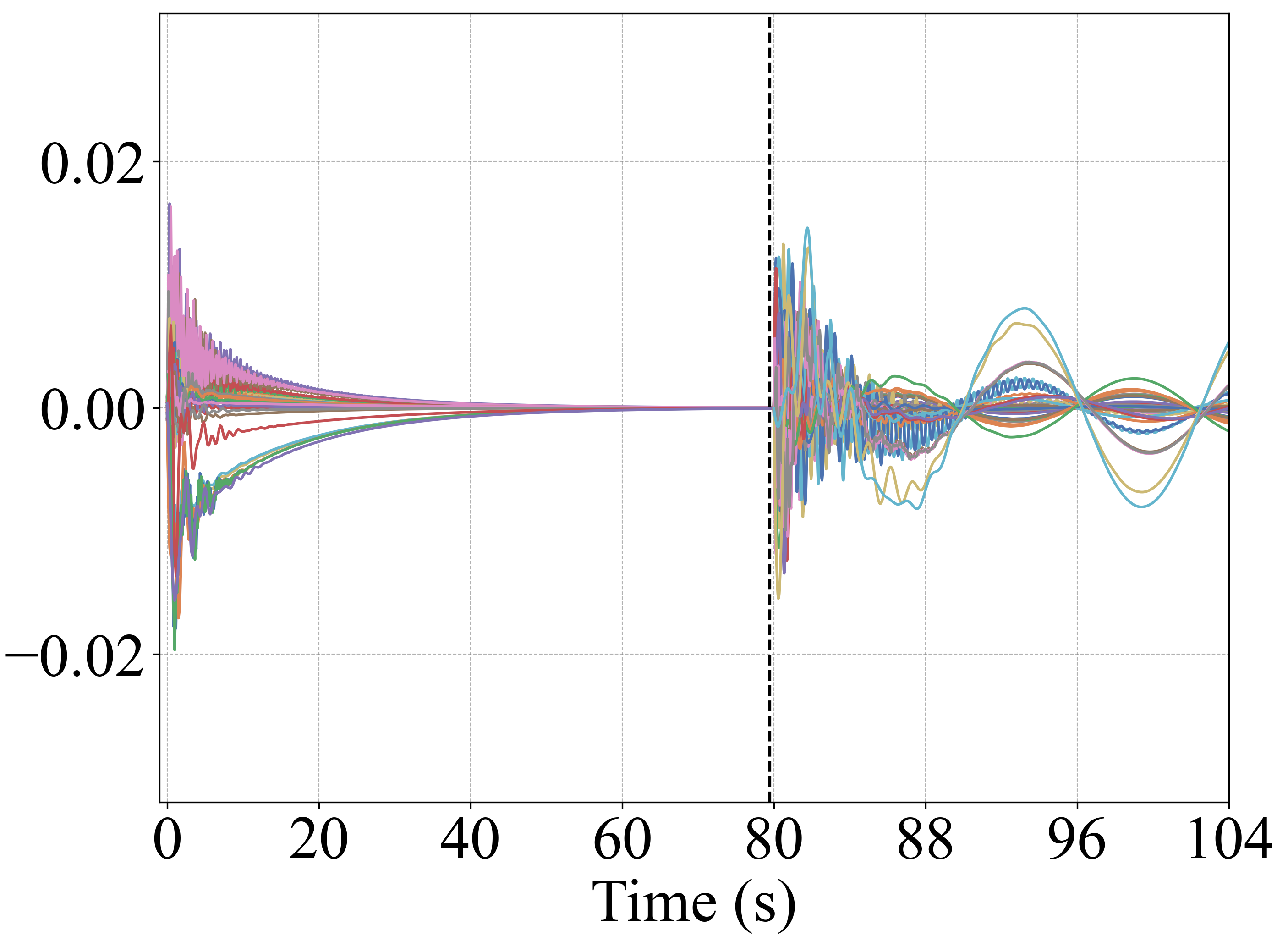}
    \caption{Case 2: Modified disturbances $\xi_i(t)$: $\Delta_i$~reduced twofold, $\Omega$ reduced fourfold and $b_i$ doubled; baseline connectivity ($B_{ij}$'s).}
    \label{fig:sub2}
  \end{subfigure}
  \hfill
  \begin{subfigure}[t]{0.307\textwidth}
    \includegraphics[width=\linewidth]{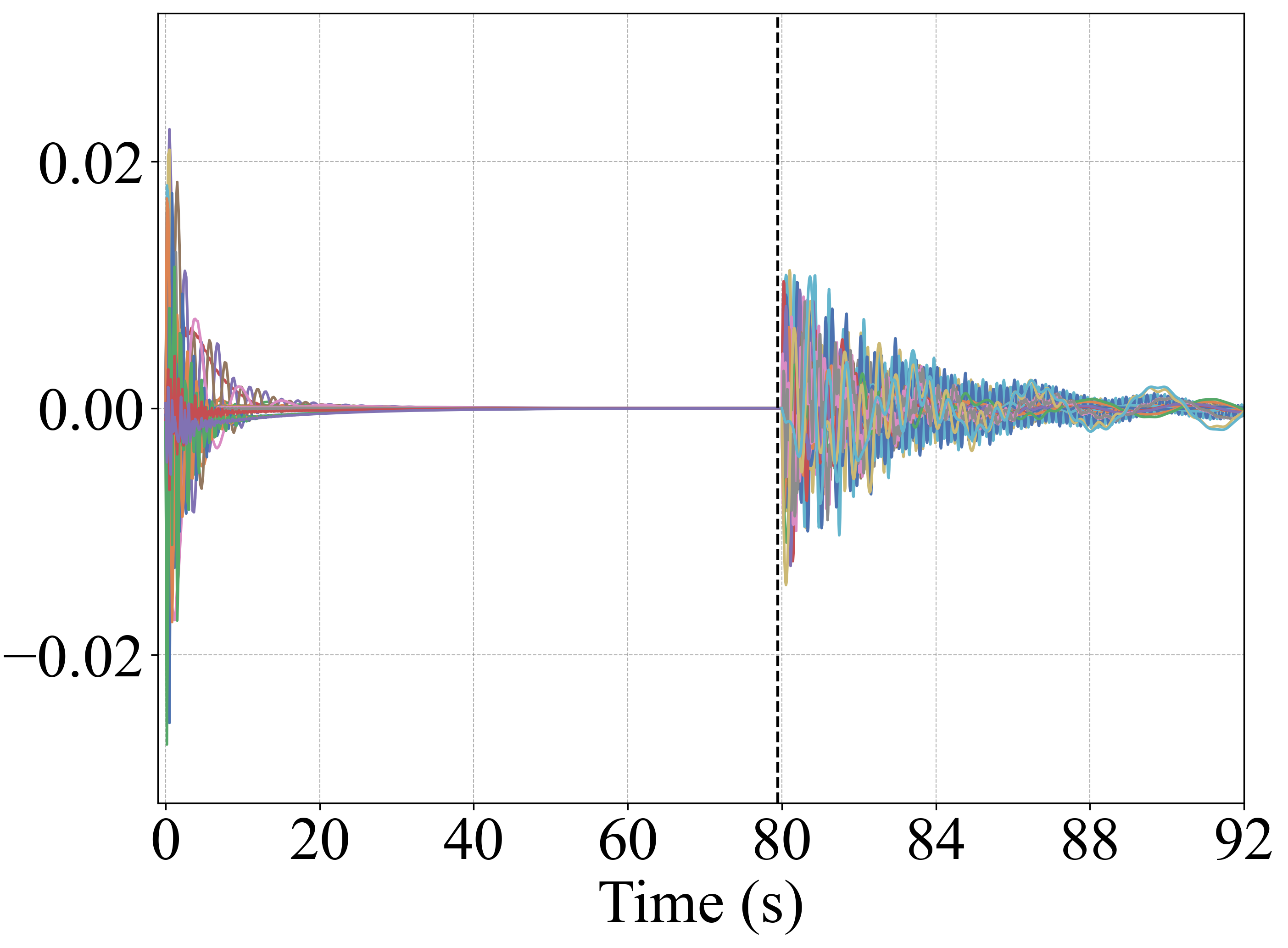}
    \caption{Case 3: Baseline disturbances $\xi_i(t)$; increased connectivity ($6 B_{ij}$'s).}
    \label{fig:sub3}
  \end{subfigure}

  \caption{Frequency responses of Icelandic power grid in three cases, where the network connectivity or the second-stage disturbances are set differently. 
  Upper row: frequency trajectories of all nodal dynamics, the blended dynamics and the COI.
Lower row: trajectories of $\omega_i(t)-\omega_b(t)$ where each line corresponds to a single node $i$. The time axis is non-uniformly spaced before and after $t=80$s for illustration purposes.}
  \label{fig:overall}
\end{figure*}
\section{Simulations}\label{sec:simulations}
In this section, we verify the theoretical analyses by numerical simulations on the Icelandic power grid~\cite{edinburgh_power_systems}. The dynamic model of the grid consists of $118$ nodes, $206$ branches and $35$ generators with heterogeneous parameters. Since our analysis focuses on the generator dynamics, we apply a Kron reduction to the network model to eliminate all non-generator nodes. 
The parameters, including the inertia constants $M_i$, the damping coefficients $D_i$, the network topology, and the line sensitivity coefficients $B_{ij}$, are taken from~\cite{edinburgh_power_systems} following the parameter extraction procedure in~\cite{min2025frequency}. For the nonlinear nodal dynamics in \eqref{eq:agent}, we consider
\[
f_i(\omega_i) = -D_i \omega_i - 0.2D_i \tanh(\omega_i), 
\]
which incorporates a potential saturation effect in the frequency response in addition to the linear damping $D_i\omega_i$. All simulations use the nonlinear power flow equations \eqref{eq:power flow}.

We consider three types of disturbances by setting the following two-stage disturbance profile:
$$
\xi_i(t)=
\begin{cases}
a_i\bigl(1-e^{-r_i t}\bigr),  \ t\in[0,80),&\text{(Stage 1)}
\\
a_i + \Delta_i\, \mathbf{1}_{t\ge 80} + b_i \sin(\Omega_i (t-80)), \ t \geq 80,&\text{(Stage 2)}
\end{cases}
$$
where in the second stage, $\mathbf{1}_{t \geq 80}$ denotes a step change at $t = 80$s, followed by persistent sinusoidal oscillations. Here the parameters are randomly sampled from the uniform distributions. Specifically, $a_i\sim \mathcal{U}(-0.4,0.4)$, $r_i\sim \mathcal{U}(0.05,0.1)$, $\Delta_i\sim \mathcal{U}(-0.04,0.04)$, $b_i\sim \mathcal{U}(0,0.02)$, and $\Omega_i=2.0$. 
They are selected to emulate realistic heterogeneity in nodal disturbances while maintaining the visual clarity of simulation results. The initial states for the first stage are $\omega_i(0),\theta_i(0)\sim \mathcal{U}(-0.001,0.001)$. The second stage begins after the disturbances have become almost constant and the frequencies have settled, thereby approximating a steady-state initialization.

Fig.~\ref{fig:overall} displays the frequency response of the power network model in three different cases, each shown in a separate column. In each case, the upper row shows the frequency trajectories $\omega_i(t)$ of all generators' nodal dynamics, the trajectory $\omega_b(t)$ of the blended dynamics and the trajectory of the COI frequency, defined as $\omega_{\text{COI}} $$= (\sum_{i=1}^{N} M_i \omega_i)/(\sum_{i=1}^{N} M_i)$. The lower row shows the differences $\omega_i(t) - \omega_b(t)$, which reflects the coherence error. By comparing the three cases, we validate how the disturbance properties and network connectivity influence the level of coherence, and how well $\omega_b(t)$ approximates $\omega_i(t)$.


\noindent
\textbf{Case 1 (Fig.~(\ref{fig:sub1}))}: The nodal responses are already coherent due to the naturally high connectivity of Icelandic grid. All the nodal frequency trajectories are close to the blended-dynamics trajectory. In the first stage where $\xi_i(t) = a_i(1-e^{-r_it})$, all the frequencies eventually achieve exact synchronization with vanishing coherence error, which is consistent with the bound \eqref{eq:thm1 general conclusion 2} in Theorem~\ref{thm: thm1 general} when $C_{\lim} = 0$. This synchronized state is disrupted at $t=80$s by the disturbance jumps $\Delta_i$, which temporarily drive the nodal frequencies away from the blended dynamics and induce a large transient coherence error. This error decays with time but does not vanish eventually due to the persistent oscillations in the disturbances.  

\noindent
\textbf{Case 2 (Fig.~(\ref{fig:sub2}))}: We modify the second-stage disturbances to demonstrate their influence on coherence. For the abrupt changes, we reduce $\Delta_i$'s by half. For the sinusoidal disturbances, we double their magnitudes $b_i$'s while reducing their frequency $\Omega$ by a factor of four. 
It can be observed that although all the nodal responses have a larger oscillation magnitude compared with Case 1, their entire trajectories become more closely aligned, and the coherence error becomes smaller for all $t > 80$s---due to reduced $\Delta_i$'s initially and reduced $\dot \xi(t)$ after a while. This verifies the bound in Proposition~\ref{coro: start from steady nonlinear} and further highlights that the level of coherence is more sensitive to the time-variation rate than the magnitude of the disturbances.

\noindent
\textbf{Case 3 (Fig.~(\ref{fig:sub3}))}: We show the effect of higher network connectivity by scaling up all the edge weights by a factor of six as compared with Case 1. In the first stage, the decay rate of the coherence error becomes significantly faster, 
aligned with Theorem~\ref{thm: thm1 general} for general initial conditions. In the second stage, the coherence error is reduced 
over the entire time period, as also suggested by Proposition~\ref{coro: start from steady nonlinear}, and the frequency synchronization of all the nodes is remarkably regular.

Finally, these results validate that the blended dynamics is a good approximation for the frequency responses of the full system. Moreover, as a simple first-order dynamics, it also closely approximates the COI trajectory, an indicator typically used for frequency response assessment.



\section{Conclusion}\label{sec:conclusion}
In this paper, we develop a time-domain analysis for the coherent behavior of swing dynamics in heterogeneous nonlinear power networks subject to persistent time-varying disturbances. By extending the blended dynamics approach, we approximate the nodal frequency responses of a coherent power system by a specific trajectory governed by the weighted average of (possibly nonlinear) nodal dynamics, which highlights how heterogeneous individual nodes jointly shape their collective behavior. 

We analyze the differences between nodal frequency trajectories and this representative trajectory by establishing explicit upper bounds on the coherence error. 
Specifically, we identify two key factors governing coherence. On the one hand, slow time-variation rates of disturbances, including both abrupt jumps and smooth changes, are shown to be crucial for maintaining a smaller coherence error. On the other hand, high network connectivity, as a powerful synchronizing force, simultaneously accelerates the transient decay of the error and reduces its long-term limit. Moreover, for a system perturbed from a steady state, either of these factors is sufficient to ensure the error remains small for all $t>0$. 
Similar results are observed under both the linearized and nonlinear power flow models---the former is instrumental in analytical insights while the latter incorporates more practical considerations. 
These findings offer a novel perspective of the mechanisms that underpin power system coherency and provide useful guidelines for further control design.
\begin{acks}
This work was supported by NSFC through grant 72431001.
\end{acks}

\bibliographystyle{ACM-Reference-Format}
\bibliography{main}

\appendix
\section{Proof of Theorem~\ref{thm: thm1 general}}\label{prof: thm1}





The proof consists of two major parts: First, we conduct some coordinate transformations to reformulate the system dynamics; Second, we construct a Lyapunov function in the transformed coordinates to establish the decay properties of the coherence error.


For convenience of notations, substitute \eqref{eq:power flow linear} into \eqref{eq:agent} and rewrite the swing dynamics \eqref{eq:nonlinear dynamics} more compactly as \begin{subequations}\label{eq:compact form}
        \begin{align}
           \dot \theta &= \omega,\\
          \dot  \omega & = M^{-1}(f(\omega) + \xi - L_B\theta),\label{eq: compact form wdot}
        \end{align}
   \end{subequations}
where $f(\omega) := [f_1(\omega_1),\dots,f_N(\omega_N)]^T$ is a vector-valued function. 

\subsection{Coordinate Transformation}
We begin with two steps of linear coordinate transformation to the system \eqref{eq:compact form}. 

First, since the power flow term $L_B \theta$ in \eqref{eq: compact form wdot} depends only on the phase angle differences, we make the following change of coordinates to separate the (weighted) average component of the angles from the disagreement component: 
\begin{align*}
    \bar \theta :=  \mathbb{1}_N^T \frac{M}{NM_b} \theta,\quad \
             \tilde \theta := 
        Y^T M^{\frac 1 2} \theta,
\end{align*}
where $Y\in \mathbb{R}^{N \times (N-1)}$ is chosen such that the columns of $Y$ form an orthonormal basis of the null space of $\mathbb{1}_N^T M^{\frac{1}{2}}$. This transformation can be written compactly as
\begin{align}\label{eq:coordinate transform def}
    \begin{bmatrix} \bar\theta \\ \tilde\theta \end{bmatrix} = \underbrace{ \begin{bmatrix} \mathbb{1}_N^T\frac{ M}{NM_b} \\ Y^T M^{\frac 1 2} \end{bmatrix}}_{:=P} \theta.
\end{align}
Then the original angle variables can be recovered by the inverse transformation:
\begin{align}\label{eq:coordinate transform def inverse}
    \theta = \underbrace{\begin{bmatrix}
        \mathbb{1}_N & M^{-\frac 1 2}Y
    \end{bmatrix}}_{:= Q}\begin{bmatrix} \bar\theta \\ \tilde\theta \end{bmatrix},
\end{align}
where the fact that $PQ = I_N$ follows from $\mathbb{1}_N ^T M\mathbb{1}_N  = NM_b$, $Y^T Y = I_{N-1}$, and $Y^T M^{\frac{1}{2}} \mathbb{1}_N = 0$.
Under this transformation, the power flow term $L_B \theta = L_B M^{-\frac 1 2}Y \tilde \theta$ does not depend on $\bar \theta$. 
Therefore, the system dynamics \eqref{eq:compact form} can be rewritten as
\begin{subequations}\label{eq: transformed system 1}
    \begin{align}
    \dot \omega &= M^{-1}(f(\omega) + \xi - L_B M^{-\frac 1 2}Y \tilde \theta),
    \label{eq: omega dot after trans}\\
    \dot{\tilde \theta} & = Y^T M^{\frac 1 2}\omega, \label{eq:tilde theta dot}
    \end{align}
\end{subequations}
with the dynamics of $\bar \theta$ omitted.


The second step of the coordinate transformation is to define the error system, which measures the distance between the state $\omega$, $\tilde \theta$ and their anticipated limiting behavior, respectively. 
Intuitively, if these states converge, we would expect that: 
(1) $\omega$ approximately converges to $\mathbb{1}_N \omega_b$; 
(2) $\tilde \theta$ approximately converges to some $\tilde \theta^*$ which lets the right-hand side of \eqref{eq: omega dot after trans} coincide with $\mathbb{1}_N \dot \omega_b$, i.e.  
\begin{equation}\label{eq: theta* 1}
    \begin{aligned}
         M^{-1}(f(\mathbb{1}_N \omega_b)+ \xi- L_B M^{-\frac 1 2}Y \tilde \theta^*)&= \mathbb{1}_N\dot \omega_b\\ 
            &= \frac{\mathbb{1}_N \mathbb{1}_N^T}{NM_b} (f(\mathbb{1}_N \omega_b)+ \xi ).
    \end{aligned}
\end{equation}
To solve for $\tilde \theta^*$ from \eqref{eq: theta* 1}, we left-multiply this equation by the transformation matrix $P$ to address the average and the disagreement component respectively. First, after left-multiplying \eqref{eq: theta* 1} by $\mathbb{1}_N^T M$ (omitting $1/(NM_b)$), the left-hand side becomes $\mathbb{1}_N^T(f(\mathbb{1}_N \omega_b)+ \xi)$ using $\mathbb{1}_N^T L_B = 0$, which is always identical to the right-hand side, since $\mathbb{1}_N^T  M \mathbb{1}_N \mathbb{1}_N^T= N M_b  \mathbb{1}_N^T$. So it remains to solve \eqref{eq: theta* 1} by left-multiplying with $Y^T M^{\frac{1}{2}}$, which yields
\begin{equation*}
    \begin{aligned}
    Y^T M^{-\frac 1 2}(f(\mathbb{1}_N \omega_b)+ \xi)-   Y^T M^{-\frac 1 2}L_B M^{-\frac 1 2}Y \tilde \theta^* = Y^T M^{\frac{1}{2}}\mathbb{1}_N \dot \omega_b = 0.
    \end{aligned}
\end{equation*}
For brevity, define the shorthands 
\begin{align}\label{eq:def of f tilde linear}
    \tilde f:=  Y^T M^{-\frac 1 2}(f(\mathbb{1}_N \omega_b)+ \xi),\quad \Lambda_P := Y^T M^{-\frac 1 2}L_B M^{-\frac 1 2}Y.
\end{align}
Then the above equation on $\tilde \theta^*$ is written compactly as
\begin{equation}\label{eq: theta* 3}
    \begin{aligned}
        \tilde f - \Lambda_P \tilde \theta^* = 0.
    \end{aligned}
\end{equation}
It can be checked that $\Lambda_P \in \mathbb{R}^{(N-1) \times (N-1)}$ is positive definite and its minimal singular value equals the second smallest eigenvalue of $M^{-\frac 1 2}L_B M^{-\frac 1 2}$, i.e., $\sigma_m(\Lambda_P) = \lambda_2 > 0$, since the network graph is connected. 
Besides, the matrix $L_B M^{-\frac 1 2}Y$ in the power flow term in \eqref{eq: omega dot after trans} can be rewritten in terms of $\Lambda_P$. Observe that
\begin{equation}\label{eq:LB and LambdaP trans}
    \begin{aligned}
         \Lambda_P Y^T M^{\frac 1 2} &= Y^T M^{-\frac 1 2}L_B (M^{-\frac 1 2}YY^T M^{\frac 1 2})\\
    &= Y^T M^{-\frac 1 2}L_B \left(I_N -  \mathbb{1}_N\mathbb{1}_N^T\frac{ M}{NM_b}\right)\\
&= Y^T M^{-\frac 1 2}L_B,
    \end{aligned}
\end{equation}
where the second equality follows from the expansion of the identity $QP = I_N$ and the last equality follows from $L_B \mathbb{1}_N = 0$. Thus we have $L_B M^{-\frac 1 2}Y = M^{\frac 1 2}Y \Lambda_P$.

With this in mind, we formally define the error variables
\begin{subequations}
    \begin{align}
          \delta_\omega &:= \omega-\mathbb{1}_N \omega_b, \\
  \delta_\theta &:= \tilde{\theta} - \tilde{\theta}^* = \tilde \theta - \Lambda_P^{-1}\tilde f.\label{eq: def etheta}
    \end{align}
\end{subequations}
Then, the system \eqref{eq: transformed system 1} is rewritten based on the error variables $\delta_\omega$ and $\delta_\theta$. The dynamics of $\delta_\omega$ is given as
\begin{equation}
    \begin{aligned}\label{eq:err delta w}
     \dot \delta_\omega &= M^{-1}(f(\omega) + \xi - M^{\frac 1 2}Y \Lambda_P \tilde \theta) - \mathbb{1}_N \dot \omega_b
    \\
    &= M^{-1}(\underbrace{f(\omega) - f(\mathbb{1}_N \omega_b)}_{:= \Delta f}) + M^{-1}  (f(\mathbb{1}_N \omega_b)+ \xi ) - \mathbb{1}_N \dot \omega_b \\
    & - M^{-1}(M^{\frac 1 2}Y \Lambda_P \delta_\theta + M^{\frac 1 2}Y \Lambda_P\tilde \theta^*)
    \\
    &=  M^{-1}\Delta f - M^{-\frac 1 2} Y \Lambda_P \delta_\theta.
    \end{aligned}
\end{equation}
Here in the first equality we replace the matrix $L_B M^{-\frac 1 2}Y$ in \eqref{eq: omega dot after trans} with $M^{\frac 1 2}Y \Lambda_P$. In the lase equality, some terms are canceled out by incorporating the definition of $\tilde \theta^*$ in \eqref{eq: theta* 1}, i.e., $M^{-1}(f(\mathbb{1}_N \omega_b)+ \xi)- M^{-1}M^{\frac 1 2}Y \Lambda_P\tilde \theta^*= \mathbb{1}_N\dot \omega_b$. 

The dynamics of $\delta_\theta$ is given as
\begin{equation}\label{eq:err e theta}
    \begin{aligned}
         \dot \delta_\theta
      &= Y^T M^{\frac 1 2} \omega - \Lambda_P^{-1} \frac{d\tilde f}{dt}=  Y^T M^{\frac 1 2}\delta_\omega - \Lambda_P^{-1} \frac{d\tilde f}{dt}
    \end{aligned}
\end{equation}
using $Y^T M^{\frac 1 2}\mathbb{1}_N\omega_b = 0$. Here $\frac{d\tilde f}{dt}$ is the time derivative of $\tilde f$ along the blended dynamics \eqref{eq: blended dyn}.

In the following analysis, we will use an upper bound on $|d\tilde f/dt|$, which is presented in the lemma below.
\begin{lemma}\label{lm: d tilde f dt}
Let Assumption \ref{assump1} hold. 
If $C := \max_{i \in \mathcal{N}} \sup_{t > 0}|\dot{\xi}_i(t)|/M_i$ is finite, then for $\forall t > 0$,
    \begin{align*}
          \left|\frac{d\tilde f}{dt}\right|^2 \leq 2 N M_b C^2 (1+\frac{L}{\mu})^2 + \frac{2N L^2}{M_b}|f_b(\omega_b(0))+ \xi_b(0_+)|^2 e^{-2\mu t}.
    \end{align*}
If $C_{\lim} := \max_{i \in \mathcal{N}}\limsup_{t \to \infty}|\dot{\xi}_i(t)|/M_i$ is finite, then
\begin{align*}
        \limsup_{t \to \infty} \left|\frac{d \tilde f}{dt}\right|^2 
        \leq 2 N M_b C_{\lim}^2(1+\frac{L}{\mu})^2.
\end{align*}
\end{lemma}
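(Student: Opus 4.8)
The plan is to differentiate $\tilde f$ explicitly along the blended dynamics \eqref{eq: blended dyn}, collapse the resulting vector norm to an inertia-weighted sum of per-node contributions, and then control the only nontrivial factor---the blended-dynamics velocity $\dot\omega_b$---by a scalar contraction estimate. Recall from \eqref{eq:def of f tilde linear} that $\tilde f = Y^T M^{-\frac12}\bigl(f(\mathbb{1}_N\omega_b)+\xi\bigr)$, and the $i$-th entry of $f(\mathbb{1}_N\omega_b)+\xi$ is $f_i(\omega_b)+\xi_i$. Differentiating in time (with $\omega_b$ scalar) yields $\tfrac{d\tilde f}{dt} = Y^T M^{-\frac12}v$, where $v$ has entries $v_i = f_i'(\omega_b)\,\dot\omega_b + \dot\xi_i$. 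Since the columns of $Y$ are orthonormal, $|Y^T|=1$, so $\bigl|\tfrac{d\tilde f}{dt}\bigr|^2 \le |M^{-\frac12}v|^2 = \sum_i v_i^2/M_i$. Assumption~\ref{assump1} gives $|f_i'(\omega_b)|\le LM_i$, and the definition of $C$ gives $|\dot\xi_i|\le CM_i$, so $|v_i|\le M_i\bigl(L|\dot\omega_b|+C\bigr)$; summing and using $\sum_i M_i = NM_b$ produces the clean intermediate estimate $\bigl|\tfrac{d\tilde f}{dt}\bigr|^2 \le NM_b\bigl(L|\dot\omega_b|+C\bigr)^2$.

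The crux is to bound the scalar $g := M_b\dot\omega_b = f_b(\omega_b)+\xi_b$. Differentiating and substituting $\dot\omega_b = g/M_b$ gives the scalar linear ODE $\dot g = \tfrac{f_b'(\omega_b)}{M_b}\,g + \dot\xi_b$. The key observation is that averaging the per-node slope bounds of Assumption~\ref{assump1} survives: since each $f_i'(\omega_b)\in[-LM_i,-\mu M_i]$, we get $f_b'(\omega_b)=\tfrac1N\sum_i f_i'(\omega_b)\in[-LM_b,-\mu M_b]$, hence the coefficient $f_b'(\omega_b)/M_b\in[-L,-\mu]$ is uniformly bounded above by $-\mu<0$. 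Variation of constants, together with $\int_s^t \tfrac{f_b'}{M_b}\,dr \le -\mu(t-s)$ and $|\dot\xi_b|\le\tfrac1N\sum_i|\dot\xi_i|\le CM_b$, then yields, for all $t>0$,
\[
|g(t)| \le e^{-\mu t}\,|g(0_+)| + \frac{CM_b}{\mu}, \qquad g(0_+) = f_b(\omega_b(0))+\xi_b(0_+),
\]
where the right limit $\xi_b(0_+)$ accounts for the admissible jump of $\xi$ at $t=0$.

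Combining the two estimates is then bookkeeping: from $L|\dot\omega_b| = L|g|/M_b$ we obtain $L|\dot\omega_b|+C \le \tfrac{L}{M_b}e^{-\mu t}|g(0_+)| + C\bigl(1+\tfrac{L}{\mu}\bigr)$, and the elementary inequality $(a+b)^2\le 2a^2+2b^2$ splits the square into the persistent term $2NM_bC^2(1+\tfrac L\mu)^2$ and the transient term $\tfrac{2NL^2}{M_b}|g(0_+)|^2 e^{-2\mu t}$, which is exactly the first claim. For the $\limsup$ statement I would rerun the same argument with $C_{\lim}$ in place of $C$: as $t\to\infty$ the transient term vanishes, and an $\varepsilon$-argument (choose $T$ with $|\dot\xi_b(t)|\le(C_{\lim}+\varepsilon)M_b$ for $t>T$, then take $\limsup$ of the variation-of-constants bound) gives $\limsup_t|g|\le C_{\lim}M_b/\mu$, yielding the stated asymptotic bound (the factor $2$ there being a loose but valid consequence of the same splitting).

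The main obstacle is precisely the scalar contraction estimate on $g$: one must recognize that the blended-dynamics velocity obeys a scalar linear ODE whose coefficient remains uniformly negative \emph{after} the inertia-weighted averaging of the heterogeneous per-node slopes, which is what ties Assumption~\ref{assump1} to the exponential decay. The only other delicate points are the correct treatment of the initial jump at $t=0_+$ and the interchange of $\limsup$ with the convolution integral; the reduction via $|Y^T|=1$ and $\sum_i M_i = NM_b$ is routine.
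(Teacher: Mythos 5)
Your proposal is correct and follows essentially the same route as the paper's proof: differentiate $\tilde f$ along the blended dynamics, reduce to $\sqrt{NM_b}\,(L|\dot\omega_b|+C)$ via the orthonormality of $Y$ and the inertia-weighted norm, control $\dot\omega_b$ through a scalar contraction estimate with coefficient bounded by $-\mu$ (the paper applies the comparison lemma to $y=|\dot\omega_b|$ where you use variation of constants on $g=M_b\dot\omega_b$, which is the same calculation), and finish with $(a+b)^2\le 2a^2+2b^2$. Your observation that the factor $2$ in the $\limsup$ bound is a loose relaxation also matches the paper, which states it squares and relaxes by a factor of $2$.
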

\begin{proof}
The time derivative of $\tilde{f}$ along \eqref{eq: blended dyn} is
\begin{align}\label{eq：tilde f dot expression}
  \frac{d\tilde f}{dt} = Y^T M^{-\frac 1 2}\dot{\xi} + Y^T M^{-\frac 1 2}\frac{\partial f(\mathbb{1}_N \omega_b)}{\partial \omega_b}\dot{\omega}_b.
\end{align}
For the first term in \eqref{eq：tilde f dot expression}, using the element-wise bound $M_i^{-1}|\dot \xi_i|\leq C,\ \forall t>0$ gives
\begin{equation*}
    \begin{aligned}
      |Y^T M^{-\frac 1 2}\dot \xi| &\leq |Y||M^{\frac 1 2}M^{-1}\dot \xi|
        \leq \sqrt{N M_b}C, \ \forall t >0.
    \end{aligned}
\end{equation*}
Similarly, using $\limsup_{t \to \infty}M_i^{-1}|\dot \xi_i|\leq C_{\lim}$ yields
\begin{equation*}
    \begin{aligned}
      \limsup_{t \to \infty}|Y^T M^{-\frac 1 2}\dot \xi|   \leq \sqrt{N M_b}C_{\lim}.
    \end{aligned}
\end{equation*}

For the second term in \eqref{eq：tilde f dot expression}, using $M_i^{-1}|f_i^\prime(\omega_b)| \leq L$ from Assumption \ref{assump1} leads to%
\begin{equation}\label{eq:lm ftilde term-2}
    \begin{aligned}
        | Y^T M^{-\frac 1 2}\frac{\partial f(\mathbb{1}_N\omega_b)}{\partial \omega_b}\dot \omega_b|
       &\leq \sqrt{N M_b}L |\dot \omega_b|, \ \forall t>0,\\
       \limsup_{t \to \infty}  | Y^T M^{-\frac 1 2}\frac{\partial f(\mathbb{1}_N\omega_b)}{\partial \omega_b}\dot \omega_b|& \leq \sqrt{N M_b}L  \limsup_{t \to \infty} |\dot \omega_b|.
    \end{aligned}
\end{equation}
Now it remains to derive an upper bound on $|\dot \omega_b|$. Taking the time derivative of \eqref{eq: blended dyn},
\begin{equation*}
    M_b \ddot \omega_b = \dot{\xi}_b + f_b^\prime(\omega_b) \dot \omega_b.
\end{equation*} Define $y(t) := |\dot \omega_b(t)|$, and its dynamics is given by
\begin{equation}\label{eq:y dot}
    \begin{aligned}
       M_b \dot y
         &= \operatorname{sign}(\dot \omega_b)( \dot{\xi}_b +f_b^\prime(\omega_b) \dot \omega_b ) \\
         &\leq \left|  \dot{\xi}_b \right| + f_b^\prime(\omega_b) y \leq \left|  \dot{\xi}_b \right| -M_b\mu y, \ \   \text{almost everywhere},
    \end{aligned}
\end{equation}
where we use 
$f_b^\prime(\omega_b) \leq - M_b \mu$ obtained by combining Assumption \ref{assump1} and the definition of $f_b$.
Applying the comparison lemma to \eqref{eq:y dot} yields $y(t)=|\dot \omega_b(t)| \leq y_0(t),\forall t > 0$, where $y_0(t)$ is the solution to $M_b\dot{y}_0  = - M_b\mu y_0+ |\dot{\bar{ \xi}}|$ with $y_0(0_+) = y(0_+) = M_b^{-1}|f_b(\omega_b(0))+ \xi_b(0_+)|$. 
Since $|\dot {\xi}_b| \leq M_b C,\ \forall t>0$ and $\limsup_{t \to \infty}|\dot \xi_b| \leq M_b C_{\lim}$, we have
\begin{align*}
    |\dot \omega_b(t)| \leq y_0(t) \leq \frac{|f_b(\omega_b(0))+ \xi_b(0_+)|}{M_b} e^{-\mu t} + \frac{C}{\mu},
\end{align*}
where we drop the negative term $-(C/\mu)e^{-\mu t}$, and
\begin{align*}
      \limsup_{t\to \infty}   |\dot \omega_b(t)| \leq \frac{C_{\lim}}{\mu}.
\end{align*}
Substitute these upper bound on $|\dot \omega_b|$ back into \eqref{eq:lm ftilde term-2}, and we can derive the overall estimate
\begin{subequations}
    \begin{align}
        |\frac{d\tilde f}{dt}| &\leq \sqrt{N M_b}C(1 + \frac{L}{\mu} )+   \sqrt{\frac{N L^2}{M_b}}|f_b(\omega_b(0))+ \xi_b(0_+)|e^{-\mu t},\label{eq:lemma conclus 1}
        \\
\limsup_{t \to \infty}|\frac{d\tilde f}{dt}|  &\leq \sqrt{N M_b}C_{\lim}(1 + \frac{L}{\mu} ).\label{eq:lemma conclus 2}
    \end{align}
\end{subequations}
Then the statements of the lemma follow by applying $(a+b)^2 \le 2a^2+2b^2$ to \eqref{eq:lemma conclus 1} and, for \eqref{eq:lemma conclus 2}, squaring both sides and relaxing the right-hand side by a factor of $2$. 
\end{proof}

\subsection{Lyapunov Function Analysis}
To derive the bound in \eqref{eq:thm1 general conclusion 1} and \eqref{eq:thm1 general conclusion 2}, we proceed to construct a Lyapunov function $V$ and show that $V$ declines into a small neighborhood of the origin.
Consider the following Lyapunov function candidate%
\begin{equation}\label{eq:V def}
\begin{aligned}
V(\delta_\omega,\delta_\theta) 
    &:= W_k(\delta_\omega) + W_{p,c}(\delta_\theta,\delta_\omega),
\end{aligned}
\end{equation}
with 
\begin{align*}
    W_k(\delta_\omega):= \frac{1}{2}\delta_\omega^T M \delta_\omega
\end{align*}
representing the kinetic energy and
\begin{align*}
    W_{p,c}(\delta_\theta,\delta_\omega):= \frac{1}{2}(\delta_\theta + \eta \Lambda_P^{-1}Y^T M^{\frac 1 2}\delta_\omega)^T \Lambda_P (\delta_\theta + \eta \Lambda_P^{-1}Y^T M^{\frac 1 2}\delta_\omega)
\end{align*}
representing the potential energy together with some crafted cross terms between the kinetic and the potential energy. Here $\eta \in (0,\mu)$ is a positive parameter to design. The detailed physical intuition for a similar Lyapunov function design can be found in~\cite{weitenberg2018robust}.

It can be seen that $V$ is positive definite, since $M >0$ and $\Lambda_P > 0$. The next step is to show that $\dot V \leq - c V + \kappa$ for some $c>0 ,\kappa \geq 0$ and for all $ t  > 0$. 
To achieve this, we start by developing an upper bound of $\dot V$ term by term.


For the first term $W_k(\delta_\omega)$, its time derivative is given as
\begin{align*}
   \delta_\omega^T M\dot \delta_\omega &=   \delta_\omega^T \Delta f -  \delta_\omega^T M^{\frac 1 2} Y \Lambda_P  \delta_\theta.
\end{align*}
Since $f(\omega)= [f_1(\omega_1),\dots,f_N(\omega_N)]^T$, by the mean-value theorem, there exists some $z \in \mathbb{R}^N$ such that
\begin{align}\label{eq:mean value thm}
    \Delta f = \frac{\partial f}{\partial \omega}\Bigg|_z \cdot \delta_\omega
\end{align}
where $\frac{\partial f}{\partial \omega}\big|_z  = \operatorname{diag}(\frac{\partial f_i}{\partial \omega_i}\big|_{z_i},i \in \mathcal{N})\leq -\mu M$ by Assumption~\ref{assump1}.
Then we can bound $\delta_\omega^T M\dot \delta_\omega$ as
\begin{align}\label{eq:Vdot first part}
   \delta_\omega^T M\dot \delta_\omega 
& \leq -\mu \delta_\omega^T M\delta_\omega -  \delta_\omega^T M^{\frac 1 2} Y \Lambda_P \delta_\theta.
\end{align}

For the second term $ W_{p,c}(\delta_\theta,\delta_\omega)$, for simplicity of notations, define $$\textcolor{black}{\hat \delta_\theta:= \delta_\theta + \eta \Lambda_P^{-1}Y^T M^{\frac 1 2}\delta_\omega.}$$
Then the time derivative of $ W_{p,c}(\delta_\theta,\delta_\omega) = \frac 1 2 \hat{\delta}_\theta ^T \Lambda_P \hat \delta_\theta$ is given as
\begin{equation*}
    \begin{aligned}
      &\quad \hat \delta_\theta^T \Lambda_P(\dot \delta_\theta + \eta \Lambda_P^{-1}Y^T M^{\frac 1 2}\dot \delta_\omega) \\
 &  =     \hat \delta_\theta^T \Lambda_P (Y^T M^{\frac 1 2}\delta_\omega - \Lambda_P^{-1}\frac{d\tilde f}{dt} + \eta  \Lambda_P^{-1}Y^T M^{-\frac 1 2} \Delta f -  \eta  \delta_\theta),
    \end{aligned}
\end{equation*}
where we plug in the expressions for $\dot \delta_\omega$, $\dot \delta_\theta$ and then use $ (\Lambda_P^{-1}Y^T M^{\frac 1 2})$\\$\cdot (M^{-\frac 1 2} Y \Lambda_P) = I_{N-1}$. 
Further replacing $\Delta f$ with \eqref{eq:mean value thm} and substituting $\delta_\theta$ with $\hat \delta_\theta - \eta \Lambda_P^{-1}Y^T M^{\frac 1 2}\delta_\omega$, we rewrite the time derivative of $ W_{p,c}(\delta_\theta,\delta_\omega)$ as
\begin{equation}\label{eq:Vdot second part}
    \begin{aligned}
         &\hat \delta_\theta^T \Lambda_P Y^T M^{\frac 1 2}\delta_\omega - \hat \delta_\theta^T \frac{d\tilde f}{dt} + \eta   \hat \delta_\theta^T Y^T M^{-\frac 1 2} \frac{\partial f}{\partial \omega}\Big|_z \delta_\omega \\
    & -  \eta  \hat \delta_\theta^T \Lambda_P  (\hat \delta_\theta - \eta \Lambda_P^{-1}Y^T M^{\frac 1 2}\delta_\omega).
    \end{aligned}
\end{equation}

Summing the two parts \eqref{eq:Vdot first part} and \eqref{eq:Vdot second part} above and substituting $\delta_\theta$ in \eqref{eq:Vdot first part} with $\hat \delta_\theta - \eta \Lambda_P^{-1}Y^T M^{\frac 1 2}\delta_\omega$, we obtain
\begin{equation}\label{eq:Vdot allsumup}
    \begin{aligned}
         \dot V \leq & -\mu \delta_\omega^T M\delta_\omega -  \delta_\omega^T M^{\frac 1 2}Y \Lambda_P(\hat \delta_\theta - \eta \Lambda_P^{-1}Y^T M^{\frac 1 2}\delta_\omega)\\
         &+ \hat \delta_\theta^T \Lambda_P Y^T M^{\frac 1 2}\delta_\omega
         - \hat \delta_\theta^T \frac{d\tilde f}{dt} 
         + \eta   \hat \delta_\theta^T Y^T M^{-\frac 1 2} \frac{\partial f}{\partial \omega}\Big|_z \delta_\omega \\
    & -  \eta  \hat \delta_\theta^T \Lambda_P  (\hat \delta_\theta - \eta \Lambda_P^{-1}Y^T M^{\frac 1 2}\delta_\omega)
         \\ 
         = & - \mu \delta_\omega^T M\delta_\omega + \eta \delta_\omega^T M^{\frac 1 2}YY^T M^{\frac 1 2}\delta_\omega - \eta \hat \delta_\theta^T \Lambda_P \hat \delta_\theta
         \\
         &+ \eta   \hat \delta_\theta^T Y^T M^{-\frac 1 2} \frac{\partial f}{\partial \omega}\Big|_z \delta_\omega 
         + \eta^2  \hat \delta_\theta^T Y^T M^{\frac 1 2} \delta_\omega 
          - \hat \delta_\theta^T \frac{d\tilde f}{dt},
    \end{aligned}
\end{equation}
where the last equality in \eqref{eq:Vdot allsumup} is derived by canceling the term $\hat\delta_\theta^T \Lambda_P Y^T M^{\frac 1 2}\delta_\omega$ with its negative counterpart and rearranging the order.

Now, we further use Young inequalities to bound the cross terms and first-order term in \eqref{eq:Vdot allsumup}. 
Since
\begin{align*}
    \left|M^{-\frac 1 2} \frac{\partial f}{\partial \omega}\Big|_z \delta_\omega \right| = \left|M^{- 1} \frac{\partial f}{\partial \omega}\Big|_z M^{\frac 1 2} \delta_\omega \right| 
    \leq L  \left|M^{\frac 1 2} \delta_\omega \right|
\end{align*}
by $M^{-1}\frac{\partial f}{\partial \omega}\big|_z  = \operatorname{diag}(\frac{\partial f_i}{\partial \omega_i}\big|_{z_i},i \in \mathcal{N})\geq -L I$ from Assumption~\ref{assump1}, we have
    \begin{subequations}\label{eq:thm 1 bound of all cross terms}
    \begin{align}
    &\eta   \hat \delta_\theta^T Y^T M^{-\frac 1 2} \frac{\partial f}{\partial \omega}\Big|_z \delta_\omega  \notag\\
    \leq &\eta L |\hat \delta_\theta| |M^{\frac 1 2}\delta_\omega|
    \leq  \frac{\eta L^2}{\sigma_m(\Lambda_P)}|M^{\frac 1 2}\delta_\omega|^2
    + \frac{\eta \sigma_m(\Lambda_P)}{4}|\hat \delta_\theta|^2
    \label{thm1 cross 1},\\
     &\eta^2  \hat \delta_\theta^T Y^T M^{\frac 1 2} \delta_\omega \notag\\
    \leq &\eta^2 |\hat \delta_\theta| |M^{\frac 1 2}\delta_\omega|
    \leq \frac{\eta^3}{\sigma_m(\Lambda_P)}|M^{\frac 1 2}\delta_\omega|^2
    + \frac{\eta \sigma_m(\Lambda_P)}{4}|\hat \delta_\theta|^2,
    \label{thm1 cross 2}\\
    & |\hat \delta_\theta|  \left| \frac{d\tilde f}{dt} \right| 
    \leq \frac{ \left| \frac{d\tilde f}{dt} \right|^2 }{ \eta \sigma_m(\Lambda_P) } 
    + \frac{\eta \sigma_m(\Lambda_P)}{4} |\hat \delta_\theta|^2. \label{thm1 cross 3}
\end{align}
\end{subequations}
Now substitute \eqref{eq:thm 1 bound of all cross terms} back into \eqref{eq:Vdot allsumup} together with $\eta \delta_\omega^T M^{\frac 1 2}YY^T M^{\frac 1 2}\delta_\omega \leq \eta \delta_\omega^T M \delta_\omega$. Then sum up the coefficients of all quadratic terms, which gives
\begin{align*}
     \dot V &\leq - \phi_1 \delta_\omega^T M \delta_\omega - \frac{\eta}{4}\hat \delta_\theta^T \Lambda_P \hat \delta_\theta + \frac{ \left| \frac{d\tilde f}{dt} \right|^2 }{ \eta \sigma_m(\Lambda_P) },
    \end{align*}
    where 
    \begin{align*}
        \phi_1:= \mu - \eta - \frac{\eta L^2}{\sigma_m(\Lambda_P)} -  \frac{\eta^3}{\sigma_m(\Lambda_P)}.
    \end{align*}
Set 
$$\eta = \eta^* := \frac{\mu \sigma_m(\Lambda_P)}{2(\sigma_m(\Lambda_P)+ 4L^2)} \in (0,\frac{\mu}{2}),$$ then it can be checked that $\phi_1 > \frac{3}{8}\mu > \frac{\eta}{4}$. Recall that $V = \frac 1 2 \delta_\omega^T M \delta_\omega + \frac 1 2\hat \delta_\theta^T \Lambda_P \hat \delta_\theta$. Thus we obtain
\begin{equation}\label{eq: Vdot relation with V}
    \begin{aligned}
         \dot V &\leq -\frac{\eta^*}{2} V + \frac{| \frac{d\tilde f}{dt}|^2}{ \eta^* \sigma_m(\Lambda_P) }\\
    & \leq -\frac{\eta^*}{2} V + 
    \underbrace{\frac{2 N L^2 |f_b(\omega_b(0))+ \xi_b(0_+)|^2}{{\eta^*}\sigma_m(\Lambda_P)  M_b}
    }_{:=\beta_1} e^{-2\mu t}\\
  & +\underbrace{\frac{2 N M_b C^2}{{\eta^*} \sigma_m(\Lambda_P)}(1+\frac{L}{\mu})^2}_{:= \beta_2},
    \end{aligned}
\end{equation} 
where the second step inserts the upper bound on $|d \tilde f/dt|$ from Lemma~\ref{lm: d tilde f dt}.

Applying the comparison lemma to the inequality above yields
\begin{align*}
    V(t)
&\leq e^{-\frac{\eta^*}{2}t} V(0_+)
  + \frac{2\beta_1}{4\mu-\eta^*}\left(e^{-\frac{\eta^*}{2}t}-e^{-2\mu t}\right)
  + \frac{2\beta_2}{\eta^*}\left(1-e^{-\frac{\eta^*}{2}t}\right).
\end{align*}
To simplify the expression, we drop the negative terms $-e^{-2\mu t}$ and $-e^{-\frac{\eta^*}{2}t}$ and use $4\mu - \eta^* > 3 \mu$, which leads to
\begin{align*}
V(t) & \leq (V(0_+) + \frac{2\beta_1}{3 \mu})e^{-\frac{\eta^*}{2}t} + \frac{2 \beta_2}{\eta^*},
\ \forall t>0.
\end{align*}
Finally, since for each $i \in \mathcal{N}$, 
\[
M_i |\omega_i(t) -  \omega_b(t)|^2 \leq \delta_\omega^T (t)M \delta_\omega(t) \leq 2 V(t),
\]
we arrive at
\begin{subequations}\label{eq:thm1 proof final step}
\begin{align}
      |\omega_i(t) -  \omega_b(t)|^2 & \leq 
    \underbrace{\frac{2}{\min_i M_i} (V(0_+) + \frac{2\beta_1}{3 \mu })}_{:=\alpha}e^{-\frac{\eta^*}{2}t}
    \\
    & + \frac{4 \beta_2}{(\min_i M_i)\eta^*} \label{eq:thm1 proof final step constant 1}
    \\
    & \leq  \alpha  e^{-\frac {\eta^*}{2}t} \\
    &+ \underbrace{\frac{32 NM_b (1+\frac L \mu)^2 }{(\min_i M_i ) \mu^2}}_{:= K}C^2 \frac{(\lambda_2 + 4L^2)^2}{\lambda_2^3},\label{eq:thm1 proof final step constant 2}
\end{align}
\end{subequations}
where from \eqref{eq:thm1 proof final step constant 1} to \eqref{eq:thm1 proof final step constant 2}, we incorporates the explicit expressions for $\beta_2$, $\eta^*$ and substitutes $\sigma_m(\Lambda_P) = \lambda_2$. 
Note that the rate
\begin{align*}
\frac{\eta^*}{2} &= \frac{\mu \lambda_2}{4(\lambda_2+ 4L^2)}:= c.
\end{align*}
Then we complete the proof of the bound \eqref{eq:thm1 general conclusion 1} in the first part of Theorem~\ref{thm: thm1 general}.

The proof of the second part follows the same line of argument as the first part, up to the inequality
\begin{align*}
         \dot V &\leq -\frac{\eta^*}{2} V + \frac{|\frac{d\tilde f}{dt}|^2}{\eta^* \sigma_m(\Lambda_P)}.
\end{align*}
Using the comparison lemma yields
\begin{align*}
        \limsup_{t \to \infty} V(t) &\leq \frac{2}{(\eta^*)^2 \sigma_m(\Lambda_P)} 
        \limsup_{t \to \infty} \left|\frac{d \tilde f}{dt}\right|^2\\
        &\leq \frac{4{N M_b}C_{\lim}^2}{(\eta^*)^2 \sigma_m(\Lambda_P)} \Big(1+\frac{L}{\mu}\Big)^2,
\end{align*}
where the second inequality follows from the upper bound on $\limsup_{t \to \infty} |d\tilde f/dt|^2$ in Lemma~\ref{lm: d tilde f dt}. Finally, plug in the explicit expression for $\eta^*$, replace $\sigma_m(\Lambda_P)$ with $\lambda_2$ and then substitute this bound into
\[
\limsup_{t \to \infty} |\omega_i(t) -  \omega_b(t)|^2 
\leq \frac{2}{\min_i M_i}\limsup_{t \to \infty} V(t).
\]
This leads to the desired conclusion \eqref{eq:thm1 general conclusion 2} in the second part of Theorem~\ref{thm: thm1 general}. Then the proof is completed.

\section{Existence of Solutions to the Steady State Conditions \eqref{eq:steady state condition} }\label{sec: proof of steady state}
In this section, we show that there always exist solutions $(\omega(0),\theta(0))$ to the steady state conditions \eqref{eq:steady state condition} under linear power flows. 

First, \eqref{eq:steady 1} implies $\omega_i(0) = \omega_s(0)$,$ \forall i \in \mathcal{N}$ for some synchronous frequency $\omega_s(0)$, since the network graph is connected. Regarding the equation \eqref{eq:steady 2}, we write its compact form as
\begin{equation}\label{eq:compact_form_proof}
    0 = f(\mathbb{1}_N\omega_s(0)) + \xi(0_-) - L_B \theta(0).
\end{equation}
Similar to the procedure used to address the equation \eqref{eq: theta* 1} in Appendix~\ref{prof: thm1}, we resolve \eqref{eq:compact_form_proof} here along the basis directions of $[\mathbb{1}_N,M^{-\frac 1 2}Y]$. First, left-multiplying \eqref{eq:compact_form_proof} by $\mathbb{1}_N^T$ yields 
\[
    \mathbb{1}_N^T f(\mathbb{1}_N\omega_s(0)) + \mathbb{1}_N^T \xi(0_-) = Nf_b(\omega_s(0))+ N\xi_b(0_-) =  0,
\]
which admits a unique solution $$\omega_s(0) = f_b^{-1}(-\xi_b(0_-)),$$ since $f_b^\prime(\omega_b) \leq - M_b \mu <0 $ by Assumption~\ref{assump1}. 
Next, we left-multiply \eqref{eq:compact_form_proof} by $Y^T M^{-\frac 1 2}$, leading to
\begin{equation}\label{eq: theta tilde 0 solution}
    \begin{aligned}
        &  Y^T M^{-\frac 1 2}(f(\mathbb{1}_N\omega_s(0))+\xi(0_-))- \Lambda_P \tilde \theta(0)=0,\\
    \Leftrightarrow \ & \tilde \theta(0) =  \Lambda_P^{-1} Y^T M^{-\frac 1 2}(f(\mathbb{1}_N\omega_s(0))+\xi(0_-)),
    \end{aligned}
\end{equation}
where $\tilde \theta(0) = Y^TM^{\frac 1 2}\theta(0)$ is the transformed coordinate defined in \eqref{eq:coordinate transform def}, and $ \Lambda_P>0$ is the matrix as defined in \eqref{eq: theta* 3}. Therefore, a solution $\theta(0)$ to the equations \eqref{eq:steady state condition} exists and is unique up to a uniform shift. Specifically, $\theta(0) = \mathbb{1}_N\bar \theta(0) + M^{-\frac 1 2}Y \tilde \theta(0)$ for the $\tilde \theta(0)$ specified in \eqref{eq: theta tilde 0 solution} and any scalar $\bar \theta(0)$. 

\section{Proof of Proposition~\ref{coro: start from steady}}\label{prof: coro1}
The proof is nearly identical to that of Theorem~\ref{thm: thm1 general} up to the definition of $\alpha$ in \eqref{eq:thm1 proof final step}. The key difference is that $\alpha$ can be further simplified by substituting the specific initial state $(\omega(0),\theta(0))$ whose expression is derived in Appendix~\ref{sec: proof of steady state} as 
\begin{align*}
    \omega(0) &= \mathbb{1}_N \omega_s(0) =  \mathbb{1}_N f_b^{-1}(-\xi_b(0_-)),\\
    \tilde \theta(0) &= \Lambda_P^{-1} Y^T M^{-\frac 1 2} (f(\mathbb{1}_N  \omega_b(0)) + \xi(0_-)).
\end{align*}
Recall the expression of $\alpha$: 
\begin{align*}
\alpha = \frac{2}{\min_i M_i}\left(V(0_+) + \frac{2\beta_1}{3\mu}\right).
\end{align*}
We proceed by calculating the terms $V(0_+)$ and $\beta_1$. For $V(0_+) = V(\delta_\omega(0), \delta_\theta(0_+))$, we have $\delta_\omega(0) = \mathbb{1}_N\omega_s(0) - \mathbb{1}_N\omega_b(0)= 0$ and $\delta_\theta(0_+)$ can be written as
\begin{align*}
& \quad \tilde \theta(0) - \tilde \theta^*(0_+) \\
&= \Lambda_P^{-1} Y^T M^{-\frac 1 2} \left( (f(\mathbb{1}_N  \omega_s(0)) + \xi(0_-)) - (f(\mathbb{1}_N  \omega_b(0)) + \xi(0_+)) \right) \\
&= -\Lambda_P^{-1} Y^T M^{-\frac 1 2}\Delta \xi.
\end{align*}
Substituting these into the definition of $V$ in \eqref{eq:V def} yields the initial value $V(0_+)$ as
\begin{align*}
V(0_+) &= \frac 1 2\delta_\theta^T (0_+) \Lambda_P \delta_\theta(0_+) \\
& = \frac 1 2 \Delta \xi^TM^{-\frac 1 2} Y \Lambda_P^{-1} Y^T M^{-\frac 1 2}\Delta \xi \\
&\leq  \frac{1}{2\lambda_2} \frac{|\Delta \xi|^2}{\min_i M_i}.
\end{align*}
Next, we calculate the term $\beta_1$. Recall its definition and substitute the steady-state condition $f_b(\omega_b(0))=-\xi_b(0_-)$: 
\begin{align*}
 \beta_1 &=   \frac{2 N L^2 | f_b(\omega_b(0))+  \xi_b(0_+)|^2}{{\eta^*}\lambda_2  M_b} \\
&= \frac{2 N L^2 | -\xi_b(0_-)+  \xi_b(0_+)|^2}{{\eta^*}\lambda_2  M_b}.
\end{align*}
Using the property $|\xi_b(0_+)-\xi_b(0_-)|^2 \leq {|\Delta \xi|^2}/N$ and the definition of $\eta^* =\mu \lambda_2/(2(\lambda_2 + 4L^2)) $, we obtain
\begin{align*}
\beta_1 \leq \frac{4 L^2(\lambda_2+4L^2)}{\mu M_b\lambda_2^2}|\Delta\xi|^2.
\end{align*}
Finally, incorporating these new bounds for $V(0_+)$ and $\beta_1$ into the expression of $\alpha$ leads to
\begin{align*}
\alpha &\leq \frac{2}{\min_i M_i}\left(\frac{1}{2\lambda_2} \frac{|\Delta \xi|^2}{\min_i M_i} + \frac{2}{3\mu}\frac{4 L^2(\lambda_2+4L^2)}{\mu M_b\lambda_2^2}|\Delta\xi|^2\right)\\
&= \alpha^* |\Delta \xi|^2,
\end{align*}
where $\alpha^*$ is a constant defined as
\begin{align*}
\alpha^* &:=\underbrace{ \frac{1}{\min_i M_i^2}}_{:= \phi_1}\frac{1}{\lambda_2}  + \underbrace{\frac{16 L^2}{3 \mu^2 M_b (\min_i M_i)}}_{:= \phi_2}\frac{\lambda_2+ 4L^2}{\lambda_2^2}.
\end{align*}
This completes the proof.

\section{Proof of Theorem~\ref{thm3:nonlinear}}\label{prof: thm3}

The proof follows the same line of arguments as in Theorem \ref{thm: thm1 general}. 

To rewrite the original system \eqref{eq:nonlinear dynamics} into a vector form, we assign an arbitrary but fixed orientation to each edge in $\mathcal{E}$, based on which we define an node-edge incidence matrix $A \in \mathbb{R}^{N \times E}$. Specifically, for an edge $l \in \{1,\dots,E\}$ corresponding to the pair $\{i,j\}$, if the orientation is assigned from $i$ to $j$, then the $l$-th column of $A$ has entries $A_{il} = 1$ and $A_{jl} = -1$, with all other entries being zero. Let $\Gamma := \operatorname{diag}(B_{ij}^0,\{i,j\}\in \mathcal{E})$ collect the edge weights. Then the vector of power flows, governed by the nonlinear equations \eqref{eq:power flow}, can be expressed as the gradient of a magnetic energy function
\begin{align*}
    U_0(\theta) := -\mathbb{1}_E^T \Gamma \cos(A^T \theta).
\end{align*}
The power flow vector is then given by
\begin{align*}
    \nabla U_0(\theta) &= A\Gamma \sin(A^T \theta)\\
    &= \left[\sum_{j=1}^N B_{ij}^0\sin(\theta_i - \theta_j), i \in \mathcal{N}\right]^T.
\end{align*}
To this end, we obtain the following compact form of the system:
\begin{subequations}\label{eq:compact form nonlinear}
        \begin{align}
           \dot \theta &= \omega,\\
          \dot \omega & = M^{-1}(f(\omega) + \xi - \textcolor{black}{k}\nabla U_0(\theta)).\label{eq:compact form nonlinear omega}  
        \end{align}
   \end{subequations}

\subsection{Coordinate Transformation}
As in the proof of Theorem~\ref{thm: thm1 general}, we decompose $\theta$ into an average component $\bar \theta$ and a disagreement component $\tilde \theta$, with $\theta = \mathbb{1}_N \bar \theta + M^{-\frac 1 2}Y\tilde \theta.$ Since $A^T\theta = A^T(\mathbb{1}_N \bar \theta + M^{-\frac{1}{2}}Y \tilde \theta) = A^T M^{-\frac{1}{2}}Y \tilde \theta$ using $A^T \mathbb{1}_N = 0$, the magnetic energy function $U_0(\theta)$ can be expressed in the new coordinate as a function of $\tilde \theta$, given by
\begin{align*}
    U(\tilde \theta):= -\mathbb{1}_E^T \Gamma \cos(A^T M^{-\frac{1}{2}}Y \tilde \theta) \equiv -\mathbb{1}_E^T \Gamma \cos(A^T  \theta) =U_0(\theta).
\end{align*}
And the gradient of $U(\tilde \theta)$ can be expressed as
\begin{align*}
    \nabla U(\tilde \theta) =   Y^T M^{-\frac 1 2} A\, \Gamma\, \sin(A^T M^{-\frac 1 2} Y \tilde{\theta}) \equiv Y^T M^{-\frac 1 2} \nabla U_0(\theta).
\end{align*}
Similar to the derivation of \eqref{eq:LB and LambdaP trans} under the linear power flows, the power flow term $\nabla U_0(\theta)$ in \eqref{eq:compact form nonlinear omega} can also be rewritten in terms of $\nabla U(\tilde \theta)$. To see this, note that
\begin{equation*}
    \begin{aligned}
          \nabla U^T(\tilde \theta)Y^T M^{\frac 1 2}&= (\nabla U_0^T(\theta)M^{-\frac 1 2}Y )Y^T M^{\frac 1 2}
          \\
    &=\nabla U_0^T(\theta) \left(I_N -  \mathbb{1}_N\mathbb{1}_N^T\frac{ M}{NM_b}\right)\\
&= \nabla U_0^T(\theta).
    \end{aligned}
\end{equation*}
Here the second equality follows from the expansion of the identity $QP = I_N$, where $P$ and $Q$ are the coordinate transformation matrix defined in \eqref{eq:coordinate transform def} and \eqref{eq:coordinate transform def inverse}. The last equality follows from $\nabla U_0^T(\theta) \mathbb{1}_N = 0$. This allows us to substitute $\nabla U_0(\theta)$ with $M^{\frac 1 2}Y\nabla U(\tilde \theta)$.

Thus the system dynamics becomes%
\begin{subequations}\label{eq: transformed system 1 nonlinear}
    \begin{align}
   \dot \omega &= M^{-1}(f(\omega)+ \xi - \textcolor{black}{k}M^{\frac 1 2}Y \nabla U(\tilde \theta)),\label{eq: w nonlinear}
    \\ 
    \dot{\tilde \theta} & =Y^T M^{\frac 1 2}  \omega,\label{eq:tilde theta nonlinear}
    \end{align}
\end{subequations}
with the dynamics of $\bar \theta$ omitted.



The second step of the coordinate transformation is to define the error variables, following the same principle as in the proof of Theorem~\ref{thm: thm1 general}. We anticipate that $\omega$ approximately converges to $\mathbb{1}_N \omega_b$ and $\tilde \theta$ approximately converges to some $\tilde \theta^*$ that enforces the right-hand side of \eqref{eq: w nonlinear} to coincide with $\mathbb{1}_N \dot \omega_b$, i.e., 
\begin{equation}\label{eq: theta* 1 nonlinear}
    \begin{aligned}
         M^{-1}(f(\mathbb{1}_N \omega_b)+ \xi- k M^{\frac 1 2}Y \nabla U(\tilde \theta^*))&= \mathbb{1}_N\dot \omega_b\\ 
            &= \frac{\mathbb{1}_N \mathbb{1}_N^T}{NM_b} (f(\mathbb{1}_N \omega_b)+ \xi ).
    \end{aligned}
\end{equation}
To this end, we first establish the existence and uniqueness of the solution $\tilde \theta^*$ to the equation~\eqref{eq: theta* 1 nonlinear} within a proper region. Specifically, to ensure the sin nonlinearities are well-behaved, we define a safety set where the angle differences $A^T M^{-\frac 1 2}Y \tilde \theta$ are bounded away from $\pm \pi/2$. For any $\rho \in (0,\frac{\pi}{2})$, define
\begin{align*}
    \mathbb{S} (\rho):= \{\tilde \theta \in \mathbb{R}^{N-1}:  |A^T M^{-\frac 1 2}Y \tilde \theta|_{\infty} < \frac{\pi}{2} - \rho\},
\end{align*}
where $|x|_{\infty}:= \max_{i}|x_i|$ for any vector $x$. Then we show that the equation \eqref{eq: theta* 1 nonlinear} admits a unique solution in $ \mathbb{S}(2\rho) $ when $\omega_b(0)$ is in a proper region, which is given in the following lemmas. 

\begin{lemma}
\label{lm:power flow feasi}
Let Assumption~\ref{assump1} and~\ref{assump:feasible} hold with some $\rho \in (0,\frac{\pi}{4})$. 
Suppose that 
\[
|\omega_b(0)| = \frac{|\mathbb{1}_N^T M \omega(0)|}{N M_b} \leq \frac{|\xi_b(0_+)|}{\mu M_b } + \frac{k \lambda_2^L \cos(2\rho)}{8 L (\max_i M_i)}.
\]
Then for each $t > 0$, there exists a unique $\tilde \theta^*(t)$ in $\mathbb{S}(\textcolor{black}{2\rho})$ that is a solution to the equation \eqref{eq: theta* 1 nonlinear} at time $t$. 
\end{lemma}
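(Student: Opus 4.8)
The plan is to mirror the reduction already carried out in Appendix~\ref{prof: thm1} and Appendix~\ref{sec: proof of steady state}, and then, in place of the explicit inversion $\tilde\theta^*=\Lambda_P^{-1}\tilde f$ available under linear power flows, supply an existence/uniqueness argument for the nonlinear stationarity equation confined to the safety set. First I would left-multiply the defining equation \eqref{eq: theta* 1 nonlinear} by the transformation matrix $P$. Acting with the average row $\mathbb{1}_N^T M$ and using the power-balance identity $\mathbb{1}_N^T\nabla U_0=0$ together with $\dot\omega_b=(f_b(\omega_b)+\xi_b)/M_b$ shows the average component holds identically, exactly as in the linear case. Acting with the disagreement row $Y^T M^{\frac12}$ and using $Y^TY=I_{N-1}$ and the identity $\nabla U_0(\theta)=M^{\frac12}Y\nabla U(\tilde\theta)$ established above collapses \eqref{eq: theta* 1 nonlinear} to the reduced equation
\[
k\,\nabla U(\tilde\theta^*)=\tilde f,\qquad \tilde f:=Y^T M^{-\frac12}\bigl(f(\mathbb{1}_N\omega_b)+\xi\bigr),
\]
equivalently the balanced power-flow equation $k A\Gamma\sin(A^T\theta^*)=p$ with $p:=(I_N-M\mathbb{1}_N\mathbb{1}_N^T/(NM_b))(f(\mathbb{1}_N\omega_b)+\xi)$ and $\mathbb{1}_N^T p=0$. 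Thus the lemma becomes a power-flow feasibility statement, to be resolved pointwise in $t$ (the only $t$-dependence entering through $\omega_b(t)$ and $\xi(t)$).

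Next I would exploit the variational structure: the reduced equation is the stationarity condition $\nabla_{\tilde\theta}G(\tilde\theta)=0$ of $G(\tilde\theta):=kU(\tilde\theta)-\tilde f^{T}\tilde\theta$. On $\mathbb{S}(2\rho)$ every angle difference obeys $|A^T M^{-\frac12}Y\tilde\theta|_\infty<\frac{\pi}{2}-2\rho$, so the Hessian $\nabla^2U(\tilde\theta)=Y^TM^{-\frac12}A\Gamma\,\mathrm{diag}(\cos(A^TM^{-\frac12}Y\tilde\theta))A^TM^{-\frac12}Y$ is positive definite, since all cosines are at least $\cos(\frac{\pi}{2}-2\rho)=\sin(2\rho)>0$ and $A^TM^{-\frac12}Y$ has full column rank on the connected graph. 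Hence $G$ is strictly convex on the convex set $\mathbb{S}(2\rho)$, which immediately gives \emph{at most one} solution there. Because $A^TM^{-\frac12}Y$ is injective, the closure $\overline{\mathbb{S}(2\rho)}$ is compact, so $G$ attains a unique minimizer $\tilde\theta^*$ on it; it then remains only to show that $\tilde\theta^*$ lies in the open interior $\mathbb{S}(2\rho)$, so that $\nabla G(\tilde\theta^*)=0$ holds and $\tilde\theta^*$ genuinely solves the reduced equation within the safety set.

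To establish interiority I would first bound $\omega_b$ uniformly in $t$. Applying the comparison lemma to $\frac{d}{dt}|\omega_b|\le-\mu|\omega_b|+|\xi_b|/M_b$ (a consequence of $f_b'\le-M_b\mu$ from Assumption~\ref{assump1}, together with $f_b(0)=0$) gives $|\omega_b(t)|\le\max\{|\omega_b(0)|,\ \sup_{s\ge0}|\xi_b(s)|/(\mu M_b)\}$, and the hypothesis on $|\omega_b(0)|$ upgrades this to the uniform estimate $|\omega_b(t)|\le \sup_{s}|\xi_b(s)|/(\mu M_b)+k\lambda_2^{L}\cos(2\rho)/(8L\max_i M_i)$. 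Combining this with $|f_i(\omega_b)|\le M_iL|\omega_b|$ translates into a bound on the injection $p$ whose two contributions — the damping/COI term and the disturbance-difference term across edges — are, up to the displayed constants, precisely the two terms on the left-hand side of Assumption~\ref{assump:feasible}. Since the maximal restoring torque the network can supply at the boundary of $\mathbb{S}(2\rho)$ scales like $k\lambda_2^{L}\sin(\frac{\pi}{2}-2\rho)=k\lambda_2^{L}\cos(2\rho)$, the assumption guarantees the injection is strictly dominated by this restoring capacity; formally, the outward directional derivative of $G$ is strictly positive at any active boundary point, so the minimizer cannot be attained on $\partial\mathbb{S}(2\rho)$ and must lie in $\mathbb{S}(2\rho)$. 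This yields existence, and strict convexity yields uniqueness.

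The main obstacle I anticipate is precisely this interiority/\,a priori angle estimate with sharp enough constants: translating the bound on the time-dependent injection $p(t)$ into the guarantee that \emph{every} edge angle difference stays strictly below $\frac{\pi}{2}-2\rho$. This is the step where the identity $\cos(2\rho)=\sin(\frac{\pi}{2}-2\rho)$, the connectivity $\lambda_2^{L}$, and the scaling $k$ all enter, and where both terms of Assumption~\ref{assump:feasible} are consumed; the uniform-in-$t$ character of the $\omega_b$ bound is likewise essential, since the lemma asserts the conclusion at every $t>0$ rather than over a finite horizon. The remaining reduction and convexity steps are routine once this estimate is in hand.
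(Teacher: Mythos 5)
Your reduction of \eqref{eq: theta* 1 nonlinear} to the balanced equation $k\nabla U(\tilde\theta^*)=\tilde f$, your uniform-in-$t$ bound on $|\omega_b|$ (the $\max$ of $|\omega_b(0)|$ and $\sup|\xi_b|/(\mu M_b)$ is exactly the paper's convex-combination bound), and your verification that the resulting injection satisfies the edge-wise bound $|\tilde g|_{\mathcal{E},\infty}\le k\lambda_2^{L}\cos(2\rho)$ all match the paper's proof. The divergence is in the final step, and that is where your proposal has a genuine gap. The paper does not prove that this injection bound implies a unique solution in $\mathbb{S}(2\rho)$; it invokes the phase cohesiveness theorem of \cite{dorfler2013synchronization} as a black box. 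You instead propose to establish it yourself via strict convexity of $G(\tilde\theta)=kU(\tilde\theta)-\tilde f^{T}\tilde\theta$ on $\overline{\mathbb{S}(2\rho)}$ plus an interiority argument, but the interiority step --- ``the outward directional derivative of $G$ is strictly positive at any active boundary point'' --- is asserted, not proved, and it is precisely the entire technical content of the cited theorem.

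Concretely, the obstruction is that a boundary point of $\mathbb{S}(2\rho)$ is characterized by a \emph{single} edge difference reaching $\tfrac{\pi}{2}-2\rho$, so the natural per-edge computation of the outward derivative of $kU$ produces a restoring term of order $k\,\Gamma_{ll}\cos(2\rho)$ for the active edge, i.e., a condition in terms of $k\min_{\{i,j\}\in\mathcal{E}}B_{ij}^{0}$, not $k\lambda_2^{L}$. Getting the algebraic connectivity $\lambda_2^{L}$ into the sufficient condition requires a global argument (the network redistributing flow across non-active edges), which is what the fixed-point/cohesiveness machinery in \cite{dorfler2013synchronization} supplies; alternatively, a strong-convexity estimate via Lemma~\ref{lm: Hessian} yields $|\tilde\theta^*|\le|\tilde f|/(k\sin(2\rho)\lambda_2)$, but that produces a $2$-norm condition with the factor $|A^{T}M^{-\frac12}Y|_{2\to\infty}$ rather than the $|\cdot|_{\mathcal{E},\infty}$ condition of Assumption~\ref{assump:feasible}, so it does not close the proof under the stated hypotheses either. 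You correctly flag this as the main obstacle; as written, the proposal does not overcome it. The fix is either to cite the phase cohesiveness result as the paper does, or to actually reproduce its proof rather than the boundary-derivative sketch.
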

\begin{proof}
We begin by rearranging the equation \eqref{eq: theta* 1 nonlinear}. 
Define $g(t) := f(\mathbb{1}_N \omega_b(t)) + \xi(t)$. In what follows we sometimes omit the explicit time index $t$ when no confusion arises. Then the equation \eqref{eq: theta* 1 nonlinear} is written compactly as
\begin{align}
    & M^{-1}(g- k M^{\frac 1 2}Y \nabla U(\tilde \theta^*))= \frac{\mathbb{1}_N \mathbb{1}_N^T}{NM_b} g, \notag \\
     \Leftrightarrow  \quad & k M^{\frac 1 2}Y \nabla U(\tilde \theta^*) = g - M \frac{\mathbb{1}_N \mathbb{1}_N^T}{NM_b}g. \label{eq:lemma feasible eq1}
\end{align}
Here the left-hand side is precisely the power flow vector determined by $\tilde \theta^*$, and the right-hand side is a vector with a zero average, since
\[
\mathbb{1}_N^T (g - M \frac{\mathbb{1}_N \mathbb{1}_N^T}{NM_b}g) = \mathbb{1}_N^T g - \frac{\mathbb{1}_N^T M \mathbb{1}_N }{NM_b}\mathbb{1}_N^Tg = 0.
\]
This enables us to use the phase cohesiveness condition in \cite{dorfler2013synchronization}, which states that the equation \eqref{eq:lemma feasible eq1} admits a unique solution $\tilde \theta^*\in \mathbb{S}(\textcolor{black}{2\rho})$ if 
\begin{align}\label{eq:assump condi3}
    \left|g - M \frac{\mathbb{1}_N \mathbb{1}_N^T}{NM_b}g\right|_{\mathcal{E},\infty} \leq k \lambda_2^L \cos(2\rho),
\end{align}
where $\lambda_2^L$ is the second smallest eigenvalue of $L_B$.

In the remaining part of the proof, we are going to show that \eqref{eq:assump condi3} holds under Assumption~\ref{assump1} and~\ref{assump:feasible}. Define 
\[
 \tilde g:= g - M \frac{\mathbb{1}_N \mathbb{1}_N^T}{NM_b}g.
\]
Note that the $i$-th component of the vector $\tilde g$ can be written as
\begin{align*}
    f_i(\omega_b) + \xi_i -  \frac{M_i(f_b(\omega_b) + \xi_b)}{M_b}.
\end{align*}
Thus we have
\begin{align*}
    \left|\tilde g\right|_{\mathcal{E},\infty} &\leq 2 \max_{i \in \mathcal{N}} |f_i(\omega_b)| +|\xi|_{\mathcal E,\infty}\\
    & + 2 (\max_i M_i) \left(\frac{|f_b(\omega_b)|}{M_b} + \frac{\sup_{\tau \geq 0}|\xi_b(\tau)|}{M_b} \right).
\end{align*}
It remains to derive an upper bound for $|f_i(\omega_b)|$ and $|f_b(\omega_b)|$. 
By the mean-value theorem, there exists $\omega_i^\circ$ between 0 and $\omega_b$ such that $f_i(\omega_b)=f_i’(\omega_i^\circ)\,\omega_b + f_i(0)$. Since $f_i(0)=0$ and $|f_i^\prime(\omega)|\le L M_i$ for all $\omega$ by Assumption~\ref{assump1}, we obtain
\begin{align*}
     \max_{i \in \mathcal{N}} |f_i(\omega_b)| \leq  L(\max_i M_i)|\omega_b|.
\end{align*}
Following similar arguments, we have
\begin{align*}
    |f_b(\omega_b)| \leq L M_b |\omega_b|.
\end{align*}
Substituting these inequalities back gives the following upper bound on $\left|\tilde g\right|_{\mathcal{E},\infty} $:
\begin{equation}\label{eq:lemma theta* 2}
    \begin{aligned}
          \left|\tilde g\right|_{\mathcal{E},\infty} &\leq 4L (\max_i M_i)|\omega_b| +|\xi|_{\mathcal E,\infty}\\
       &+ \frac{ 2 (\max_i M_i) }{M_b}\sup_{\tau \geq 0}|\xi_b(\tau)|.
    \end{aligned}
\end{equation}
To further control $|\omega_b|$, define $z(t):= |\omega_b(t)|$, then the dynamics of $z$ is given by
\begin{align*}
   M_b \dot z&= \operatorname{sign}(\omega_b)(f_b(\omega_b) + \xi_b),\quad \text{almost everywhere}.
    \end{align*}
Again by the mean-value theorem, there exists $w_b^\circ$ between 0 and $\omega_b$ such that $\operatorname{sign}(\omega_b)f_b(\omega_b) = f_b^\prime (\omega_b^\circ)\operatorname{sign}(\omega_b)\omega_b \leq - M_b \mu  |\omega_b|$, which uses $f_b'(w_b^\circ) \leq -\mu M_b$ by Assumption~\ref{assump1}. Then we obtain
    \begin{align*}
        M_b \dot z & \leq -M_b \mu z + |\xi_b|.
    \end{align*}
    Applying the comparison lemma, we obtain that for all $t \geq  0$:
\begin{align*}
    |\omega_b(t)|&\leq e^{-\mu t} |\omega_b(0)| + \int_{0}^t e^{-\mu(t - \tau)}\frac{|\xi_b(\tau)|}{M_b}\,d\tau,\\
    & \leq e^{-\mu t} |\omega_b(0)| + \frac{\sup_{\tau \geq 0}|\xi_b(\tau)|}{M_b \mu}(1-e^{-\mu t}).
\end{align*}
Substituting the bound on $|\omega_b(t)|$ back into \eqref{eq:lemma theta* 2}, we conclude that
\begin{align*}
   |g|_{\mathcal{E},\infty} & \leq A_1 + A_2
\end{align*}
where
\begin{align*}
    A_1 &:= 4L(\max_i M_i) e^{- \mu t} \left(|\omega_b(0)| - \frac{\sup_{\tau  \geq 0}|\xi_b(\tau)|}{M_b \mu}\right)\\
   & \leq \frac 1 2 k \lambda_2^L \cos(2\rho)
\end{align*}
by using the condition on $|\omega_b(0)|$, and 
\begin{align*}
    A_2 &:=  \frac{4L(\max_i M_i)\sup_{\tau  \geq 0}|\xi_b(\tau)|}{M_b \mu} + |\xi|_{\mathcal{E},\infty}\\
   & + \frac{ 2 (\max_i M_i) }{M_b}\sup_{\tau \geq 0}|\xi_b(\tau)|\\
   & \leq \frac{6L(\max_i M_i)\sup_{\tau  \geq 0}|\xi_b(\tau)|}{M_b \mu} + |\xi|_{\mathcal{E},\infty},\\
   & \leq \frac 1 2 k \lambda_2^L \cos(2\rho),
\end{align*}
where the first inequality follows from $L \geq \mu$, and the last inequality follows from the restriction on $\xi$ from Assumption~\ref{assump:feasible}. This confirms that $\left|\tilde g\right|_{\mathcal{E},\infty} \leq k \lambda_2^L \cos(2\rho)$ and thus completes the proof. 
\end{proof}
Having established the existence of a unique solution $\tilde\theta^*(t)$ in $\mathbb{S}(2\rho)$ for all $t>0$, we now derive a more explicit form for $\nabla U( \tilde \theta^*(t))$, which will facilitate the computation of $\dot {\tilde \theta}^*(t)$. 
Specifically, we resolve the equation \eqref{eq: theta* 1 nonlinear} following the same procedure that was used to derive \eqref{eq: theta* 3} under linear power flows. The equation \eqref{eq: theta* 1 nonlinear} left-multiplied by $\mathbb{1}_N^T M$ holds true for any value of $\tilde \theta^*$, using $\mathbb{1}_N^T M^{\frac 1 2}Y =0$. Thus $\tilde\theta^*$ is determined by left-multiplying \eqref{eq: theta* 1 nonlinear} by $Y^T M^{\frac 1 2}$, which leads to
\begin{align}\label{eq:assump condi derivation}
\underbrace{Y^T M^{-\frac 1 2}(f(\mathbb{1}_N \omega_b) + \xi ) }_{:=\tilde f}-     k \nabla U( \tilde \theta^*) = 0,\ \forall t>0,
\end{align}
where we use $Y^T Y = I_{N-1}$. 
Taking the time derivative over both sides leads to
    \begin{align}\label{eq:theta tilde dot equation}
        \textcolor{black}{k} \nabla^2 U(\tilde \theta^*) \dot{\tilde \theta}^* = \frac{d \tilde f}{dt},
    \end{align}
    where \begin{align*}
        \nabla^2 U(\tilde \theta^*) = Y^TM^{-\frac 1 2} A \operatorname{diag}(\Gamma \cos(A^TM^{-\frac 1 2}Y \tilde \theta^*))A^T M^{-\frac 1 2}Y,
    \end{align*}
    and $\frac{d\tilde f}{dt}$ is the time derivative of $\tilde f$ along the blended dynamics \eqref{eq: blended dyn}.
For subsequent analysis, we need the following lemma on the eigenvalues of the Hessian matrix.
\begin{lemma}
\label{lm: Hessian}
Given any $\rho \in (0, \frac{\pi}{2})$, for all $\tilde \theta \in \mathbb{S}(\rho)$,
\begin{align}
    \sin(\rho)\lambda_2 I \leq \nabla^2 U(\tilde \theta) \leq \lambda_N I,
\end{align}
where $\lambda_N$ is the largest eigenvalue of $M^{-1}L_B$.
\end{lemma}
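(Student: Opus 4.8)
The plan is to sandwich the diagonal cosine weights appearing in $\nabla^2 U$ and then propagate the resulting bounds through the congruence transformation that defines the Hessian. First I would exploit the structure of the safety set $\mathbb{S}(\rho)$. For $\tilde\theta \in \mathbb{S}(\rho)$, every entry of the edge-difference vector $A^T M^{-\frac 1 2} Y \tilde\theta$ (which coincides with the physical angle differences $\theta_i - \theta_j$ across edges) has magnitude strictly below $\frac{\pi}{2} - \rho$. Since $\cos(\cdot)$ is even, positive, and strictly decreasing on $[0,\frac{\pi}{2})$, each corresponding cosine lies in $(\cos(\frac{\pi}{2}-\rho),1] = (\sin\rho,\,1]$. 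As $\Gamma = \operatorname{diag}(B^0_{ij})$ has a strictly positive diagonal (line sensitivities), the diagonal matrix $D(\tilde\theta) := \operatorname{diag}(\Gamma \cos(A^T M^{-\frac 1 2} Y \tilde\theta))$ then satisfies the entrywise, hence positive semidefinite, ordering $\sin(\rho)\,\Gamma \leq D(\tilde\theta) \leq \Gamma$.

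Next I would push this ordering through the congruence by $R := A^T M^{-\frac 1 2} Y$. Because the map $X \mapsto R^T X R$ preserves the Loewner order, left- and right-multiplying the previous chain by $R^T$ and $R$ gives
\[
\sin(\rho)\, Y^T M^{-\frac 1 2} A \Gamma A^T M^{-\frac 1 2} Y \;\leq\; \nabla^2 U(\tilde\theta) \;\leq\; Y^T M^{-\frac 1 2} A \Gamma A^T M^{-\frac 1 2} Y.
\]
Recognizing $A\Gamma A^T = L_B$ as the baseline weighted Laplacian, the outer matrices are exactly $\Lambda_P = Y^T M^{-\frac 1 2} L_B M^{-\frac 1 2} Y$ from the linear analysis, so this reads $\sin(\rho)\Lambda_P \leq \nabla^2 U(\tilde\theta) \leq \Lambda_P$.

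Finally I would translate the $\Lambda_P$-bounds into the stated scalar bounds via spectral identification. The symmetric positive semidefinite matrix $M^{-\frac 1 2} L_B M^{-\frac 1 2}$ has the single null direction $M^{\frac 1 2}\mathbb{1}_N$ (by connectedness of the graph), and the columns of $Y$ form an orthonormal basis of its orthogonal complement; hence $\Lambda_P$ is the compression of $M^{-\frac 1 2} L_B M^{-\frac 1 2}$ onto the span of its nonzero eigenvectors, so its spectrum is precisely $\{\lambda_2,\dots,\lambda_N\}$ of $M^{-\frac 1 2} L_B M^{-\frac 1 2}$, which are the same as the nonzero eigenvalues of the similar matrix $M^{-1} L_B$. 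Thus $\lambda_2 I \leq \Lambda_P \leq \lambda_N I$, and combining with the congruence bound yields $\sin(\rho)\lambda_2 I \leq \nabla^2 U(\tilde\theta) \leq \lambda_N I$.

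The main obstacle is bookkeeping rather than depth: one must be careful that the ordering on the diagonal block can indeed be promoted to a Loewner ordering on $\nabla^2 U$, which requires the congruence argument (a naive entrywise or eigenvalue comparison would not suffice, since $R$ is rectangular and not invertible), and that the compression $\Lambda_P$ captures \emph{exactly} the nonzero spectrum of $M^{-\frac 1 2} L_B M^{-\frac 1 2}$ rather than only interlacing it. The latter relies crucially on $Y$ spanning the full orthogonal complement of the lone null vector $M^{\frac 1 2}\mathbb{1}_N$, which the construction of $Y$ guarantees.
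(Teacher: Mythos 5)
Your proof is correct and takes essentially the same route as the paper's: the paper bounds the quadratic form $x^T\nabla^2 U(\tilde\theta)x$ edge by edge using $\cos(\cdot)\geq\sin(\rho)$ on $\mathbb{S}(\rho)$ and recognizes the result as $\sin(\rho)\,x^TY^TM^{-\frac 1 2}L_BM^{-\frac 1 2}Yx$, which is precisely your Loewner-order congruence argument written out as a sum over edges. Your explicit justification that the spectrum of $\Lambda_P$ is exactly $\{\lambda_2,\dots,\lambda_N\}$ (via $Y$ spanning the orthogonal complement of the lone null vector $M^{\frac 1 2}\mathbb{1}_N$) fills in a step the paper only asserts with ``it can be checked.''
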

\begin{proof}\label{remark 1}
For any $x \in \mathbb{R}^{N-1}$, consider the quadratic form
    \begin{align*}
   & x^T \nabla^2 U(\tilde \theta) x\\
   &= x^T Y^T M^{-\tfrac12} A \, \operatorname{diag}\!\big(\Gamma \cos(A^T M^{-\tfrac12} Y \tilde\theta)\big) A^T M^{-\tfrac12} Y x \\
   &= \sum_{l=1}^E \Gamma_{ll}\, \cos\!\big((A^T M^{-\tfrac12}Y \tilde\theta)_l\big)\, \big((A^T M^{-\tfrac12}Y x)_l\big)^2 \\
   &\ge \sin(\rho)\sum_{l=1}^E \Gamma_{ll}\,\big((A^T M^{-\tfrac12}Y x)_l\big)^2 \\
   &= \sin(\rho)\, x^T Y^T M^{-\tfrac12} A \Gamma A^T M^{-\tfrac12} Y x \\
   &= \sin(\rho)\, x^T Y^T M^{-\tfrac12} L_B M^{-\tfrac12} Y x,
\end{align*}
where $(A^T M^{-\tfrac12}Y x)_l$ is the $l$th entry of the vector, and  the inequality follows from $\tilde \theta \in \mathbb{S}(\rho)$ and the last step uses $A \Gamma A^T = L_B$. 
Therefore, the smallest eigenvalue of $\nabla^2 U(\tilde \theta)$ is at least $\sin(\rho)\lambda_2 > 0$. 
     
Similarly,  
       \begin{align*}
   & x^T \nabla^2 U(\tilde \theta) x\\
   &\le \sum_{l=1}^E \Gamma_{ll}\,\big((A^T M^{-\tfrac12}Y x)_l\big)^2 \\
   &= x^T Y^T M^{-\tfrac12} L_B M^{-\tfrac12} Y x.
\end{align*}
 Therefore, the largest eigenvalue of $\nabla^2 U(\tilde \theta)$ is at most $\lambda_N $. 
\end{proof}
Since $\tilde \theta^*(t) \in \mathbb{S}(2\rho)$, it follows from Lemma~\ref{lm: Hessian} that $\nabla^2 U(\tilde \theta^*(t))$ is positive definite and the time derivative of $\tilde \theta^*(t)$ can be derived explicitly from \eqref{eq:theta tilde dot equation} as
\begin{align*}
    \dot{\tilde \theta}^* = [\textcolor{black}{k} \nabla^2 U(\tilde \theta^*) ]^{-1} \frac{d\tilde f}{dt}.
\end{align*}

In addition, by Lemma~\ref{lm: Hessian} and noting that $\mathbb{S}(\rho)$ is convex, we can obtain the following inequalities for all $\tilde \theta, \tilde \theta^\prime$ in $\mathbb{S}(\rho)$ with any $\rho \in (0,\frac \pi 2)$:
\begin{align}
            \sin(\rho)\lambda_{2}|\tilde \theta  - \tilde \theta^\prime | & \leq |\nabla U(\tilde \theta) - \nabla U(\tilde \theta^\prime)|  \leq   \lambda_N |\tilde \theta  - \tilde \theta^\prime |,
            \label{eq: grad U ineq}
            \\
           \frac 1 2  \sin(\rho)\lambda_{2} |\tilde \theta  - \tilde \theta^\prime |^2 &\leq  U(\tilde \theta) -   U(\tilde \theta^\prime) - \nabla U(\tilde \theta^\prime)^T (\tilde \theta - \tilde \theta^\prime)\notag  \\
           &\leq \frac 1 2\lambda_N |\tilde \theta  - \tilde \theta^\prime |^2 .\label{eq: Wp ineq}
        \end{align}

With these in mind, we formally define the error variables 
    \begin{align*}
       \delta_\omega(t) &:= \omega(t) - \mathbb{1}_N \omega_b(t),\\
  \delta_\theta(t) &:= \tilde{\theta}(t) - \tilde \theta^*(t).
    \end{align*}
Then the dynamics of $\delta_\omega$ is given as
\begin{equation*}
    \begin{aligned}
     \dot \delta_\omega &= M^{-1}(f(\omega) + \xi - \textcolor{black}{k}M^{\frac 1 2}Y \nabla U(\tilde \theta)) - \mathbb{1}_N \dot \omega_b \\
     & = M^{-1}(\underbrace{f(\omega) - f(\mathbb{1}_N \omega_b)}_{:= \Delta f})+M^{-1} (f(\mathbb{1}_N \omega_b)+ \xi )  - \mathbb{1}_N \dot \omega_b \\
     & - \textcolor{black}{k} M^{-\frac 1 2}Y (\nabla U(\tilde \theta)  - \nabla U(\tilde \theta^*)) - \textcolor{black}{k} M^{-\frac 1 2}Y \nabla U(\tilde \theta^*)\\
    &=  M^{-1}\Delta f - \textcolor{black}{k} M^{-\frac 1 2}Y (\nabla U(\tilde \theta)  - \nabla U(\tilde \theta^*)),
    \end{aligned}
\end{equation*}
where the cancellation in the last step follows from the definition of $\tilde \theta^*$ in \eqref{eq: theta* 1 nonlinear}, i.e., $M^{-1}(f(\mathbb{1}_N \omega_b)+ \xi)- k M^{-\frac 1 2}Y \nabla U(\tilde \theta^*)= \mathbb{1}_N\dot \omega_b$. 

The dynamics of $\delta_\theta$ is given as
\begin{equation*}
    \begin{aligned}
         \dot{e}_\theta
      &= Y^T M^{\frac 1 2} (\omega - \mathbb{1}_N \omega_b) - [\textcolor{black}{k}\nabla^2 U(\tilde \theta^*) ]^{-1} \frac{d\tilde f}{dt}\\
      &=  Y^T M^{\frac 1 2}\delta_\omega - [\textcolor{black}{k}\nabla^2 U(\tilde \theta^*) ]^{-1} \frac{d\tilde f}{dt}.
    \end{aligned}
\end{equation*}
\subsection{Lyapunov Function Analysis}

The next step is to construct a Lyapunov function $V$ and show that $V$ declines into a small neighborhood of the origin.
Consider the following Lyapunov function candidate%
\begin{equation}\label{eq:V def nonlinear}
\begin{aligned}
    \ \ V &
    := W_k(\delta_\omega) + W_p(\tilde \theta,\tilde \theta^*) +\eta W_c (\delta_\omega,\tilde \theta,\tilde \theta^*)
\end{aligned}
\end{equation}
with 
\begin{align*}
W_k(\delta_\omega) &:= \frac 1 2\delta_\omega^T M\delta_\omega,\\
    W_p (\tilde \theta,\tilde \theta^*)&:=  \textcolor{black}{k} \left(U(\tilde \theta) -   U(\tilde \theta^*) - \nabla U(\tilde \theta^*)^T (\tilde \theta - \tilde \theta^*) \right),\\
    W_c (\delta_\omega,\tilde \theta,\tilde \theta^*)&:= \left(\nabla U(\tilde \theta) -   \nabla U(\tilde \theta^*)\right)^T Y^T M^{\frac 1 2}\delta_\omega.
\end{align*}
Here $\eta >0$ is a positive parameter to design, which aims to introduce appropriate cross terms in the Lyapunov analysis. 
Note that the design of the cross terms $W_c$ here is slightly different from that in the proof of Theorem\ref{thm: thm1 general}, thus the selection of $\eta$ will be adjusted accordingly.

In the following lemma, we show that $V$ is a well-defined Lyapunov function when $\tilde \theta$ and $\tilde \theta^*$ belong to $\mathbb{S}(\rho)$ and $\eta$ is properly chosen.
\begin{lemma}
\label{lm:posi V}
   Given any $\rho \in (0,\frac{\pi}{2})$, for all $\tilde \theta, \tilde \theta^*$ in $\mathbb{S}(\rho)$,
   the function $V$ in \eqref{eq:V def nonlinear} satisfies
       \begin{align}\label{eq: V upper bound in lemma}
           V&\leq \frac{3}{4}\delta_\omega^T M \delta_\omega+ (\frac 1 2  k  \lambda_N + \eta^2 \lambda_N^2)|\delta_\theta|^2, 
       \end{align}
       and 
       \begin{align}\label{eq: V lower bound in lemma}
           V & \geq \frac{1}{4} \delta_\omega^T M \delta_\omega + (\frac{1}{2} k \lambda_2\sin(\rho) - \eta^2\lambda_N^2)|\delta_\theta|^2.
       \end{align}
\end{lemma}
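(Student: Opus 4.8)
The plan is to bound the three constituent pieces of $V$ separately and then recombine them, using exactly the two quadratic-type estimates \eqref{eq: grad U ineq} and \eqref{eq: Wp ineq} that were just established for arguments in $\mathbb{S}(\rho)$. Throughout I write $\delta_\theta = \tilde\theta - \tilde\theta^*$, and I note that both $\tilde\theta$ and $\tilde\theta^*$ lie in $\mathbb{S}(\rho)$ by hypothesis, so these inequalities apply with $\tilde\theta' = \tilde\theta^*$.

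First I would dispense with the two ``diagonal'' terms. The kinetic term $W_k = \tfrac12 \delta_\omega^T M \delta_\omega$ needs no manipulation. For the potential term, since $W_p = k\bigl(U(\tilde\theta) - U(\tilde\theta^*) - \nabla U(\tilde\theta^*)^T \delta_\theta\bigr)$ and $k>0$, the Bregman-type bound \eqref{eq: Wp ineq} immediately yields the two-sided estimate $\tfrac12 k \sin(\rho)\lambda_2 |\delta_\theta|^2 \le W_p \le \tfrac12 k \lambda_N |\delta_\theta|^2$, which already supplies the $|\delta_\theta|^2$ coefficients appearing in both target inequalities.

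The crux is the cross term $\eta W_c$, which must be absorbed so as to leave precisely the coefficients $\tfrac34$ and $\tfrac14$ on $\delta_\omega^T M \delta_\omega$. By Cauchy--Schwarz, $|W_c| \le |\nabla U(\tilde\theta) - \nabla U(\tilde\theta^*)| \cdot |Y^T M^{1/2}\delta_\omega|$. The gradient factor is controlled by the Lipschitz bound in \eqref{eq: grad U ineq}, giving $|\nabla U(\tilde\theta) - \nabla U(\tilde\theta^*)| \le \lambda_N |\delta_\theta|$; the velocity factor satisfies $|Y^T M^{1/2}\delta_\omega|^2 = \delta_\omega^T M^{1/2} Y Y^T M^{1/2}\delta_\omega \le \delta_\omega^T M \delta_\omega$, because $Y$ has orthonormal columns and hence $Y Y^T \le I_N$. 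Combining, $|W_c| \le \lambda_N |\delta_\theta|\sqrt{\delta_\omega^T M \delta_\omega}$. Applying the weighted Young inequality $ab \le \tfrac14 a^2 + b^2$ (equivalently $(\tfrac12 a - b)^2 \ge 0$) with $a = \sqrt{\delta_\omega^T M \delta_\omega}$ and $b = \eta \lambda_N |\delta_\theta|$ then gives $|\eta W_c| \le \tfrac14 \delta_\omega^T M \delta_\omega + \eta^2 \lambda_N^2 |\delta_\theta|^2$.

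Finally I would assemble the estimates. For the upper bound \eqref{eq: V upper bound in lemma}, adding $W_k = \tfrac12 \delta_\omega^T M \delta_\omega$, the upper estimate on $W_p$, and the bound on $|\eta W_c|$ produces $V \le \tfrac34 \delta_\omega^T M \delta_\omega + (\tfrac12 k \lambda_N + \eta^2 \lambda_N^2)|\delta_\theta|^2$. For the lower bound \eqref{eq: V lower bound in lemma}, I start from $V \ge W_k + W_p - |\eta W_c|$, insert the lower estimate on $W_p$, and subtract the same cross-term bound, obtaining $V \ge \tfrac14 \delta_\omega^T M \delta_\omega + (\tfrac12 k \lambda_2 \sin(\rho) - \eta^2 \lambda_N^2)|\delta_\theta|^2$. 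The only delicate point is the choice of the Young splitting weight: the coefficient $\tfrac14$ on $a^2$ is exactly what leaves $\tfrac34$ (respectively $\tfrac14$) of the kinetic term after adding (respectively subtracting) the cross-term bound, so the stated constants are forced rather than tunable. The secondary fact not to overlook is the projection inequality $Y Y^T \le I_N$; once these are in place, the result follows from a direct summation with no further obstacle.
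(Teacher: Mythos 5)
Your proposal is correct and follows essentially the same route as the paper: bound $W_p$ two-sidedly via the Bregman inequality \eqref{eq: Wp ineq}, control the cross term with the Lipschitz bound \eqref{eq: grad U ineq}, the projection estimate $|Y^T M^{1/2}\delta_\omega|^2 \le \delta_\omega^T M \delta_\omega$, and a weighted Young inequality, then sum. Your derivation in fact lands cleanly on the coefficient $\eta^2\lambda_N^2$ in the lower bound, matching the lemma statement exactly.
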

\begin{proof}
    It follows from \eqref{eq: Wp ineq} that
    \begin{align*}
       \frac { k \sin(\rho)\lambda_{2}}{ 2} |\delta_\theta|^2 \leq  W_p (\tilde \theta,\tilde \theta^*) \leq  \frac { k \lambda_N}{ 2} |\delta_\theta|^2.
    \end{align*}
    Besides, use the Young inequalities to obtain
   \begin{align*}
       \eta W_c (\delta_\omega,\tilde \theta,\tilde \theta^*) &\leq  \frac{1}{4}|Y^T M^{\frac 1 2}\delta_\omega|^2 + \eta^2 |\nabla U(\tilde \theta) -   \nabla U(\tilde \theta^*)|^2\\
      & \leq \frac{1}{4}\delta_\omega^T M \delta_\omega
     + \eta^2 \lambda_N^2 |\delta_\theta|^2,\\
      \eta W_c (\delta_\omega,\tilde \theta,\tilde \theta^*)  
      & \geq -\frac{1}{4}|Y^T M^{\frac 1 2}\delta_\omega|^2 - \eta^2 |\nabla U(\tilde \theta) -   \nabla U(\tilde \theta^*)|^2\\
      & \geq -\frac{1}{4}\delta_\omega^T M \delta_\omega - \eta^2 \sin^2(\rho)\lambda_2^2  |\delta_\theta|^2,
    \end{align*}
    where we use $|Y^T M^{\frac 1 2}\delta_\omega|^2  \leq |M^{\frac 1 2}\delta_\omega|^2 = \delta_\omega^T M \delta_\omega$ and the inequality \eqref{eq: grad U ineq}. Putting the above inequalities into the definition of $V$, we arrive at \eqref{eq: V upper bound in lemma} and \eqref{eq: V lower bound in lemma}.
\end{proof}
According to Lemma~\ref{lm:posi V}, we have the following requirement on the choice of $\eta$:
\begin{align}\label{eq: eta require nonlinear}
\eta^2 < \frac{\textcolor{black}{k}\lambda_2 \sin(\rho)}{2\lambda_N^2}.
\end{align}

The next step is to show that $\dot V \leq - c V + \kappa$ with some $c>0 ,\kappa \geq 0$ as long as $\tilde \theta(t) \in  \mathbb{S}(\rho),\ \forall t > 0$. 
To achieve this, we start by developing an upper bound of $\dot V$ term by term.

For the first term $W_k(\delta_\omega)$, its time derivative is given as
\begin{align*}
   \delta_\omega^T M\dot \delta_\omega &=   \delta_\omega^T \Delta f - k \delta_\omega^T M^{\frac 1 2} Y (\nabla U(\tilde \theta) - \nabla U(\tilde \theta^*)).
\end{align*}
Since $f(\omega)= [f_1(\omega_1),\dots,f_N(\omega_N)]^T$, by the mean-value theorem, there exists some $z \in \mathbb{R}^N$ such that
\begin{align*}
    \Delta f = \frac{\partial f}{\partial \omega}\Bigg|_z \cdot \delta_\omega
\end{align*}
where $\frac{\partial f}{\partial \omega}\big|_z  = \operatorname{diag}(\frac{\partial f_i}{\partial \omega_i}\big|_{z_i},i \in \mathcal{N})\leq -\mu M$ by Assumption~\ref{assump1}. 
Then we can bound $\delta_\omega^T M\dot \delta_\omega$ as
\begin{align*}
   \delta_\omega^T M\dot \delta_\omega 
& \leq -\mu \delta_\omega^T M\delta_\omega -\textcolor{black}{k} \delta_\omega^T M^{\frac 1 2} Y (\nabla U(\tilde \theta) - \nabla U(\tilde \theta^*)).
\end{align*}

For the second term $W_p (\tilde \theta,\tilde \theta^*)$, its time derivative is given as 
\begin{align*}
  &  k [\nabla U(\tilde \theta)^T \dot{\tilde \theta} - \nabla U(\tilde \theta^*)^T \dot{\tilde \theta}^*  \\
  &- \nabla U(\tilde \theta^*)^T (\dot{\tilde \theta} - \dot{\tilde \theta}^*)
    - (\tilde \theta -  \tilde \theta^*)^T \nabla^2 U(\tilde \theta^*) \dot{\tilde \theta}^*
]\\
&= \textcolor{black}{k}[      (\nabla U(\tilde \theta) - \nabla U(\tilde \theta^*) ) ^T Y^T M^{\frac 1 2}\delta_\omega   -\textcolor{black}{\frac 1 k}  (\tilde \theta -  \tilde \theta^*)^T \frac{d\tilde f}{dt}
],
\end{align*}
where we use $\dot {\tilde \theta} = Y^T M^{\frac 1 2}\omega =  Y^T M^{\frac 1 2} \delta_\omega$ and $\textcolor{black}{k}\nabla^2  U(\tilde \theta^*) \dot{\tilde \theta}^* =  d\tilde f/dt$.

For the last term $\eta W_c (\delta_\omega,\tilde \theta,\tilde \theta^*)$, its time derivative is given as
\begin{align*}
& \eta [\nabla U(\tilde \theta) -  \nabla  U(\tilde \theta^*)]^T Y^T M^{-\frac 1 2} \Delta f  \\
&- \textcolor{black}{k} \eta [\nabla U(\tilde \theta) -  \nabla U(\tilde \theta^*)]^T[\nabla U(\tilde \theta) - \nabla U(\tilde \theta^*)]\\
&+ \eta \delta_\omega^T M^{\frac 1 2}Y (\nabla^2 U(\tilde \theta) \dot{\tilde \theta} - \nabla^2 U(\tilde \theta^*) \dot{\tilde \theta}^* )\\
&= \eta [\nabla U(\tilde \theta) -  \nabla  U(\tilde \theta^*)]^T Y^T M^{-\frac 1 2} \frac{\partial f}{\partial \omega}\Big|_z  \delta_\omega \\
& - \textcolor{black}{k} \eta |\nabla U(\tilde \theta) - \nabla U(\tilde \theta^*)|^2 \\
& + \eta \delta _\omega^T M^{\frac 1 2} Y\nabla^2 U(\tilde \theta) Y^T M^{\frac 1 2} \delta_\omega \\
& - \eta (\textcolor{black}{\frac 1 k}) \delta_\omega^T M^{\frac 1 2}Y \frac{d \tilde f}{dt}.
\end{align*}
Summing up the above terms, we arrive at
\begin{equation}\label{eq: Vdot upper bound nonlinear 1}
    \begin{aligned}
         \dot V & \leq -\mu   \delta_\omega^T  M \delta_\omega  +  \eta \delta _\omega^T M^{\frac 1 2} Y\nabla^2 U(\tilde \theta) Y^T M^{\frac 1 2} \delta_\omega   -\textcolor{black}{ k} \eta |\nabla U(\tilde \theta) - \nabla U(\tilde \theta^*)|^2 \\
    & + \eta [\nabla U(\tilde \theta) -  \nabla  U(\tilde \theta^*)]^T Y^T M^{-\frac 1 2} \frac{\partial f}{\partial \omega}\Big|_z\delta_\omega\\
    &- \delta_\theta^T \frac{d \tilde f}{dt}- \eta (\textcolor{black}{\frac 1 k}) \delta_\omega^T M^{\frac 1 2}Y \frac{d \tilde f}{dt}.
    \end{aligned}
\end{equation}
When $\tilde \theta(t) \in  \mathbb{S}(\rho)$, we can incorporate the maximum eigenvalue of $ \nabla^2 U(\tilde \theta)$ in Lemma~\ref{lm: Hessian} and the bounds of $\nabla U(\tilde \theta) - \nabla U(\tilde \theta^*) $ in \eqref{eq: grad U ineq}. Then we can further derive upper bounds on the terms in \eqref{eq: Vdot upper bound nonlinear 1} as:
\begin{align*}
    \eta \delta _\omega^T M^{\frac 1 2} Y\nabla^2 U(\tilde \theta) Y^T M^{\frac 1 2} \delta_\omega 
    \leq \eta \lambda_N|Y^T M^{\frac 1 2}\delta_\omega|^2 &\leq \eta \lambda_N \delta_\omega^T M \delta_\omega,
   \\
   -\textcolor{black}{ k} \eta |\nabla U(\tilde \theta) - \nabla U(\tilde \theta^*)|^2  & \leq -\textcolor{black}{ k} \eta\sin^2(\rho)\lambda_2^2 |\delta_\theta|^2,
   \\
    \eta [\nabla U(\tilde \theta) -  \nabla  U(\tilde \theta^*)]^T Y^T M^{-\frac 1 2} \frac{\partial f}{\partial \omega}\Big|_z\delta_\omega & \leq \eta \lambda_N|\delta_\theta|L |M^{\frac  1 2}\delta_\omega|,
    \\
    - \eta (\textcolor{black}{\frac 1 k}) \delta_\omega^T M^{\frac 1 2}Y \frac{d \tilde f}{dt} & \leq   \eta (\textcolor{black}{\frac 1 k}) |M^{\frac  1 2}\delta_\omega| |\frac{d \tilde f}{dt}|.
\end{align*}
Substituting these inequalities into \eqref{eq: Vdot upper bound nonlinear 1} gives
\begin{equation*}
    \begin{aligned}
       \dot V & \leq -(\mu - \eta \lambda_N) \delta_\omega^T M\delta_\omega    - \textcolor{black}{k} \eta \sin^2(\rho)\lambda_2^2 |\delta_\theta|^2 \\
    & +\eta \lambda_N|\delta_\theta|L |M^{\frac  1 2}\delta_\omega|\\
    &+ |\delta_\theta| | \frac{d \tilde f}{dt}| +  \eta (\textcolor{black}{\frac 1 k}) |M^{\frac  1 2}\delta_\omega| |\frac{d \tilde f}{dt}|.
    \end{aligned}
\end{equation*}

Now, we further use the Young inequalities to bound the cross terms and first-order terms in the above upper bound of $\dot V$. 
\begin{subequations}
    \begin{align}
        \eta \lambda_NL  |\delta_\theta| |M^{\frac  1 2}\delta_\omega| & \leq \frac{\textcolor{black}{k}\eta\lambda_2^2\sin^2(\rho)}{4}|\delta_\theta|^2 + \frac{\eta \lambda_N^2 L^2 }{\textcolor{black}{k} \lambda_2^2\sin^2(\rho)} |M^{\frac  1 2}\delta_\omega|^2,\\
        |\delta_\theta||\frac{d\tilde f}{dt}| & \leq \frac{\textcolor{black}{k}\eta\lambda_2^2\sin^2(\rho)}{4}|\delta_\theta|^2  + \frac{1}{\textcolor{black}{k} \eta \lambda_2^2\sin^2(\rho)} |\frac{d\tilde f}{dt}|^2,\\
        \frac{\eta}{\textcolor{black}{k}} |M^{\frac  1 2}\delta_\omega||\frac{d\tilde f}{dt}| & \leq \frac{\eta \lambda_2^2\sin^2(\rho)}{4\textcolor{black}{k}\lambda_N^2 L^2}|\frac{d\tilde f}{dt}|^2 + \frac{\eta\lambda_N^2 L^2}{\textcolor{black}{k}\lambda_2^2\sin^2(\rho)}|M^{\frac  1 2}\delta_\omega|^2.
    \end{align}
\end{subequations}
Using the inequalities above and collecting the coefficients of the quadratic terms, we obtain
\begin{align*}
    \dot V & \leq -(\mu - \textcolor{black}{\eta}\lambda_N - \frac{2\textcolor{black}{\eta} \lambda_N^2 L^2}{\textcolor{black}{k} \lambda_2^2\sin^2(\rho)} )\delta_\omega^T M \delta_\omega\\
    & - \frac 1 2 \textcolor{black}{k} \textcolor{black}{\eta} \lambda_2^2 \sin^2(\rho)|\delta_\theta|^2 \\
    &  +  (\frac{1}{ \textcolor{black}{k} \textcolor{black}{\eta} \lambda_2^2\sin^2(\rho)} + \frac{\textcolor{black}{\eta} \lambda_2^2\sin^2(\rho)}{4 \textcolor{black}{k} \lambda_N^2 L^2})|\frac{d\tilde f}{dt}|^2.
\end{align*}

To ensure that $\dot V$ is negative definite with a decay rate that can be explicitly certified, we impose the requirement:\footnote{Although it suffices to require the coefficient of $\delta_\omega^T M \delta_\omega$ to be positive, we enforce a margin of $\mu/2$ to derive a cleaner decay rate estimation without any minimum-type expressions, which can be seen later.}
\[
\mu - \textcolor{black}{\eta}\lambda_N - \frac{2\textcolor{black}{\eta} \lambda_N^2 L^2}{\textcolor{black}{k} \lambda_2^2\sin^2(\rho)} > \frac{\mu}{2},
\]
that is,
\begin{align*}
    \textcolor{black}{\eta} \leq \frac{\mu}{2\lambda_N \left(1 + \frac{2 \lambda_N L^2}{\textcolor{black}{k} \lambda_2^2 \sin^2(\rho)}\right)}.
\end{align*}
Together with the requirement $\eta^2 < \textcolor{black}{k}\lambda_2 \sin(\rho)/(2\lambda_N^2)$ in \eqref{eq: eta require nonlinear}, we have a convenient explicit choice of $\eta$ as
\begin{align}\label{eq: eta* nonlinear}
    \textcolor{black}{\eta}^* :  = 
\frac{1}{{\frac{2\lambda_N }{\mu}\left(1 + \frac{2 \lambda_N L^2}{\textcolor{black}{k} \lambda_2^2 \sin^2(\rho)}\right)} + \sqrt{\frac{2\lambda_N^2}{\textcolor{black}{k}\lambda_2\sin(\rho)}}},
\end{align}
which uses $\frac{1}{1/A + 1/B} < \min\{A,B\}$ for any $A,B > 0$. 
Then the upper bound of $\dot V$ can be updated as
\begin{equation}
    \begin{aligned}
        \dot V &\leq  - \frac{\mu}{2} \delta_\omega^T M \delta_\omega - \frac 1 2\textcolor{black}{k} \textcolor{black}{\eta}^* \lambda_2^2 \sin^2(\rho)|\delta_\theta|^2\\
        &+(\underbrace{\frac{1}{ \textcolor{black}{k} \textcolor{black}{\eta}^* \lambda_2^2\sin^2(\rho)}}_{:=\varphi_1(k)} + \underbrace{\frac{\textcolor{black}{\eta}^* \lambda_2^2\sin^2(\rho)}{4 \textcolor{black}{k} \lambda_N^2 L^2} }_{:= \varphi_2(k)}
        )|\frac{d\tilde f}{dt}|^2.
    \end{aligned}
\end{equation}
Note from Lemma~\ref{lm:posi V} that
\begin{align*}
V &\leq \frac{3}{4}\delta_\omega^T M \delta_\omega+ (\frac{1}{2}\textcolor{black}{k} \lambda_N + \textcolor{black}{{\eta}^*}^2\lambda_N^2)|\delta_\theta|^2 \\
    & \leq  \frac{3}{4}\delta_\omega^T M \delta_\omega  + \frac{1}{2}\textcolor{black}{k}(\lambda_N+ 
\lambda_2 \sin(\rho))|\delta_\theta|^2,
\end{align*}
where the second step uses ${\eta^*}^2 < \textcolor{black}{k}\lambda_2 \sin(\rho)/(2\lambda_N^2)$. Comparing the upper bounds of $\dot V$ and $V$, we arrive at
\begin{equation}\label{eq: Vdot upper bound nonlinear 2}
    \begin{aligned}
        \dot V \leq - c V +    (\varphi_1(k) + \varphi_2(k))|\frac{d\tilde f}{dt}|^2,
    \end{aligned}
\end{equation}
with 
\begin{align*}
    c&:= \min\{ \frac 2 3\mu ,  \frac{\textcolor{black}{\eta}^* \lambda_2^2 \sin^2(\rho)}{\lambda_N+ 
\lambda_2 \sin(\rho)}\}\\
    & =\frac{\textcolor{black}{\eta}^* \lambda_2^2 \sin^2(\rho)}{\lambda_N+ 
\lambda_2 \sin(\rho)}.
\end{align*}
Here the $\min\{\cdot,\cdot\}$ operator is removed by observing that
\begin{align*}
    \frac{\textcolor{black}{\eta}^* \lambda_2^2 \sin^2(\rho)}{\lambda_N+ 
\lambda_2 \sin(\rho)} < \frac{\mu}{2\lambda_N} \frac{\lambda_2^2 \sin^2(\rho)} {\lambda_N} \leq \frac{1}{2} \mu< \frac{2}{3}\mu,
\end{align*} using $\textcolor{black}{\eta}^* < \frac{\mu}{2\lambda_N} $.

Recall that $\tilde f= Y^T M^{-\frac 1 2}(f(\mathbb{1}_N \omega_b) + \xi)$, where the dynamics of $\omega_b$ and the signal $\xi$ in the nonlinear power flow setting remain identical to that in the linear case. This allows us to substitute the upper bound of $|d\tilde f/dt|^2$  in Lemma~\ref{lm: d tilde f dt} into \eqref{eq: Vdot upper bound nonlinear 2}. Applying the comparison lemma then yields
\begin{equation}\label{eq: Vdot upper nonlinear 3}
    \begin{aligned}
         V(t) & \leq \left( \bar V(0_+)- \frac{2[\varphi_1(k)+ \varphi_2(k)]NM_b C^2 (1+ L/\mu)^2}{c} \right)e^{-ct} \\
    &+ \frac{2[\varphi_1(k)+ \varphi_2(k)]NM_b C^2 (1+ L/\mu)^2}{c}, \ \forall t > 0,
    \end{aligned}
\end{equation}
where
\begin{align*}
  \bar V(0_+):= V(0_+) + \frac{ 2NL^2 [\varphi_1(k)+ \varphi_2(k)]   |f_b(\omega_b(0))+ \xi_b(0_+)|^2 }{M_b (2\mu - c)}.
\end{align*}

For the inequality \eqref{eq: Vdot upper nonlinear 3} to lead to the claimed convergence, we must guarantee that the solutions $\tilde \theta(t)$ would not leave $\mathbb{S}(\rho)$. To do so, we study the sublevel set of $V$ and find one that is contained in $\mathbb{S}(\rho)$.
Define
\begin{align}\label{eq:Vc def}
    V_c :=\frac{( k \lambda_2\sin(\rho) -{\eta^*}^2\lambda_N^2)\rho^2}{2|A^T M^{-\frac 1 2}Y|^2_{2 \to \infty}},
\end{align}
where $|\cdot|_{2 \to \infty}$ is the induced $2 \to \infty$ operator norm. For all $\delta_\omega, \tilde \theta, \tilde \theta^*$ that satisfy $V \leq V_c$, we have
\begin{align*}
    |\delta_\theta|^2 \leq \frac{V}{\frac 1 2 ( k \lambda_2\sin(\rho) -{\eta^*}^2\lambda_N^2)} \leq \frac{\rho^2}{|A^T M^{-\frac 1 2}Y|^2_{2 \to \infty}}.
\end{align*}
Recall in Lemma~\ref{lm:power flow feasi} that $\tilde\theta^* \in \mathbb{S}(2\rho)$, thus the above inequality implies 
\begin{align*}
   | A^T M^{-\frac 1 2}Y \tilde \theta |_{\infty} &\leq | A^T M^{-\frac 1 2} Y \tilde \theta^* |_{\infty} + | A^T M^{-\frac 1 2}Y \delta_\theta |_{\infty}\\
   & \leq \frac{\pi}{2} - 2\rho + |A^TM^{-\frac 1 2} Y|_{2 \to \infty} |\delta_\theta|\\
   & \leq \frac{\pi}{2} - 2\rho +  \rho =  \frac{\pi}{2} - \rho .
\end{align*}
Therefore, to ensure $\tilde\theta(t)\in \mathbb{S}(2\rho)$, it suffices to guarantee $V(t) \leq V_c, \ \forall t  > 0$, which requires the upper bound of $V(t)$ in \eqref{eq: Vdot upper nonlinear 3} to lie below $V_c$ both in the limit and at $t = 0_+$.
On the one hand, the limiting bound is below $V_c$ whenever
\begin{align*}
  C=  \sup_{t > 0}\max_{i\in\mathcal N}\frac{|\dot\xi_i(t)|}{M_i} \;\le\; \bar C,  
\end{align*}
where
\begin{align*}
    \bar C:= \sqrt{ \frac{c\, \rho^2 (k\lambda_2\sin(\rho) -{\eta^*}^2\lambda_N^2)}
    {4 N M_b(1+ L/\mu)^2 \|A^T M^{-\frac 1 2}Y\|_{2 \to \infty}^2\,[\varphi_1(k)+ \varphi_2(k)] }}.
\end{align*}
On the other hand, for any given $\xi(0_+)$ (which determines $\tilde \theta^*(0_+)$), the admissible set of the initial states is defined as \begin{align*}
    \mathcal{X}_c:= \left\{(\omega(0),\theta(0)): \bar V(0_+)  \leq V_c
    \right\}.
\end{align*}
By the above construction, if the disturbance $\xi(t)$ satisfies $\max_{i\in\mathcal N}$ \\ $|\dot\xi_i(t)|/M_i \le \bar C$ for all $t>0$ and the initial state lies in $\mathcal{X}_c$, then $V(t)\le V_c$ for all $t>0$, and consequently $\tilde\theta(t)\in\mathbb{S}(\rho)$.

In addition, recall that for Lemma~\ref{lm:power flow feasi} to hold, we require \[
\frac{|\mathbb{1}_N^T M \omega(0)|}{N M_b} \leq \frac{|\xi_b(0_+)|}{\mu M_b } + \frac{k \lambda_2^L \cos(2\rho)}{8 L (\max_i M_i)}:= \varphi_3.
\]
Therefore, the initial states should be further restricted in the following set
\begin{align}\label{eq: X set def}
     \mathcal{X}:= \mathcal{X}_c\cap\left\{(\omega(0),\theta(0)):\frac{|\mathbb{1}_N^T M \omega(0)|}{N M_b} \leq \varphi_3
    \right\}.
\end{align}
Note that the set $\mathcal{X}$ is non-empty. Consider an initial state defined by $ \omega_i(0)  =\omega_b(0) = f_b^{-1}(-\xi_b(0_+)),\ \forall i \in \mathcal{N}$ and $\tilde \theta(0) = \tilde \theta^*(0_+)$. For this choice, we have $\bar V(0_+) = 0 \leq V_c$ and $|\mathbb{1}_N^T M \omega(0)|/(NM_b) = |\omega_b(0)|\leq |\xi_b(0_+)|/(\mu M_b) $ using the mean-value theorem and Assumption~\ref{assump1}. Thus this initial state satisfies both requirements for membership in $\mathcal{X}$, confirming that $\mathcal{X}$ is non-empty. As will become evident from Appendix~\ref{sec: proof of steady state nonlinear}, such initial states are precisely the steady states determined by $\xi(0_+)$, and $\mathcal{X}$ actually restricts $(\omega(0),\theta(0))$ to be not too far from these steady states. 

Now we are able to use the upper bound of $V(t)$ in \eqref{eq: Vdot upper nonlinear 3} to obtain
\begin{equation}\label{eq: final conclusion in proof nonlinear}
    \begin{aligned}
        & |\omega_i(t) - \omega_b(t)|^2\\
   \leq &\frac{1}{M_i }\delta_\omega^T (t)M \delta_\omega(t)\\
     \leq &\frac{4}{M_i }V(t)\\
       \leq &\alpha e^{-c t} + \beta C^2,\ \forall t > 0,
    \end{aligned}
\end{equation}
where 
\begin{align*}
    \alpha & := \frac{4}{\min_i M_i} \left( \bar V(0_+)- \frac{2[\varphi_1(k)+ \varphi_2(k)]NM_b C^2 (1+ L/\mu)^2}{c} \right),\\
    \beta &:= \frac{8[\varphi_1(k)+ \varphi_2(k)]NM_b (1+ L/\mu)^2}{c\min_i M_i}.
\end{align*}

In particular, inserting the explicit expressions for $\eta^*$ (as defined in  \eqref{eq: eta* nonlinear}) into $\varphi_1(k)$ and $\varphi_2(k)$ yields
\begin{align*}
    \varphi_1(k)&=\frac{1}{ \textcolor{black}{k} \textcolor{black}{\eta}^* \lambda_2^2\sin^2(\rho)}\\
    & = {\frac{2\lambda_N }{\mu \textcolor{black}{k}\lambda_2^2\sin^2(\rho)}\left(1 + \frac{2 \lambda_N L^2}{\textcolor{black}{k}  \lambda_2^2 \sin^2(\rho)}\right)} + \frac{\sqrt{2}\lambda_N}{ \textcolor{black}{k} \lambda_2^2\sin^2(\rho)\sqrt{\textcolor{black}{k} \lambda_2\sin(\rho)}},\\
     \varphi_2(k)&= \frac{\textcolor{black}{\eta}^* \lambda_2^2\sin^2(\rho)}{4k\lambda_N^2 L^2} \\
    & =  \frac{\lambda_2^2\sin^2(\rho)}{4\textcolor{black}{k} \lambda_N^2 L^2 \left[{\frac{2\lambda_N }{\mu}\left(1 + \frac{2 \lambda_N L^2}{\textcolor{black}{k}  \lambda_2^2 \sin^2(\rho)}\right)} + \sqrt{\frac{2\lambda_N^2}{\textcolor{black}{k} \lambda_2\sin(\rho)}}\right]},
\end{align*}
which shows that $\varphi_1(k)$ and $\varphi_2(k)$ are both strictly decreasing in $k$ and tend to $0$ as $k \to \infty$.

Now we can analyze the dependence of $c$ and $\beta$ on $k$. Since
\begin{align*}
    c=\frac{\textcolor{black}{\eta}^* \lambda_2^2 \sin^2(\rho)}{\lambda_N+ 
\lambda_2 \sin(\rho)},
\end{align*}
in which $\eta^*$ is strictly increasing in $k$, we obtain that $c$ is strictly increasing in $k$, and thus $\beta$ is strictly decreasing in $k$. Moreover, since $\textcolor{black}{\eta}^* \to \frac{\mu}{2\lambda_N}$ as $k \to \infty$, we have
\begin{align*}
   \lim_{k \to \infty} c = \frac{ \lambda_2^2 \sin^2(\rho)}{2\lambda_N(\lambda_N+ 
\lambda_2 \sin(\rho))}\mu,
\end{align*}
and thus $\lim_{k \to \infty} \beta = 0$. This completes the proof.

\section{Existence of Solutions to the Steady State Conditions \eqref{eq:steady state condition nonlinear} }\label{sec: proof of steady state nonlinear}
In this section, we show that there exist solutions $(\omega(0),\theta(0))$ to the steady state conditions \eqref{eq:steady state condition nonlinear} under nonlinear power flows when $\xi(0_-)$ is restricted by Assumption~\ref{assump:feasible}.

The existence and uniqueness of the synchronized frequency solution $$\omega_i(0) = \omega_s(0) = f_b^{-1}(-\xi_b(0_-))$$ is established using the exact same arguments as in the linear power flow case. That is because the sum of the sine coupling terms over the entire network is also zero. 

To solve for $\theta(0)$, we rewrite the condition \eqref{eq:steady 1 nonlinear} in a compact form based on the coordinate transformation in Appendix~\ref{prof: thm3}. Following a similar derivation to that of \eqref{eq:lemma feasible eq1}, this condition can be expressed as
\begin{equation}\label{eq:compact_form_proof nonlinear}
    k M^{\frac 1 2}Y \nabla U(\tilde \theta(0)) = g_0,
\end{equation}
where $g_0 = f(\mathbb{1}_N\omega_s(0)) + \xi(0_-)$, $\tilde \theta(0) = Y^TM^{\frac 1 2}\theta(0)$ is the transformed coordinate from \eqref{eq:coordinate transform def}, and $\nabla U(\cdot)$ is the same gradient function as defined in \eqref{eq:lemma feasible eq1}. 

We follow the same line of arguments as the proof of Lemma~\ref{lm:power flow feasi} to show that the above equation admits a unique solution $\tilde \theta(0)$ in $\mathbb{S}(2\rho)$. Similarly using the phase cohesiveness condition in \cite{dorfler2013synchronization}, we are required to show that
\begin{align*}
  |f(\mathbb{1}_N\omega_s(0)) + \xi(0_-)|_{\mathcal{E},\infty} \leq k \lambda_2^L \cos(2\rho).
 \end{align*}
 To achieve this, the only difference from the derivation in Lemma~\ref{lm:power flow feasi} is that we bound $|\omega_s(0)|$ instead of $|\omega_b(t)|$. By the mean-value theorem, since $f_b(0)=0$ and $|(f_b^{-1})^\prime|\leq 1/(M_b \mu)$ from Assumption~\ref{assump1}, we have
\begin{align*}
    |\omega_s(0)| \leq \frac{|\xi_b(0_-)|}{M_b \mu}.
\end{align*}
Substituting this bound leads to
\begin{align*}
    |f(\mathbb{1}_N\omega_s(0)) + \xi(0_-)|_{\mathcal{E},\infty} & \leq 2 L (\max_i M_i)\frac{|\xi_b(0_-)|}{M_b \mu} + |\xi|_{\mathcal{E},\infty}\\
    & \leq k \lambda_2^L \cos(2\rho),
\end{align*}
where the second inequality follows from Assumption~\ref{assump:feasible}. This confirms that \eqref{eq:compact_form_proof nonlinear} admits a unique solution $\tilde \theta(0)$ in $\mathbb{S}(2\rho)$. With the unique values of $\omega(0)$ and $\tilde \theta(0)$ determined, any $\theta(0)$ of the form $\mathbb{1}_N\bar \theta(0) + M^{-\frac 1 2}Y \tilde \theta(0)$ for any scalar $\bar \theta(0)$ solves the equations \eqref{eq:steady state condition nonlinear}. 

We can further derive a more explicit expression for $\tilde \theta(0)$. Similar to how we derive \eqref{eq:assump condi derivation}, we first left-multiply \eqref{eq:compact_form_proof nonlinear} by $\mathbb{1}_N^T $, which always holds true due to the definition of $\omega_s(0)$. Next we left-multiply \eqref{eq:compact_form_proof nonlinear} by $Y^T M^{-\frac 1 2}$, which yields:
\begin{align}
    k \nabla U(\tilde \theta(0)) &= Y^T M^{-\frac 1 2}(f(\mathbb{1}_N\omega_s(0)) + \xi(0_-)).
\end{align}
This relation will be further used in the proof of Proposition~\ref{coro: start from steady nonlinear} in Appendix~\ref{prof: coro2}.

\section{Proof of Proposition~\ref{coro: start from steady nonlinear}}\label{prof: coro2}

The proof follows that of Theorem~\ref{thm3:nonlinear} up to the final step in \eqref{eq: final conclusion in proof nonlinear}. The main differences here are twofold: First, we replace the constant $\alpha$ with a more specific form using the steady-state initialization. Second, we transform the requirement $(\omega(0), \theta(0)) \in \mathcal{X}$ into constraints on the initial abrupt changes $|\Delta \xi|$ of disturbances.

\subsection{Replacement of $\alpha$}
We start from the expression of $\alpha$ given in \eqref{eq: final conclusion in proof nonlinear}. Omitting the negative term leads to 
\begin{align*}
    \alpha & \leq  \frac{4}{\min_i M_i}  \bar V(0_+),
\end{align*}
where $\bar V(0_+)$ is previously defined as
\[
 \bar V(0_+)= V(0_+) + \frac{ 2NL^2 (\varphi_1(k)+ \varphi_2(k))   |f_b(\omega_b(0))+  \xi_b(0_+)|^2 }{M_b (2\mu - c)}.
\]
Now we proceed to calculate the two main terms in $\bar V(0_+)$ by incorporating the specific $\omega(0)$ and $\theta(0)$, which is defined in Appendix~\ref{sec: proof of steady state nonlinear} as
\begin{align*}
    \omega(0) &= \mathbb{1}_N \omega_s(0) =  \mathbb{1}_N f_b^{-1}(-\xi_b(0_-)),\\
    k \nabla U(\tilde \theta(0)) &= Y^T M^{-\frac 1 2}(f(\mathbb{1}_N\omega_s(0)) + \xi(0_-)),
\end{align*}
where $\tilde \theta(0) \in \mathbb{S}(2\rho)$. 
Since $\delta_\omega(0) = \omega(0) - \mathbb{1}_N\omega_b(0) = 0$, $V(0_+)$ is simplified to
\[
V(0_+) = k \left(U(\tilde \theta(0)) -    U(\tilde \theta^*(0_+)) - \nabla U(\tilde \theta^*(0_+))^T (\tilde \theta(0) - \tilde \theta^*(0_+)) \right).
\]
Using the inequalities for $U(\cdot)$ provided in \eqref{eq: grad U ineq} and \eqref{eq: Wp ineq}, we can bound $V(0_+)$ as follows:
\begin{align*}
    V(0_+) & \leq k\left(\frac{1}{2}\lambda_{N} |\tilde \theta(0) - \tilde \theta^*(0_+)|^2\right) \\
    & \leq k\left(\frac{1}{2}\lambda_{N} \frac{|\nabla U(\tilde \theta(0)) - \nabla U(\tilde \theta^*(0_+))|^2}{(\sin(\rho)\lambda_2)^2}\right).
\end{align*}
According to the definition of $\tilde \theta(0)$ and $\tilde\theta^*(0_+)$, the term involving the gradient difference can be expressed in terms of the disturbance jump $\Delta\xi$:
\begin{align*}
 & \quad   |\nabla U(\tilde \theta(0)) - \nabla U(\tilde \theta^*(0_+))|^2 \\
    &= \frac{1}{k^2}|Y^T M^{-\frac 1 2}(f(\mathbb{1}_N \omega_s(0)) + \xi(0_-)-f(\mathbb{1}_N \omega_b(0)) - \xi(0_+))|^2 \\
    &= \frac{1}{k^2}|Y^T M^{-\frac 1 2}\Delta \xi|^2 \quad (\text{since } \omega_s(0) = \omega_b(0)) \\
    &\leq \frac{1}{k^2 \min_i M_i }|\Delta \xi|^2.
\end{align*}
Substituting this back gives a bound for $V(0_+)$:
\begin{align*}
V(0_+) \leq \frac{\lambda_{N}}{2k(\min_i M_i)(\sin(\rho)\lambda_2)^2}|\Delta \xi|^2.
\end{align*}
In addition, the second term in $\bar V(0_+)$ depends on $|f_b(\omega_b(0))+  \xi_b(0_+)|^2$, which is now equal to $|\xi_b(0_+)-\xi_b(0_-)|^2 \leq |\Delta \xi|^2/N$. 

Substituting the above bounds into the definition of $\bar V(0_+)$ yields
\begin{align*}
 \bar V(0_+) 
 &\leq \underbrace{\left( \frac{\lambda_{N}}{2k(\min_i M_i)(\sin(\rho)\lambda_2)^2} + \frac{ 2L^2 (\varphi_1(k)+ \varphi_2(k)) }{M_b (2\mu - \mu)} \right)}_{:=\zeta_1} |\Delta\xi|^2,
\end{align*}
where we use $2\mu - c > 2\mu -\mu$ by the definition of $c$. Finally, since $ \alpha  \leq  {4\bar V(0_+)}/{\min_i M_i}$ as previously stated, we obtain
\begin{align*}
    \alpha \leq  \frac{4}{\min_i M_i} \zeta_1 |\Delta \xi|^2 := \alpha^* |\Delta \xi|^2.
\end{align*}
Note that since $\varphi_1(k)$ and $\varphi_2(k)$ are strictly decreasing functions of $k$ that approach zero as $k \to \infty$, we conclude that $\alpha^*$ also strictly decreases with $k$ and $\alpha^* \to 0$ as $k\to\infty$.

\subsection{Constraints on $|\Delta \xi|$}
Next, we show that the initial state requirement $(\omega(0), \theta(0)) \in \mathcal{X}$ translates into an upper bound on $|\Delta\xi|$. Recall that the two conditions for membership in $\mathcal{X}$ are:
\begin{enumerate}
    \item $\bar V(0_+) \leq V_c$, where $V_c>0$ is given in \eqref{eq:Vc def}. 
    \item $\frac{|\mathbb{1}_N^T M \omega(0)|}{N M_b} \leq \frac{|\xi_b(0_+)|}{\mu M_b } + \frac{k \lambda_2^L \cos(2\rho)}{8 L (\max_i M_i)}$.
\end{enumerate}
For the first condition, we use the bound on $\bar V(0_+)$ derived above, which requires
\begin{align*}
    \bar V(0_+) \leq \zeta_1|\Delta \xi|^2 \leq V_c.
\end{align*}
For the second condition, the left-hand side equals $|\omega_s(0)|$ by the definition of $\omega(0)$. This is further bounded by $|\xi_b(0_-)|/(\mu M_b)$ due to the mean-value theorem and Assumption~\ref{assump1}. Thus, the condition is satisfied if
\[
\frac{|\xi_b(0_-)|}{\mu M_b} \leq \frac{|\xi_b(0_+)|}{\mu M_b } + \frac{k \lambda_2^L \cos(2\rho)}{8 L (\max_i M_i)}.
\]
Since $|\xi_b(0_-)| -  |\xi_b(0_+)| \leq {|\Delta\xi|}/{\sqrt{N}} $, it is sufficient to impose
\[
\frac{|\Delta\xi|}{\sqrt{N}} \leq \frac{k \lambda_2^L \cos(2\rho)\mu M_b}{8 L (\max_i M_i)} := \zeta_2.
\]
In summary, for the initial state to be in $\mathcal{X}$, both conditions are guaranteed if $|\Delta\xi| \leq \bar{\Delta}$ with
\begin{align*}
    \bar \Delta:= \min\left\{ \sqrt{\frac{V_c}{\zeta_1}}, \sqrt{N}\zeta_2 \right\}.
\end{align*}
In such cases, the conclusion \eqref{eq: final conclusion in proof nonlinear} holds with the newly specified constant $\alpha^*|\Delta\xi|^2$. This completes the proof.

\end{document}